\newtheorem{theo}{Theorem}
\newtheorem{prop}{Proposition}
\newtheorem{lemma}{Lemma}
\theoremstyle{definition}
\newtheorem{defi}{Definition}
\theoremstyle{definition}
\theoremstyle{definition}
\theoremstyle{definition}
\theoremstyle{definition}
\newtheorem{conv}{Convention}
\theoremstyle{definition}
\newenvironment{pfclaim}{\begin{trivlist}\item[]{\sc Proof of
Claim}}{\hfill {\mbox{$\blacktriangleleft$}}\end{trivlist}}
\newtheorem{claim2}{\sc Claim}
\newenvironment{claim}{\begin{claim2}\rm}{\end{claim2}\rm}
\newcommand{\psf}{\mathcal{P}}
\newcommand{\at}[1]{\mathsf{#1}\!:\!}
\newcommand{\ats}[2]{\mathsf{#1}_{#2}\!:\!}
\newcommand{\bfi}{\mathsf{i}}
\newcommand{\bfj}{\mathsf{j}}
\newcommand{\bfk}{\mathsf{k}}
\newcommand{\plrsat}{{\textbf{Fal}}}
\newcommand{\plrval}{{\textbf{Ver}}}
\newcommand{\tset}[1]{[\![#1]\!]}
\newcommand{\sysinf}{{\textbf{Inf}}}
\newcommand{\sysann}{{\textbf{Saf}}}
\newcommand{\napprox}{\not\approx}
\newcommand{\mpt}{^\varepsilon}
\newcommand{\an}[1]{^{\mathsf{#1}}}
\newcommand{\ans}[2]{^{\mathsf{#1}_{#2}}}
\newcommand{\crl}[1]{\mathsf{#1} \vdash}
\newcommand{\crm}{\varepsilon \vdash}
\newcommand{\sfa}{\mathsf{a}}
\newcommand{\sfb}{\mathsf{b}}
\newcommand{\sfc}{\mathsf{c}}
\newcommand{\sfd}{\mathsf{d}}
\newcommand{\sfx}{\mathsf{x}}
\newcommand{\sfy}{\mathsf{y}}
\newcommand{\sfz}{\mathsf{z}}
\newcommand{\sitail}[1]{\Uparrow\!\!(\sigma,#1)}
\newcommand{\rst}[1]{\vert_{#1}}
\newcommand{\model}{\mathcal{M}}
\newcommand{\uls}[1]{\underline{#1}}
\title{A circular proof system for the hybrid $\mu$-calculus}
\author{Sebastian Enqvist \\ Department of Philosophy,
Stockholm University}
\begin{document}
\maketitle

\begin{abstract}We present a circular and cut-free proof system for the hybrid $\mu$-calculus and prove its soundness and completeness. The system uses names for fixpoint unfoldings, like the circular proof system for the $\mu$-calculus previously developed by Stirling.
\end{abstract}

\section{Introduction}

Circular and non-wellfounded proofs are a powerful method for reasoning with fixpoints, and have been considered in a number of contexts  \cite{sant:calc02, dam:mu02, dam:struc03, broth:comp07, broth:sequ10, sham:circ14}. For the modal $\mu$-calculus, a circular proof system with names for keeping track of fixpoint unfoldings was developed by Stirling \cite{stir:tabl14}, building on work by Jungteerapanich \cite{jung:tabl10} and bearing similarities with earlier systems using variables for ordinal approximations \cite{dam:mu02}. Recently Stirling's system has been simplified and used by Afshari and Leigh to give a cut-free complete sequent system for the modal $\mu$-calculus \cite{afsh:cut17}. This provides a novel completeness proof for Kozen's axiomatization \cite{koze:resu83} that avoids the intricate detour via disjunctive normal forms in Walukiewicz's proof \cite{walu:comp01}. 

There are two parallell motivations to continue research into circular proof systems for fixpoint logics. First, they are interesting in their own right from a proof theoretic perspective, providing tools for deeper proof-theoretic analysis of fixpoint logics. Second, they  provide a promising framework for proving novel  completeness results and providing proof systems for fixpoint logics where a complete axiomatization is currently lacking. This is witnessed by recent work where a circular proof system was developed for Parikh's dynamic logic of games \cite{enqv:comp19}, and used to settle the open problem of completeness of Parikh's original axiom system \cite{pari:logi85}.
 
The present work is an attempt to take a step towards exploring the use of circular proofs to provide complete finitary proof systems for richer \emph{extensions} of the modal $\mu$-calculus. A number of such extensions have been presented in the literature, including the two-way or ``full'' $\mu$-calculus \cite{vard:reas98}, hybrid $\mu$-calculus \cite{satt:hybr01} and guarded fixpoint logic \cite{grad:guar99}. In many cases such extensions remain decidable, without any increase in complexity. However, complete proof systems mostly appear to be lacking. Some work in this area does exist:  a generic completeness result for coalgebraic versions of the $\mu$-calculus (including extensions like the graded $\mu$-calculus) was presented in \cite{enqv:comp19a}, and an infinitary proof system for the two-way $\mu$-calculus was proved complete in \cite{afsh:infi19}.  Our hope is that circular proofs can be developed further as a method to provide complete proof systems for such expressive extensions of the $\mu$-calculus. 

As a proof of concept, we shall develop a cut-free Stirling-style circular proof system for a version of the hybrid $\mu$-calculus. The language we consider is a relatively gentle, but still interesting, extension of the modal $\mu$-calculus. It adds two features of hybrid logic to the language: \emph{nominals}, which are used to name points in a model, and \emph{satisfaction operators} that describe what is true at a named point in a model. This logic was studied by Tamura in \cite{tamu:smal13}, where it was shown that it has the finite model property. This is in contrast with the two-way $\mu$-calculus, and consequently with Sattler and Vardi's original version of the hybrid $\mu$-calculus which includes backward modalities. We hope that the techniques developed here can be extended to provide complete proof systems for the hybrid $\mu$-calculus with backward modalities as well, and perhaps even eventually for guarded fixpoint logic. But already the introduction of nominals and satisfaction operators presents some non-trivial challenges, and addressing these difficulties gives some guidelines on how to deal with proof theory for fixpoint logics that lack the tree model property. In a manner of speaking, we are continuing here along Sattler and Vardi's line of working with logics that lack the tree model property ``as if they had the tree model property'' \cite{satt:hybr01}, but taking the idea in a proof-theoretic direction.

\section{Preliminaries}

\subsection{The hybrid $\mu$-calculus}
 
The hybrid $\mu$-calculus was initially introduced by Sattler and Vardi in \cite{satt:hybr01}. Their version of the language included a global modality and converse modalities. Here, we shall be considering the weaker version of the hybrid $\mu$-calculus that was studied by Tamura in \cite{tamu:smal13}. For ease of notation we consider the language with only a single box and diamond, but all the results and proofs presented here easily extend to a multi-modal version of the language. 

The language $\mathcal{L}$ of the hybrid $\mu$-calculus is given by the following grammar:
$$\varphi := p \mid \neg p \mid \bfi  \mid \neg \bfi \mid \varphi \vee \varphi \mid \varphi \wedge \varphi  \mid \Diamond \varphi \mid \Box \varphi \mid \at{i} \varphi \mid \mu x. \varphi \mid \nu x. \varphi $$  

Here, $p$ and $x$ are members of a fixed countably infinite supply $\mathsf{Prop}$ of propositional variables, and $\bfi$ comes from a fixed countably infinite supply $\mathsf{Nom}$ of nominals.  For $\eta x. \varphi$ with $\eta \in \{\mu,\nu\}$, we impose the usual constraint that no occurrence of $x$ in $\varphi$ is in the scope of a negation, and we also require that each occurrence of $x$ in $\varphi$ is within the scope of some modality ($\Box$ or  $\Diamond$). This latter extra constraint means that we restrict attention to \emph{guarded} formulas. This is a fairly common assumption,  and it is well known that removing the constraint of guardedness does not increase the expressive power of the language. It is not an entirely innocent assumption however, since putting a formula in its guarded normal form may cause an exponential blow-up in the size of a formula \cite{brus:guar15}.   Note also that the language is presented in negation normal form. It is routine to verify, given the semantics presented below, that the language is semantically closed under negation, and furthermore there is a simple effective procedure for converting formulas in the extended language with explicit negation of all formulas into formulas in negation normal form.

Free and bound variables of a formula are defined in the usual manner. A \emph{literal} is a formula of the form $p$ or $\neg p$ where $p \in \mathsf{Prop}$, or of the form $\bfi$ or $\neg \bfi$ where $\bfi \in \mathsf{Nom}$. We introduce the following abbreviations:
$$\bfi \approx \bfj := \at{i} \bfj \quad \quad \quad \bfi \napprox \bfj := \at{i}\neg \bfj$$
These formulas express identity and non-identity, respectively, of the values assigned to the nominals $\bfi,\bfj$ in a model. 

\begin{defi}
Let $\varphi$ be any formula in $\mathcal{L}$ and let $x,y \in \mathsf{Prop}$ be bound variables in $\varphi$. We say that $y$ is \emph{dependent} on $x$, written $x <_\varphi y$, if there is a subformula of $\varphi$ of the form $\eta y.\psi$ in which there is a free occurrence of $x$.  We denote the reflexive closure of $<_\varphi$ by $\leq_\varphi$. 
\end{defi}

\begin{defi}
We say that a formula $\varphi$ is \emph{locally well-named} if $<_\varphi$ is irreflexive, no variable occurs both free and bound in $\varphi$, and no variable is bound by both $\mu$ and $\nu$ in $\varphi$.  
\end{defi} 
Note that every formula is equivalent to a locally well-named one up to renaming of bound variables ($\alpha$-equivalence). 
\begin{prop}[Afshari \& Leigh -17]
If $\eta x. \varphi(x)$ is locally well-named then so is $\varphi(\eta x. \varphi)$.
\end{prop}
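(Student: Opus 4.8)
Write $\psi := \eta x.\varphi$ for the given formula and $\chi := \varphi[\psi/x]$ for its unfolding $\varphi(\eta x.\varphi)$, obtained by replacing every free occurrence of $x$ in $\varphi$ by $\psi$. The plan is to verify the three clauses of local well-namedness for $\chi$ directly, in each case tracing the bound variables of $\chi$, their binders, and their free occurrences back to $\psi$, and then invoking the corresponding property of $\psi$.

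The first step is a bookkeeping lemma about the substitution. Since $\psi$ is locally well-named, its free and bound variables are disjoint; as $\mathsf{BV}(\varphi)\subseteq\mathsf{BV}(\psi)$ and $\mathsf{FV}(\psi)=\mathsf{FV}(\varphi)\setminus\{x\}$, no variable bound in $\varphi$ (nor $x$ itself) occurs free in $\psi$. Hence $[\psi/x]$ captures no variable and introduces no new free occurrence of any variable lying in $\mathsf{BV}(\varphi)\cup\{x\}$. From this I read off $\mathsf{FV}(\chi)=\mathsf{FV}(\psi)$ and $\mathsf{BV}(\chi)=\mathsf{BV}(\psi)$.

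With this in hand, two of the three clauses are immediate. For the clause that no variable is both free and bound, I use $\mathsf{FV}(\chi)=\mathsf{FV}(\psi)$ and $\mathsf{BV}(\chi)=\mathsf{BV}(\psi)$ together with the disjointness of these two sets in $\psi$. For the clause that no variable is bound by both $\mu$ and $\nu$, I observe that every binder occurring in $\chi$ is a copy of a binder of $\psi$: the binder nodes of the outer $\varphi$-shell are exactly those of $\varphi$, which are binders of $\psi$, while each substituted copy of $\psi$ contributes only binders of $\psi$. Thus the $\mu$/$\nu$-type attached to any bound variable of $\chi$ is inherited from $\psi$, where it is unique.

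The remaining, and main, obstacle is irreflexivity of $<_\chi$. Suppose towards a contradiction that $z<_\chi z$, witnessed by a subformula $\eta_z z.\rho$ of $\chi$ containing a free occurrence of $z$, and distinguish where the binding node $\eta_z z$ sits. If it lies inside one of the substituted copies of $\psi$, then $\eta_z z.\rho$ is a copy of a subformula of $\psi$ and the free occurrence of $z$ witnesses $z<_\psi z$, contradicting irreflexivity of $<_\psi$. Otherwise $\eta_z z$ is a node of the $\varphi$-shell, so $z\in\mathsf{BV}(\varphi)$; by the bookkeeping lemma the substitution introduces no free occurrence of $z$, so every free occurrence of $z$ in $\eta_z z.\rho$ already occurs freely in the corresponding subformula $\eta_z z.\rho_0$ of $\varphi$ (with $\rho=\rho_0[\psi/x]$). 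This yields $z<_\varphi z$; but every witness to $<_\varphi$ is also a subformula of $\psi$ on which freeness is intrinsic, so $<_\varphi\,\subseteq\,<_\psi$ and is therefore irreflexive, a contradiction. The one point requiring genuine care is the bookkeeping lemma of the first step, on which both the free-occurrence and binder-counting arguments rest; once it is established the three clauses follow uniformly.
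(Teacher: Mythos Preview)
Your argument is correct. The paper itself does not supply a proof of this proposition: it is stated with attribution to Afshari and Leigh (2017) and left unproved, so there is no in-paper argument to compare against. Your direct verification of the three clauses of local well-namedness, driven by the bookkeeping observation that the substitution $[\psi/x]$ is capture-free because $\mathsf{BV}(\varphi)\cup\{x\}=\mathsf{BV}(\psi)$ is disjoint from $\mathsf{FV}(\psi)$, is exactly the routine check one would expect, and the case split on whether a binder of $\chi$ lies in the outer $\varphi$-shell or inside a substituted copy of $\psi$ is the natural way to organise it.

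One small imprecision: your claim $\mathsf{BV}(\chi)=\mathsf{BV}(\psi)$ fails in the degenerate case where $x$ has no free occurrence in $\varphi$, since then $\chi=\varphi$ and $x\notin\mathsf{BV}(\chi)$ while $x\in\mathsf{BV}(\psi)$. Only the inclusion $\mathsf{BV}(\chi)\subseteq\mathsf{BV}(\psi)$ holds in general, but that inclusion is all your later steps actually need, so nothing breaks.
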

\begin{conv}
We shall assume throughout the paper that all formulas are locally well-named. Given a locally well-named formula we refer to a bound variable $x$ as a $\mu$-variable if it is bound (only) by $\mu$ in $\varphi$, and a $\nu$-variable if it is bound (only) by $\nu$.   
\end{conv}
Semantics of the hybrid $\mu$-calculus is a simple extension of the usual Kripke semantics for the modal $\mu$-calculus.
\begin{defi}
A \emph{Kripke model} is a tuple $\model = (W,R,V,A)$ where $W$ is a non-empty set members of which will be referred to as \emph{points}, $R \subseteq W \times W$ is the \emph{accessibility relation} over $W$, $V : \mathsf{Prop} \to \psf(W)$ is a \emph{valuation} of the propositional variables and $A : \mathsf{Nom} \to W$ is an \emph{assignment} of a value in $W$ to each nominal. 
\end{defi}
Given a Kripke model $\model = (W,R,V,A)$, the interpretation $\tset{\varphi}_\model$ of a formula $\varphi$ is defined by the usual recursive clauses for boolean connectives and modalities. Semantics of least fixpoint operators is given according to the Knaster-Tarski Theorem \cite{knas:theo28, tars:latt55} as:
$$\tset{\mu x. \varphi(x)}_\model := \bigcap\{Z \subseteq W \mid \tset{\varphi}_{\model[Z/x]} \subseteq Z\},$$
 where $\model[Z/x]$ is like $\model$ except that its valuation maps the variable $x$ to $Z$. For greatest fixpoint operators we have the dual definition:
$$\tset{\nu x. \varphi(x)}_\model := \bigcup\{Z \subseteq W \mid Z \subseteq \tset{\varphi}_{\model[Z/x]}\}.$$
For nominals and satisfaction operators, we have the following clauses: $$\tset{\bfi}_\model = \{A(\bfi)\},$$ and $$\tset{\at{i}\varphi}_\model = \{w \in W \mid A(\bfi) \in \tset{\varphi}\}.$$ In other words, $\tset{\at{i} \varphi}_\model = W$ if $A(\bfi) \in \tset{\varphi}$, and $\tset{\at{i} \varphi}_\model = \emptyset$ otherwise. Given a formula $\varphi$ and a pointed Kripke model $(\model,w)$ (a model with a distinguished point), we write $\model, w \Vdash \varphi$ to say that $w \in \tset{\varphi}_\model$. 

This semantics may be referred to as the \emph{denotational} semantics of the $\mu$-calculus, or as (a special case of) the \emph{algebraic} semantics. The $\mu$-calculus also has an \emph{operational} semantics in the form of a game semantics, which is often easier to work with and neatly captures the intuitive meaning of least and greatest fixpoints (i.e. ``finite looping'' vs ``infinite looping''). 
\begin{defi}
Let $\Gamma$ be a set of $\mathcal{L}$-formulas. We say that $\Gamma$ is \emph{Fischer-Ladner closed} if the following conditions hold:
\begin{itemize}
\item If $\varphi O \psi \in \Gamma$  where $O \in \{\wedge, \vee \}$ then $\{\varphi,\psi\} \subseteq \Gamma$.
\item If $O\varphi \in \Gamma$ where $O \in \{\Diamond, \Box\} $ then $\varphi \in \Gamma$.
\item If $\at{i} \varphi \in \Gamma$ where $\bfi \in \mathsf{Nom}$ then $\varphi \in \Gamma$.
\item If $\eta x.\varphi(x) \in \Gamma$ where $\eta \in \{\mu, \nu\}$ then $\varphi(\eta x.\varphi(x))\in \Gamma$.   
\end{itemize}
The \emph{Fischer-Ladner closure} $c(\varphi)$ of a formula $\varphi$ is the smallest Fischer-Ladner closed set of formulas containing $\varphi$. 
\end{defi}
The following result is a well known fact:
\begin{prop}
The Fischer-Ladner closure of any formula is finite, and its size is linear in the length of the formula.  
\end{prop}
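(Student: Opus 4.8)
The plan is to bound $c(\varphi)$ by the number of syntactic subformulas of $\varphi$, which is itself linear in the length of $\varphi$ (one subformula per position in the syntax tree). The point is that three of the four Fischer--Ladner rules --- the boolean, modal, and satisfaction-operator rules --- pass from a formula to a strictly smaller \emph{syntactic} subformula, so they cannot generate anything new beyond subformulas. The only source of potential blow-up is the fixpoint-unfolding rule $\eta x.\chi \mapsto \chi(\eta x.\chi)$, which substitutes a whole fixpoint formula for a bound variable; the entire argument is really about showing that this rule, too, stays within a linearly-sized pool of formulas. I would do this by exhibiting a function from the subformulas of $\varphi$ onto a set containing $c(\varphi)$.

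Here local well-namedness is used in an essential way. Since $\varphi$ is locally well-named, every variable $x$ bound in $\varphi$ is bound by a unique subformula $D_x := \eta x.\chi_x \sqsubseteq \varphi$, and the dependency relation $<_\varphi$ is irreflexive; because $x <_\varphi y$ forces the binder of $x$ to strictly contain the binder of $y$, the relation is acyclic and hence well-founded. Exploiting this, I would define by well-founded recursion along $<_\varphi$, for each subformula $\psi \sqsubseteq \varphi$, a \emph{canonical expansion} $\psi^\sharp$ obtained by simultaneously replacing every variable $x$ that occurs free in $\psi$ but is bound in $\varphi$ with the expanded fixpoint formula $D_x^\sharp$. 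The recursion is legitimate because the free-but-bound variables of $D_x$ are exactly the $z$ with $z <_\varphi x$, all strictly below $x$; and by construction $\psi^\sharp$ contains no free variable that is bound in $\varphi$. Note that $\varphi^\sharp = \varphi$, since by local well-namedness no variable of $\varphi$ is both free and bound.

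The core claim is $c(\varphi) \subseteq \{\psi^\sharp \mid \psi \sqsubseteq \varphi\}$, proved by checking that the right-hand side contains $\varphi$ and is closed under the four rules. The boolean, modal, and satisfaction cases are immediate, since $(\psi_1\, O\, \psi_2)^\sharp = \psi_1^\sharp\, O\, \psi_2^\sharp$ and analogously for the unary operators, so the premises of those rules are again canonical expansions of (smaller) subformulas. The fixpoint case is the crux: an element of the set to which the unfolding rule applies must be of the form $D_x^\sharp = (\eta x.\chi_x)^\sharp = \eta x.\theta$, where $\theta = \chi_x[D_z^\sharp/z]$ ranges over $z <_\varphi x$; its unfolding is $\theta[D_x^\sharp/x]$, and since the $D_z^\sharp$ contain no free bound-in-$\varphi$ variable this equals $\chi_x$ with \emph{every} free-but-bound variable (now including $x$ itself) replaced by its expansion, i.e. exactly $\chi_x^\sharp$. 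As $\chi_x \sqsubseteq \varphi$, the unfolding again lies in the right-hand side.

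Because $\psi \mapsto \psi^\sharp$ is a function defined on the subformulas of $\varphi$, this yields $\card{c(\varphi)} \le \card{\{\psi \mid \psi \sqsubseteq \varphi\}} \le |\varphi|$, giving both finiteness and the linear bound at once. The hard part, as flagged, is the fixpoint case --- specifically, verifying that the unfolding of a canonical expansion is itself the canonical expansion of a subformula. Local well-namedness is precisely what makes this work: it guarantees the unique defining subformula $D_x$ for each bound variable and the well-foundedness of $<_\varphi$ needed to make $(\cdot)^\sharp$ well-defined, so that the substitution performed by unfolding coincides exactly with the one built into the expansion.
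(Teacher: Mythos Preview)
The paper does not prove this proposition; it is stated as a ``well known fact'' and left without argument. Your approach is the standard one and the overall strategy is sound, but there is a genuine gap in how you invoke local well-namedness.

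You assert that local well-namedness gives each bound variable $x$ a \emph{unique} binding subformula $D_x$, and that $<_\varphi$ is acyclic (hence well-founded). Neither follows from the paper's definition, which only demands that $<_\varphi$ be irreflexive, that no variable be both free and bound, and that no variable be bound by both $\mu$ and $\nu$. The formula
\[
\varphi \;=\; \bigl(\mu x.\,\Box(\mu y.\,\Diamond x)\bigr)\;\wedge\;\bigl(\mu y.\,\Box(\mu x.\,\Diamond y)\bigr)
\]
is locally well-named in the paper's sense, yet $x$ and $y$ are each bound twice, $D_x$ and $D_y$ are not well-defined, and $x <_\varphi y <_\varphi x$ is a genuine $2$-cycle in $<_\varphi$. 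Your recursion defining $(\cdot)^\sharp$ therefore does not terminate on such inputs, and the argument that ``$x <_\varphi y$ forces the binder of $x$ to strictly contain the binder of $y$'' presupposes exactly the uniqueness that fails.

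The fix is routine: carry out the same construction at the level of subformula \emph{occurrences} (nodes of the syntax tree of $\varphi$) rather than subformulas. Every occurrence of a bound variable has a unique binding occurrence above it, and one defines $\psi_p^\sharp$ for a position $p$ by induction on the depth of $p$, extending the substitution with $x \mapsto (\eta x.\chi)^\sharp$ each time one passes through a binder $\eta x.\chi$. The fixpoint case of the closure check then goes through exactly as you wrote it, and the bound $\lvert c(\varphi)\rvert \le \text{(number of positions in }\varphi\text{)} = O(\lvert\varphi\rvert)$ follows. This version needs no well-namedness hypothesis at all.
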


Throughout the paper we assume familiarity with basic notions concerning board games and parity games. Given a Kripke model $\model = (W,R,V,A)$, the \emph{evaluation game} for a formula $\rho \in \mathcal{L}$ in the model $\model$ is a two-player board game between players $\plrval,\plrsat$, the set of positions of which is $W \times c(\rho)$, with player assignments and moves defined as follows:
\begin{itemize}
\item For a position of the form $(w,l)$ where $l$ is a literal, the set of availabe moves is $\emptyset$. The position is assigned to $\plrsat$ if $\model,w \Vdash l$ and is assigned to $\plrval$ otherwise. 
\item For a position of the form  $(w,\varphi O \psi)$ where $ O \in \{\wedge, \vee \}$, the available moves are $(w,\varphi)$ and $(w,\psi)$. The position is assigned to $\plrval $ if $O = \vee$ and is assigned to $\plrsat$ if $O = \wedge$.  
\item For a position of the form $(w,O\varphi)$ where $O \in \{\Diamond,\Box\}$, the set of available moves is $\{v \in W \mid w R v\}$. The position is assigned to $\plrval$ if $O = \Diamond$ and is assigned to $\plrsat$ if $O = \Box$. 
\item For a position of the form $(w,\at{i}\varphi)$, the unique avaliable move is $(A(\bfi),\varphi)$. The player assignment is arbitrary in this case since there is only one move, but as a convention we assign such positions to player $\plrval$. 
\item For a position of the form $(w,\eta x.\varphi(x))$, the unique available move is $(w,\varphi(\eta x.\varphi))$. By convention we assign such positions to \plrval. 
\end{itemize}
Partial plays, full plays and strategies for players are defined as usual. Note that if a full play is finite, then the player to which the last position is assigned must be ``stuck'', i.e. the set of available moves is empty. So the winning condition of finite full plays is defined by declaring the player who got stuck to be the loser of the play. For infinite plays $(w_1,\varphi_1)(w_2,\varphi_2)(w_3,\varphi_3)...$, say that a fixpoint variable $x$ is \emph{unfolded} at the index $i$ if $\varphi_i$ is of the form $\eta x. \psi(x)$.
\begin{prop}
For any (locally well-named) formula $\rho$ and any infinite play $\pi$ in the evaluation game in $\model$, there is a unique $<_\rho$-minimal variable $x$ that is unfolded infinitely many times on $\pi$. 
\end{prop}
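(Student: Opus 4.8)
The plan is to analyze the set $U$ of variables that are unfolded infinitely often along $\pi$, showing first that $U$ is non-empty and then that it has a unique $<_\rho$-minimal element. For non-emptiness, I would argue that the play can be infinite only by unfolding fixpoints infinitely often. Assign to each formula its syntactic size. Every non-unfolding move---a choice of conjunct or disjunct, a modal step $\Box\varphi \mapsto \varphi$ or $\Diamond\varphi \mapsto \varphi$, or a satisfaction step $\at{i}\varphi \mapsto \varphi$---passes to a proper subformula and hence strictly decreases this measure, while the play gets stuck at a literal. So no infinite suffix of $\pi$ consists of non-unfolding moves alone, whence unfoldings occur infinitely often; since $c(\rho)$ is finite, only finitely many variables occur, and by the pigeonhole principle at least one is unfolded infinitely often, i.e. $U \neq \emptyset$.

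For the minimal element I would first record the structural fact that $<_\rho$ is acyclic and in fact induces a forest: one has $x <_\rho y$ precisely when the binder of $x$ is an ancestor of the binder of $y$ in the syntax tree of $\rho$ with $x$ free in $y$'s binder, and since the ancestors of a node are linearly ordered, the set of $<_\rho$-predecessors of any variable is a chain. In particular $<_\rho$ has no cycles, so the finite non-empty set $U$ has at least one $<_\rho$-minimal element $x_0$, and it remains to show $x_0$ is unique.

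The heart of the argument is a scope-confinement lemma. I would make precise the correspondence between formulas in $c(\rho)$ and subformula occurrences of $\rho$, so that each game move corresponds to navigating the syntax tree (descending along boolean, modal and satisfaction moves, and jumping from an occurrence of a bound variable $z$ back to its binder when $z$ is unfolded). After $x_0$ is unfolded, the current formula is a copy of the body of $x_0$'s binder with copies of that binder substituted for the free occurrences of $x_0$; call such positions $x_0$-internal. The lemma states that the play stays $x_0$-internal until it unfolds some variable $w$ that is free in the body of $x_0$'s binder, and every such $w$ satisfies $w <_\rho x_0$. Now fix a stage $N$ beyond which only variables in $U$ are unfolded, and suppose toward a contradiction that $U$ had a second $<_\rho$-minimal element $y_0 \neq x_0$, necessarily $<_\rho$-incomparable with $x_0$. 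Since $y_0$ is incomparable with $x_0$, its binder lies outside the scope of $x_0$ and $y_0$ is not free in $x_0$'s body; by the confinement lemma, reaching and unfolding $y_0$ after an unfolding of $x_0$ requires first unfolding some $w <_\rho x_0$. But minimality of $x_0$ in $U$ forces every such $w$ to lie outside $U$, so $w$ is never unfolded after $N$. Hence after $N$ the play can never escape the scope of $x_0$ to reach $y_0$, contradicting $y_0 \in U$.

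I expect the main obstacle to be the scope-confinement lemma: stating and proving precisely that after an unfolding of $x_0$ the play is trapped inside the scope of $x_0$ until a $<_\rho$-smaller variable is unfolded, and that $y_0$'s binder is unreachable without such an escape. This requires the bookkeeping that tracks each position back to a subformula occurrence of $\rho$ and verifies that game moves only descend in the syntax tree or jump to a binder, so that the only way to leave the scope of $x_0$ is by unfolding a variable free in its body. Once this correspondence is in place, the remaining steps---non-emptiness, the forest structure, and the uniqueness argument above---are routine.
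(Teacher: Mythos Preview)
The paper does not prove this proposition; it is stated as a standard fact from the $\mu$-calculus literature and used immediately to define the winner of an infinite play. Your overall plan is the right one, and you are correct that the scope-confinement lemma is the heart of the matter. One remark in passing: your forest claim and the phrase ``the binder of $x$'' presuppose that each bound variable of $\rho$ has a \emph{unique} binder. This is the intended situation for the root formula (achievable by $\alpha$-conversion), but note that the paper's definition of \emph{locally well-named} does not literally require it.

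There is, however, a real gap in your uniqueness step. You infer that if $y_0$ is $<_\rho$-incomparable with $x_0$ then ``its binder lies outside the scope of $x_0$.'' This is false: take $\rho = \mu x.\,\Box(x \wedge \mu y.\,\Box y)$. Here $x$ and $y$ are $<_\rho$-incomparable (neither is free in the other's binding subformula), yet the binder of $y$ sits strictly inside the scope of $x$. Your confinement lemma for $x_0$ therefore does not rule out unfolding $y_0$ without first unfolding some $w <_\rho x_0$. What saves the day is the confinement lemma applied to $y_0$ rather than $x_0$: if $B_{y_0}$ lies below $B_{x_0}$, then after an unfolding of $y_0$ past stage $N$ the play is trapped below $B_{y_0}$ until some $w <_\rho y_0$ is unfolded, which is impossible since $y_0$ is minimal in $U$; but below $B_{y_0}$ there are no occurrences of $x_0$ (any such occurrence would witness $x_0 <_\rho y_0$), so $x_0$ is never unfolded again, contradicting $x_0 \in U$. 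Thus you must split into two cases according to whether $B_{x_0}$ and $B_{y_0}$ lie on a common root-to-leaf path in the syntax tree, and apply confinement to the \emph{inner} one. Equivalently, and more elegantly, one shows directly that $U$ is a $<_\rho$-chain, after which existence and uniqueness of the minimum are immediate and the separate forest claim becomes unnecessary.
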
  
We shall often refer to the $<_\rho$-minimal variable unfolded infinitely often on $\pi$ as the \emph{highest ranking} variable that is unfolded infinitely often. We can now define the winning condition of infinite plays: the winner is \plrval{} if the highest ranking variable that gets unfolded infinitely often is a $\nu$-variable (relative to $\rho$), and the winner is $\plrsat$ otherwise. 

Strategies and winning strategies of players are defined as usual. A strategy is called \emph{positional} if it only depends on the last position of a play, i.e. it can be described as a choice function from positions to available moves. Since the evaluation game is  a parity game, and parity games have positional determinacy \cite{emer:tree91, ziel:infi98}, we have:
\begin{prop}
\label{p:parity-games}
The evaluation game of any formula in a model is determinate, and the winning player at any given position has a positional winning strategy.
\end{prop}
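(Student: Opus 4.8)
The plan is to present the evaluation game literally as a parity game and then invoke the positional determinacy theorem already cited above. The positions, player assignments, and moves are fixed by the definition, so the only real work is to manufacture a priority function $\Omega$ whose induced parity condition on infinite plays agrees, play by play, with the winning condition stated in the text in terms of the highest ranking infinitely-unfolded variable. Once that coincidence is established, the proposition follows immediately from the theorem on parity games.

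First I would assign to each bound fixpoint variable $x$ occurring in $c(\rho)$ a natural-number priority $\Omega(x)$ subject to two requirements: (i) $\Omega(x)$ is even if $x$ is a $\nu$-variable and odd if $x$ is a $\mu$-variable; and (ii) whenever $x <_\rho y$ we have $\Omega(x) > \Omega(y)$. Such an assignment exists because the bound variables of $\rho$ form a finite set and, by local well-namedness, $<_\rho$ is acyclic on this set (it is contained in the scope-containment order of the binders), hence extends to a linear order. One then processes the variables along such a linear extension, starting from the $<_\rho$-minimal (``outermost'', highest ranking) variables and handing out strictly decreasing priority values; at each step there is room to pick a value of the required parity, since arbitrary gaps between successive values are permitted. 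This is the one genuinely non-routine point, and I return to it below.

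Next I would lift $\Omega$ to the board: a position $(w,\eta x.\psi(x))$ at which $x$ is unfolded receives priority $\Omega(x)$, while every other position receives a fixed dummy priority chosen below all the $\Omega(x)$-values (e.g.\ by shifting all variable priorities upward by a constant). The crucial verification is that, on any infinite play $\pi$, the parity winner under $\Omega$ coincides with the winner defined in the text. By the preceding proposition, $\pi$ unfolds at least one variable infinitely often and has a unique $<_\rho$-minimal such variable $x^\ast$; by condition (ii) the positions unfolding $x^\ast$ carry strictly larger priority than those unfolding any other infinitely-often-unfolded variable, and by construction strictly larger than the dummy priority, so $\Omega(x^\ast)$ is exactly the largest priority occurring infinitely often on $\pi$. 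By condition (i) this value is even precisely when $x^\ast$ is a $\nu$-variable, i.e.\ precisely when the text declares $\plrval$ the winner. Thus the two conditions agree on infinite plays; and on finite plays both conventions declare the stuck player the loser, so the evaluation game genuinely is a parity game with priority function $\Omega$.

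Finally I would appeal to positional determinacy of parity games \cite{emer:tree91, ziel:infi98}. The subtlety worth flagging is that the board $W \times c(\rho)$ may be infinite when $W$ is; however $\Omega$ takes only finitely many values (at most one per bound variable of $\rho$, plus the dummy), so the version of the theorem for finitely many priorities over arbitrary arenas applies and delivers both determinacy and positional winning strategies. The main obstacle is therefore entirely concentrated in the joint realizability of conditions (i) and (ii) — that the alternation/dependency structure encoded by $<_\rho$ can be faithfully represented by a parity index respecting fixpoint type; the reduction of the winning condition and the invocation of the cited theorem are then routine.
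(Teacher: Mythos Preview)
Your proposal is correct and follows exactly the approach the paper takes: the paper simply asserts that the evaluation game is a parity game and cites positional determinacy \cite{emer:tree91, ziel:infi98}, without spelling out the priority assignment. You have merely made explicit the standard construction of the priority function and the verification that the induced parity condition matches the stated winning condition, which the paper leaves implicit.
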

As expected the operational semantics agrees with the denotational one:
\begin{prop}
Given a pointed Kripke model $(\model, w)$ and a formula $\rho$, we have $\model, w \Vdash \rho$ if and only if the position $(w,\rho)$ is winning for \plrval{} in the evaluation game. 
\end{prop}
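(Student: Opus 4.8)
The plan is to reduce the biconditional to a single implication and then handle that implication by the standard approximant-and-signature method, adapted to the hybrid operators. Since the evaluation game is a parity game, positional determinacy (Proposition \ref{p:parity-games}) tells us that exactly one of the two players has a winning strategy at $(w,\rho)$. Hence it suffices to prove the two implications ``$w \in \tset{\rho}_\model$ implies $\plrval$ wins at $(w,\rho)$'' and ``$w \notin \tset{\rho}_\model$ implies $\plrsat$ wins at $(w,\rho)$''; together with determinacy these yield the claimed equivalence. The second implication is the exact dual of the first under the evident duality of the language and the game (swapping $\mu \leftrightarrow \nu$, $\wedge \leftrightarrow \vee$, $\Box \leftrightarrow \Diamond$, complementing literals, and interchanging the two players, noting that the forced moves through fixpoint and satisfaction operators are insensitive to the player assignment), so I would prove only the first implication in full and invoke duality for the other.

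So assume $w \in \tset{\rho}_\model$; I want to construct a winning strategy for $\plrval$ from $(w,\rho)$. The strategy has two layers. The first layer is \emph{truth preservation}: $\plrval$ always moves so as to maintain the invariant that at every reached position $(v,\psi)$ we have $v \in \tset{\psi}_\model$. This is possible by the semantic clauses: at a true disjunction she picks a true disjunct, at a true $\Diamond$-position she moves to a successor in the relevant denotation, and at her forced moves through fixpoint operators, satisfaction operators and nominals the invariant is preserved automatically --- the cases $(v,\at{i}\varphi)$ and $(v,\bfi)$, $(v,\neg \bfi)$ being the new hybrid cases, which are checked directly against the clauses $\tset{\at{i}\varphi}_\model \in \{\emptyset, W\}$ and $\tset{\bfi}_\model = \{A(\bfi)\}$. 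Dually, every move available to $\plrsat$ from a true conjunction or a true $\Box$-position again leads to a true position, so the invariant survives $\plrsat$'s choices as well. Truth preservation already settles all finite plays: a finite play ends with some player stuck, and at a true position the stuck player is never $\plrval$ (a true literal position is assigned to $\plrsat$, and from any true position where it is $\plrval$'s turn she always has a legal move), so $\plrval$ wins every finite play.

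The second layer handles infinite plays via the parity condition, and this is where the fixpoint structure enters. By Knaster--Tarski and monotonicity, $\tset{\mu x.\varphi}_\model = \bigcup_{\alpha} \tset{\mu^\alpha x.\varphi}_\model$, where $\mu^\alpha x.\varphi$ denotes the ordinal approximants. To each true position I attach a \emph{signature}: a tuple of ordinals indexed by the $\mu$-variables of $\rho$, where the entry for a $\mu$-variable $x$ records the least ordinal $\alpha$ such that the current point lies in $\tset{\mu^\alpha x.\varphi}_\model$, with tuples compared lexicographically according to the dependency order $<_\rho$. $\plrval$ refines her truth-preserving strategy so that, whenever the highest-ranking currently active $\mu$-variable is unfolded, she chooses witnesses realizing a strictly smaller approximant ordinal for that variable. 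Along any infinite play consistent with this strategy, let $x$ be the unique $<_\rho$-minimal variable unfolded infinitely often, which exists by the proposition on highest-ranking variables. After finitely many steps no variable ranked above $x$ is unfolded again, so the dominating entries of the signature freeze; thereafter each unfolding of $x$ would strictly decrease the $x$-entry. If $x$ were a $\mu$-variable this would give an infinite strictly descending sequence of ordinals, which is impossible. Hence $x$ is a $\nu$-variable, and $\plrval$ wins the play.

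The main obstacle, and the only genuinely technical part, is the bookkeeping for the signatures in the presence of nested fixpoints: making precise how $\plrval$'s moves witness the approximant ordinals, verifying that the entry for the highest-ranking active $\mu$-variable strictly decreases at each of its unfoldings while the dominating entries do not increase between two consecutive such unfoldings, and that the lower entries may be reset harmlessly because they are dominated lexicographically. Everything specific to the hybrid setting, by contrast, is routine: nominals and satisfaction operators neither create nor destroy fixpoint loops, so they affect only the point component of positions and slot into the truth-preservation layer without disturbing the signature-descent argument.
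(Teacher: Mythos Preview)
Your approach is correct and is precisely the standard approximant/signature argument for adequacy of the game semantics. Note, however, that the paper does not give a proof of this proposition at all: it simply states the result as an expected fact, in line with the usual treatment of game semantics for the $\mu$-calculus as folklore (the hybrid operators are indeed harmless here, for the reasons you indicate). So there is nothing to compare against; what you have written would serve perfectly well as the omitted proof.

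One small imprecision worth tightening if you write this out fully: at a $\mu$-unfolding $(v,\mu x.\varphi)\to(v,\varphi(\mu x.\varphi))$ the move is forced, so $\plrval$ does not ``choose witnesses'' at that step. The signature decrease is really engineered earlier, at her $\vee$- and $\Diamond$-moves, where she selects a successor of minimal signature among the true ones; the unfolding step then \emph{records} a strict drop in the $x$-entry because $v\in\tset{\mu^{\alpha}x.\varphi}$ with $\alpha$ a successor forces $v\in\tset{\varphi}_{\model[\tset{\mu^{\alpha-1}x.\varphi}/x]}$. Your concluding paragraph already flags this bookkeeping as the technical core, so you clearly have the right picture; just make sure the phrasing does not suggest a choice where there is none.
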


\subsection{Trees and tree languages}

We will need some basic concepts concerning trees and tree languages. Given two words $\sfa,\sfb$ over some given alphabet we will use the notation $\sfa \cdot \sfb$ for the operation of concatenation. 
\begin{defi}
A \emph{tree} $T$ is a subset of $\mathbb{N}^*$, i.e. a non-empty set of words over the set of natural numbers, that is closed under prefixes and such that whenever $u \cdot m \in T$ and $k < m$, $u \cdot k \in T$. The empty word is called the \emph{root} of the tree. A \emph{ranked alphabet} is a set $\Sigma$ together with a map $\mathsf{ar} : \Sigma \to \omega$ assigning an arity to each member of $\Sigma$. Given a ranked alphabet $\Sigma$, a \emph{$\Sigma$-labelled tree} is a tree $T$ together with a mapping $l : T \to \Sigma$, satisfying the following constraint: if $u \in T$, then the children of $u$ in $T$ are: $$u \cdot 0, ..., u\cdot (\mathsf{ar}(l(\vec{n})) - 1).$$ Finally, a \emph{tree language} over some alphabet $\Sigma$ is a set of $\Sigma$-labelled trees. 
\end{defi}

\begin{defi}
Let $T$ be a $\Sigma$-labelled tree where $\Sigma$ is a ranked alphabet and $l$ is the labelling function. A $\Sigma$-labelled tree $T'$ with labelling function $l'$ is said to be a \emph{subtree} of $T$ if there is some $u \in T$ such that:
\begin{itemize}
\item $T' = \{v \in \mathbb{N}^* \mid u \cdot v \in T\}$
\item $l'(v) = l(u \cdot v)$ for all $v \in T'$.
\end{itemize} 
Given $u \in T$ we call $T'$ the (generated) \emph{subtree rooted at} $u$, denoted $T\rst{u}$. A labelled tree $T$ is said to be \emph{regular} if it has only finitely many subtrees. 
\end{defi}

The \emph{monadic second-order language} for $\Sigma$-labelled trees has the signature consisting of unary predicates corresponding to labels in $\Sigma$ and binary predicates for $i$-th successor relation corresponding to each $i \in \mathsf{max}\{\mathsf{ar}(\sigma) \mid \sigma \in \Sigma\}$. A tree language is said to be MSO-definable if there is a formula in the monadic second-order language for the corresponding alphabet that is satisfied by precisely those $\Sigma$-labelled trees that belong to the tree language.  The following is a slight reformulation of a well-known result in the theory of automata on infinite trees:
\begin{theo}[Rabin's Basis Theorem]
Every non-empty MSO-definable tree language contains a regular tree.
\end{theo}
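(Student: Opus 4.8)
The plan is to pass from MSO-definability to recognizability by a nondeterministic parity tree automaton, and then to show that the emptiness problem for such automata, when solvable, admits a \emph{regular} witness. So the first step is to invoke the classical equivalence between MSO-definable tree languages and languages recognized by nondeterministic parity (equivalently Rabin) tree automata on $\Sigma$-labelled trees. Given the non-empty MSO-definable language $L$, this produces a nondeterministic parity automaton $\mathcal{A} = (Q, \Sigma, q_I, \delta, \Omega)$ with $L(\mathcal{A}) = L \neq \emptyset$, where $\delta$ maps a state and a letter $a$ to a set of tuples in $Q^{\mathsf{ar}(a)}$ and $\Omega : Q \to \omega$ is a priority function.

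Second, I would set up the \emph{emptiness game} $G_{\mathcal{A}}$, a parity game on a finite arena built from $\mathcal{A}$. Player $\plrval$, playing the role of the automaton, owns the positions $q \in Q$; at $q$ she chooses a letter $a \in \Sigma$ together with a transition $\tau \in \delta(q,a)$, moving to the position $(q,a,\tau)$. Player $\plrsat$ owns these transition positions; at $(q,a,\tau)$ with $\tau = (q_0,\dots,q_{\mathsf{ar}(a)-1})$ he chooses a direction $i < \mathsf{ar}(a)$ and the play continues from $q_i$ (at a leaf, where $\mathsf{ar}(a) = 0$, $\plrsat$ is stuck and the branch ends successfully for $\plrval$). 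The parity condition is inherited from $\Omega$ along the sequence of states visited, and $\plrval$ loses any finite play in which she gets stuck. A routine unravelling argument then shows that $\plrval$ wins $G_{\mathcal{A}}$ from $q_I$ if and only if $L(\mathcal{A}) \neq \emptyset$: a winning strategy is precisely a recipe for building a tree carrying a successful run, and conversely.

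Third, since $G_{\mathcal{A}}$ is a parity game on a finite graph, positional determinacy of parity games \cite{emer:tree91, ziel:infi98} supplies a \emph{positional} winning strategy $\sigma$ for $\plrval$ from $q_I$. Such a $\sigma$ is a function assigning to each state $q$ a label $a_q \in \Sigma$ and a transition $\tau_q \in \delta(q,a_q)$. I would then read off a $\Sigma$-labelled tree $T_\sigma$ together with a state-labelling $\mathrm{st} : T_\sigma \to Q$ by setting $\mathrm{st}(\varepsilon) = q_I$, labelling each node $u$ by $a_{\mathrm{st}(u)}$, and letting the children of $u$ carry the states listed in $\tau_{\mathrm{st}(u)}$. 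Because $\sigma$ is positional, the entire subtree hanging below $u$ is determined by $\mathrm{st}(u)$ alone; hence $T_\sigma\rst{u}$ depends only on $\mathrm{st}(u)$, so $T_\sigma$ has at most $\card{Q}$ distinct subtrees and is therefore regular. That $\mathrm{st}$ is a successful run of $\mathcal{A}$ on $T_\sigma$ is immediate, since every branch of $T_\sigma$ corresponds to a $\sigma$-conforming play of $G_{\mathcal{A}}$ (with $\plrsat$ choosing the directions), so the parity condition holds along every branch. Thus $T_\sigma \in L(\mathcal{A}) = L$ is the desired regular tree.

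The main obstacle is not the basis argument above, which is a routine application of positional determinacy once the automaton is in hand, but rather the first step: the translation from MSO to tree automata relies on closure of nondeterministic parity tree automata under complementation (Rabin's theorem), one of the deepest results in the area. In a fully self-contained treatment this is where all the real difficulty lies; here I would simply cite it, since only the passage from a non-empty automaton to a regular member is genuinely needed for the present development.
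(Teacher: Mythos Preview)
Your argument is correct and is the standard modern route to Rabin's Basis Theorem: pass to a nondeterministic parity tree automaton, reduce non-emptiness to a parity game on a finite arena, invoke positional determinacy, and read off a regular accepted tree from the positional strategy. The details you give (in particular that the subtree at $u$ depends only on $\mathrm{st}(u)$, yielding at most $\card{Q}$ subtrees) are accurate, and you rightly flag that the heavy lifting is hidden in the MSO--automata translation via Rabin's complementation theorem.

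There is nothing to compare against, however: the paper does not prove this theorem. It is stated as ``a slight reformulation of a well-known result in the theory of automata on infinite trees'' and used as a black box (in Proposition~\ref{p:regularization}). So your write-up goes well beyond what the paper does, supplying a genuine proof sketch where the paper simply cites the result.
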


\begin{defi}
A ($\Sigma$-labelled) \emph{tree with back-edges} is a $\Sigma$-labelled tree $T$ together with a partial map $f : T \to T$ such that every member of $\mathsf{dom}(f)$ is a leaf of $T$, and $f(u)$ is a proper prefix of $u$ for each $u\in \mathsf{dom}(f)$.  
\end{defi}

\begin{defi}
Let $(T,f)$ be a finite tree with back-edges and let $u \in T$. The \emph{$f$-unfolding} of $T$ at $u$, denoted $\mathsf{unf}(T,f,u)$, is the infinite tree obtained by the following coinductive definition: 
$$\mathsf{unf}(T,f,u) := T\rst{u}[\mathsf{unf}(T,f,f(l))/l \mid l \in \mathsf{dom}(f)]$$
where  the operation of substituting a tree for a leaf is defined as expected.  We write $\mathsf{unf}(T,f)$ as short-hand for $\mathsf{unf}(T,f,\varepsilon)$. 
\end{defi}
The previous definition could be replaced with a more standard inductive construction, which would be more complicated and less direct.  The reader who feels uncomfortable with such informal usage of coinduction might consult \cite{koze:prac17} for reassurance.  

\begin{prop}[Folklore]
\label{p:folk}
A tree is regular iff it is the unfolding of some finite tree with back edges. 
\end{prop}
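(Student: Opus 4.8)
The plan is to prove both directions of the biconditional in Proposition~\ref{p:folk} separately, establishing that regularity of a tree is equivalent to being the unfolding of some finite tree with back-edges.

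For the easier direction, I would first show that the unfolding $\mathsf{unf}(T,f)$ of a finite tree with back-edges $(T,f)$ is always regular. The key observation is that every subtree of $\mathsf{unf}(T,f)$ is, up to isomorphism, of the form $\mathsf{unf}(T,f,u)$ for some node $u \in T$ of the original finite tree. This can be seen by tracking, for any node $v$ in the unfolding, which node $u \in T$ it ``corresponds to'' under the coinductive substitution: since the substitution at each back-edge leaf $l$ plugs in a copy of $\mathsf{unf}(T,f,f(l))$, the subtree rooted at any node of the unfolding is determined by the node of $T$ that generated it. Since $T$ is finite, there are only finitely many such $u$, hence only finitely many distinct subtrees, so the unfolding is regular.

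For the harder direction, suppose $T$ is a regular $\Sigma$-labelled tree, so it has only finitely many distinct subtrees $T\rst{u_0}, \dots, T\rst{u_n}$ up to isomorphism. I would construct a finite tree with back-edges $(T', f)$ whose unfolding recovers $T$. The idea is to traverse $T$ in breadth-first order, building $T'$ as I go; whenever I reach a node $v$ whose generated subtree $T\rst{v}$ is isomorphic to the subtree $T\rst{w}$ at some strictly earlier (ancestor, or already-expanded) node $w$, I declare $v$ to be a leaf of $T'$ and set $f(v) = w$ rather than continuing to expand $v$. Because there are only finitely many isomorphism types of subtrees, every infinite branch must eventually hit such a repetition, so this construction terminates and yields a finite $T'$; I would need to verify that back-edges can always be directed to a proper \emph{prefix} (ancestor) so as to satisfy the definition of a tree with back-edges. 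One then checks that $\mathsf{unf}(T',f) \cong T$ by a coinductive argument matching nodes.

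The main obstacle I anticipate is the harder direction, specifically ensuring that the back-edge target $w$ can always be chosen to be a \emph{proper prefix} of the leaf $v$, as the definition of a tree with back-edges requires $f(v)$ to be a proper prefix of $v$. A naive repetition-detection might point a back-edge to a node on a different branch, which is not permitted. The fix is to select, along each branch, the first node whose subtree-type has already occurred \emph{on that same branch above it}; the pigeonhole principle guarantees this happens within finitely many steps on any branch, since the number of subtree-isomorphism-types is finite. I would also take care to keep the construction well-defined and regular-closed under the breadth-first enumeration, and to confirm the isomorphism $\mathsf{unf}(T',f) \cong T$ respects both the tree structure and the $\Sigma$-labelling via the subtree correspondence established above.
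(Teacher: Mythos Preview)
The paper does not actually prove this proposition: it is labelled ``Folklore'' and stated without proof, so there is nothing to compare your proposal against on the paper's side.

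Your proposal itself is essentially the standard argument and is correct. For the backward direction you rightly spot the only real pitfall, namely that the definition of a tree with back-edges requires $f(v)$ to be a proper \emph{ancestor} of $v$, and your fix (cut each branch at the first node whose subtree-type repeats an ancestor's, using pigeonhole on the finitely many subtree-types) is exactly the right one. One small cosmetic point: the mention of a breadth-first traversal is a red herring once you adopt the per-branch cutting strategy; the construction is more cleanly described as taking, on each branch of $T$, the shortest prefix exhibiting a repeated subtree-type, and letting $T'$ be the set of all nodes above such cut points. That description makes finiteness of $T'$ immediate (by K\"onig's lemma, since every branch is cut at bounded depth) and avoids any suggestion that back-edges might cross branches.
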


\section{Infinite proofs}

In this section we define an infinite sequent-style proof system $\sysinf$ for the hybrid $\mu$-calculus. This proof system will be used as a tool to prove completeness of the finite circular proof system that will be introduced in Section \ref{s:sysann}. The infinite system presented here is essentially dual to an infinite tableau system for the hybrid $\mu$-calculus. An important difference from the tableaux developed by Sattler and Vardi in \cite{satt:hybr01} is that the system is cut-free, which is required since the finitary circular system we shall present later will also be cut-free. Sattler and Vardi's approach relies on ``guessing'' all the relevant information about some nominals at the start of the tableau construction. In the dual setting of sequent calculi this amounts to starting the proof construction with a series of cuts.  

\subsection{The system $\sysinf$}
We will work with a sequent style proof system, where a sequent is a finite set of formulas interpreted as an implicit disjunction.  
It will be convenient to require that every formula in a sequent starts with some satisfaction operator, so each sequent has the form:
$$\ats{i}{1} \varphi_1 ,...,\ats{i}{n} \varphi_n$$
This is without loss of generality, since an \sysinf-proof for a formula $\varphi$ can be defined as a proof for the sequent $\{\at{i}\varphi\}$ where $\bfi$ is some fresh nominal not appearing in $\varphi$. Clearly $\at{i}\varphi$ is then valid if and only if $\varphi$ is. Sequents will be treated as plain sets rather than multi-sets, so we do not require contraction as a structural rule. 

The system has two axioms, which are the law of exluded middle and an identity axiom: 
$$\at{i}p, \at{i}\neg p \quad \quad \quad \bfi \approx \bfi$$
Here, $p$ is a nominal or a propositional variable. Rules of inference are given in Figure \ref{fig:inf-rules}. We remark that, in the modal rule $\mathsf{Mod}$, the nominal $\bfj$ must be fresh, i.e. it cannot appear in any formula in the conclusion of the rule. 
\begin{figure}[h]
\fbox{
\begin{minipage}[t]{.33\textwidth}
\begin{prooftree}
\AxiomC{$\Gamma, \at{i}\varphi \wedge \psi, \at{i}\varphi$}
\AxiomC{$\Gamma, \at{i}\varphi \wedge \psi, \at{i} \psi$}
\RightLabel{$\wedge$}
\BinaryInfC{$\Gamma, \at{i}\varphi \wedge \psi$}
\end{prooftree}
\vspace{.5cm}
\begin{prooftree}
\AxiomC{$\Gamma, \at{i}\phi, \at{j} \phi, \bfi \napprox \bfj$}
\RightLabel{$\mathsf{Eq}$}
\UnaryInfC{$\Gamma, \at{i}\phi, \bfi \napprox \bfj$}
\end{prooftree}

\begin{prooftree}
\AxiomC{$\Gamma,\at{i}\Box \varphi, \at{i}\Diamond \Psi,\at{j}\varphi,\at{j}\Psi$}
\RightLabel{$\mathsf{Mod}$}
\UnaryInfC{$\Gamma, \at{i}\Box \varphi, \at{i}\Diamond \Psi$}
\end{prooftree}
\end{minipage}
\begin{minipage}[t]{.33\textwidth}
\vspace{1cm}
\begin{prooftree}
\AxiomC{$\Gamma, \at{i}\eta x. \phi(x), \at{i}\phi(\eta x. \phi(x))$} 
\RightLabel{$\eta x$}
\UnaryInfC{$\Gamma, \at{i}\eta x. \phi(x)$}
\end{prooftree}

\begin{prooftree}
 \AxiomC{$\Gamma, \at{i}(\at{j}\varphi), \at{j}\varphi$}
\RightLabel{$\mathsf{Glob}$}
\UnaryInfC{$\Gamma, \at{i}(\at{j}\varphi)$}
\end{prooftree}
\end{minipage}
\begin{minipage}[t]{.33\textwidth}
\begin{prooftree}
\AxiomC{$\Gamma, \at{i}\varphi \vee \psi, \at{i}\varphi,\at{i} \psi$}
\RightLabel{$\vee$}
\UnaryInfC{$\Gamma, \at{i}\varphi \vee \psi $}
\end{prooftree}
\vspace{.5cm}
\begin{prooftree}
\AxiomC{$\Gamma,  \bfi \napprox \bfj, \bfj \napprox \bfi$}
\RightLabel{$\mathsf{Com}$}
\UnaryInfC{$\Gamma,  \bfi \napprox \bfj$}
\end{prooftree}

\begin{prooftree}
\AxiomC{$\Gamma$}
\RightLabel{$\mathsf{Weak}$}
\UnaryInfC{$\Gamma \cup \Psi$}
\end{prooftree}
\end{minipage}
}
\caption{Rules}
\label{fig:inf-rules}
\end{figure}

The expression $\at{i}\Diamond \Psi$ is short-hand for $\{\at{i}\Diamond \psi \mid \psi \in \Psi\}$, and likewise $\at{j}\Psi$ abbreviates $\{\at{j}\psi \mid \psi \in \Psi\}$. The intuition behind the modal rule is that, if the formulas $\Box \varphi, \Diamond \psi_1,...,\Diamond \psi_n$ are all false at a point named $\bfi$, then this must be witnessed by some point that we can give an arbitrary name $\bfj$, and at which all the formulas $\varphi,\psi_1,...,\psi_n$ are false.
It will be useful to think of proofs as being constructed from the root upwards. Note that the rules allow the principal formulas to appear in the premise of a rule, rather than being discarded. This is to be expected, since we are attaching information to nominals, and information that has been established about a nominal at one point in a proof may be needed later. In particular, we do allow that the premises and conclusion of a rule application are the same sequent.

\begin{defi}
A rule application is said to be \emph{repeating} if all premises are equal to the conclusion. 
\end{defi} 

Of course we need to be careful not to let the number of nominals appearing in a sequent grow unboundedly, since we want to construct finite proofs in the end. The weakening rule can be applied for this purpose, and it needs to be applied strategically to remove formulas that we can be sure will not be needed anymore, without losing information that may be needed later. 

In an application of the modal rule as shown in Figure \ref{fig:inf-rules}, we refer to $\at{i}\Box \varphi $ as the principal formula. In an application of the rule $\mathsf{Eq}$ as shown in the figure, the formula $\at{i}\phi$ is  called the principal formula and $\bfi \napprox \bfj$ the \emph{side formula}. In all other cases where a notion of principal formula makes sense, it should be clear from the form of the rules what the principal formula is.

A \emph{\sysinf-proof}, or proof-tree, is a $\Sigma$-labelled tree where the members of $\Sigma$ specify the sequent appearing at a node, the rule application of which the node is the conclusion (if any), and which formula the rule was applied to, and such that the labels of children of a node are the premises of the specified rule application.   We will not define the alphabet more precisely than this, but trust that this informal description will  be sufficiently clear. Ranks will be determined by the number of premises of  rule applications, so that a label specifying an application of the $\wedge$-rule for example will have rank $2$.  We shall often abuse terminology slightly by referring to the sequent appearing at a node in a proof as the label of the node. To distinguish \emph{valid} proofs from invalid ones, we need a notion of trace. 

\begin{defi}
A \emph{partial trace} $\vec{t}$ (of length $k \leq \omega$) on a branch $\beta$ of an \sysinf-proof $\Pi$ is a sequence $(u_j,\ats{i}{j}\psi_j)_{j < k}$ such that for each $j$, $u_j$ is a node on $\beta$ whose label contains $\ats{i}{j}\psi_j$, 
$u_{j + 1}$ is the unique child of $u_j$ in $\beta$ whenever $j + 1 < k$, and one of the following conditions holds if $j + 1 < k$:
\begin{enumerate}
\item $\ats{i}{j}\psi_j = \ats{i}{j + 1}\psi_{j + 1}$. We sometimes refer to such parts of traces as ``silent steps''.
\item $\ats{i}{j}\psi_j = \ats{i}{j}(\theta_1 \vee \theta_2)$ is the principal formula in an application of the $\vee$-rule, and $\ats{i}{j + 1}\psi_{j + 1} \in \{\ats{i}{j}\theta_1, \ats{i}{j} \theta_2)\}$.
\item $\ats{i}{j}\psi_j = \ats{i}{j}(\theta_1 \wedge \theta_2)$ is the principal formula in an application of the $\wedge$-rule, and $\ats{i}{j + 1}\psi_{j + 1} = \{\ats{i}{j}\theta_1$ or  $\ats{i}{j + 1}\psi_{j + 1} = \{\ats{i}{j}\theta_2$ depending on whether $u_{j + 1}$ is the left or right premise of the rule.
\item $\ats{i}{j}\psi_j = \ats{i}{j}\at{i'}\theta$ is the principal formula in an application of the $\mathsf{Glob}$-rule, and $\ats{i}{j + 1}\psi_{j + 1} = \at{i'}\theta$.
\item $\ats{i}{j}\psi_j$ is the principal formula in an application of the $\mathsf{Eq}$-rule with side formula $\bfi_j \napprox \bfi'$, and $\ats{i}{j + 1}\psi_{j + 1} = \at{i'}\psi_j$.
\item $\ats{i}{j}\psi_j = \ats{i}{j}\eta x. \theta(x)$ is the principal formula in an application of the $\eta$-rule, and $\ats{i}{j + 1}\psi_{j + 1} = \ats{i}{j}\theta(\eta x. \theta(x))$. In this case we say that an \emph{unfolding} of variable $x$ occurs on the trace $\vec{t}$ at the index $j$.
\item  $u_j$ is the conclusion of an application of the $\mathsf{Mod}$-rule labelled $\Gamma, \at{i}\Box\theta,\at{i}\Diamond \Psi$, the premise is labelled $\Gamma, \at{j}\theta, \at{j}\Psi,\at{i}\Box\theta,\at{i}\Diamond \Psi$, $\ats{i}{j}\psi_j = \at{i}\Box \theta$, and $\ats{i}{j+1}\psi_{j+1} = \at{j} \theta$.
\item $u_j$ is the conclusion of an application of the $\mathsf{Mod}$-rule labelled $\Gamma, \at{i}\Box\theta,\at{i}\Diamond \Psi$, the premise is labelled $\Gamma, \at{j}\theta, \at{j}\Psi,\at{i}\Box\theta,\at{i}\Diamond \Psi$, and for some $\psi \in \Psi$,  $\ats{i}{j}\psi_j = \at{i}\Diamond \psi$ and $\ats{i}{j+1}\psi_{j+1} = \at{j} \psi$.
\end{enumerate}

We say that the infinite trace $\vec{t}$ is \emph{trivial} if for some $j \leq \omega$,  $\ats{i}{j}\psi_j = \ats{i}{m}\psi_{m}$ for all $j \leq m < \omega$. A non-trivial infinite trace is said to be \emph{good} if the highest ranking fixpoint variable that is unfolded infinitely many times on $\vec{t}$ is a $\nu$-variable. 
\end{defi}
Note that we do not require good traces to start at the root, but adding this constraint would make no substantial difference since every formula appearing in a sequent somewhere in a proof can be connected to a trace starting at the root.

\begin{defi}
An \sysinf-proof is said to be \emph{valid} if every infinite branch contains a good trace, and every leaf is labelled by an axiom.
\end{defi}

As mentioned, we need to be careful about how and when to apply the weakening rule to maintain an upper bound on the size of sequents. The following terminology will play an important role in this regard.

\begin{defi}
Given an $\sysinf$-proof $\Pi$ for some formula $\at{r}\rho$ (the ``root formula'' of the proof), a nominal $\bfj$ appearing in $\Pi$ is said to be  \emph{original} if it appears in $\at{r}\rho$. A formula appearing in $\Pi$ is said to be a \emph{ground formula} if it is of the form $\at{j}\psi$ where $\psi$ is an original nominal. 
\end{defi}

\begin{defi}
An \sysinf-proof is said to be \emph{frugal} if at most finitely many sequents appear in the proof. 
\end{defi}

\subsection{Derived rules}

We shall allow the use of derived rules in  proof constructions, as abbreviations of their derivations. In particular, for the $\mathsf{Mod}$-rule, we define what we will call its \emph{narrow} counterpart which is a derived rule of $\sysinf$. If the principal formula $\at{i}\Box \varphi$ is a ground formula then the rule is the same as $\mathsf{Mod}$. Otherwise, an instance of the narrow rule consists of an application of the modal rule  immediately followed by an application of the weakening rule in order to remove \emph{all} formulas of the form $\at{k}\psi$ that appear in the premise, and for which $\bfk$ is not an original nominal. For example, if $\bfi$ is a non-original nominal and $\bfj$ is original, then the following is an instance of the narrow $\mathsf{Mod}$-rule:
\begin{prooftree}
\AxiomC{$\at{k}\varphi, \at{k} \psi, \at{j}\theta$}
\UnaryInfC{$\at{i}\Box \varphi, \at{i}\Diamond \psi, \at{i} p, \at{j}\theta$}
\end{prooftree} 
If $\bfj$ is non-original then the corresponding instance would be:
\begin{prooftree}
\AxiomC{$\at{k}\varphi, \at{k} \psi$}
\UnaryInfC{$\at{i}\Box \varphi, \at{i}\Diamond \psi, \at{i} p, \at{j}\theta$}
\end{prooftree}

Note that what counts as an instance of the  narrow modal rule depends on what the root formula of a proof is, so these rules are not local in that sense. We therefore stress that the narrow modal rule is not explicitly part of the proof system, but just a derived rule that we will be using to simplify our reasoning.

\begin{conv}
Throughout the rest of the paper we fix an arbitrary strict linear order $\prec$ over all formulas (which restricts to an order over the set of nominals since each nominal is a formula), and for each given formula we fix an arbitrary strict linear order (also denoted by $\prec$) over the set of instances of rules in \sysinf{} in which that formula is the principal one.  This order will only be used as a book-keeping device to facilitate proofs, and has no substantial content. 
\end{conv}

In the next section we shall describe a game for constructing \sysinf-proofs between two players \plrval{} and \plrsat{}, where \plrval{} attempts to construct a valid proof and \plrsat{} attempts to show that no such proof can be constructed. This game will be formulated in terms of certain derived rules of \sysinf. A key idea will be to identify the crucial sources of non-determinism in the proof construction, in the sense that the only important choice available to \plrval{} is how to apply the modal rule, i.e. choosing which boxed formula to use in order to introduce a fresh nominal. Besides that, the proof construction is essentially deterministic, with one exception: we shall allow \plrval{} also to apply repeating rule applications in a non-deterministic manner, to introduce traces when needed. 

This informal idea will be captured by two derived rules, that we call the deterministic rule and the ground rule.  Like the narrow modal  rule, the exact shape of these rules will depend on extra parameters besides the formulas that appear in premise and conclusion.

\paragraph{The deterministic rule}

The \emph{deterministic rule}  is defined as follows: given a sequent $\Gamma$, if there are no applicable instances of the  $\wedge$-rule, the $\vee$-rule, $\mathsf{Glob}$ or the $\eta$-rules \emph{except repeating ones}, then the deterministic rule does not apply. Otherwise,  the deterministic rule applies uniquely as follows: we pick the $\prec$-smallest formula in $\Gamma$ which is the principal formula in an applicable non-repeating instance of one of these rules, we pick the smallest such rule instance for which it is the principal formula, and we apply that rule.

Note that if we repeatedly apply the deterministic rule starting from some sequent $\Gamma$ until it no longer applies, then this process must eventually terminate. The assumption that all formulas are guarded plays an important role here, without guardedness the process could go on indefinitely via fixpoint unfoldings.

\paragraph{The ground rule}

The \emph{ground rule} is designed to deterministically apply the $\mathsf{Eq}$-rule and the $\mathsf{Com}$-rule in the same way as the deterministic rule, but also to ensure that original nominals are given special treatment. It is defined as follows: we consider the original nominals appearing in a sequent $\Gamma$. If possible, apply the $\prec$-smallest applicable rule instance  for which one of the following conditions holds:
\begin{enumerate}
\item it is a non-repeating instance of the rule $\mathsf{Com}$ with principal formula $\bfi \napprox \bfj$, where both $\bfi$ and $\bfj$ are original nominals, or:
\item it is a non-repeating instance of the rule $\mathsf{Eq}$ with principal formula $\at{j}\varphi$ and side formula $\bfj \napprox \bfi$, where $\bfi$ is a $\prec$-minimal original nominal for which such a rule instance applies. 
\end{enumerate}
If there are no such rule instances available then the ground rule does not apply. Like the deterministic rule, the process of repeatedly applying the ground rule must eventually terminate.

\section{Completeness for \sysinf}

\subsection{A game for building \sysinf-proofs}

To prove completeness we shall make use of a game for constructing \sysinf-proofs, played between two players \plrval{} (the proponent) and \plrsat{} (the opponent). We fix a root formula $\at{r}\rho$, so that what counts as a narrow instance of the modal rule is defined relative to this root formula as before, and similarly with the deterministic rule and the ground rule.

\begin{defi}
An instance of the  weakening rule is called \emph{terminal} if its premise is an axiom.
\end{defi}

\begin{defi}
The \emph{\sysinf-game} is a board game, defined as usual by specifying its positions, player assignments and admissible moves for positions and winning conditions on full (finite or infinite) plays. 

\textbf{Positions:} Game positions are of two types: sequents, which belong to $\plrval$, and pairs of sequents, which belong to $\plrsat$. 

\textbf{Moves for \plrsat:} Given a position belonging to \plrsat{}, consisting of a pair of sequents, the player simply  chooses one of the sequents from the pair. 

\textbf{Moves for \plrval:} Given a position belonging to \plrval{}, consisting of a sequent $\Gamma$, if $\Gamma$ is an axiom then the game ends and $\plrval$ is declared the winner. Otherwise available moves are defined as follows: 
\begin{itemize}
\item If the deterministic rule is applicable to $\Gamma$ then this is the only move allowed for \plrval. 
\item If the deterministic rule is not applicable,  but the ground rule is applicable, then this is the only move allowed for \plrval. 
\item If neither the deterministic rule nor the ground rule are applicable, then the possible moves of \plrval{} are the narrow modal rule, terminal applications of the weakening rule,  and \emph{repeating} applications of non-modal rules.  
\end{itemize}
If $\pi$ is a partial play ending with some sequent $\Gamma$, then we often refer to $\Gamma$ as the \emph{label} of $\pi$.
Note that since we allow repeating applications of Weakening, \plrval{} never gets stuck. So the only full finite plays are those that end in an axiom, and are won by \plrval. Thus to finish the construction of the \sysinf-game it remains only to decide the winner of an infinite play. 
\emph{Traces} on a play of the \sysinf-game are defined similarly as traces in proof trees, the only difference being that a trace on a play $\pi$ of length $k \leq \omega$ is now an object of the form $(\pi_n,\ats{i}{n}\varphi_n)_{n < k}$ where each $\pi_n$ is an initial segment of the play $\pi$, and for each $n + 1 < k$ the initial segment $\pi_{n +1}$ extends $\pi_n$ with a single move. Given an infinite play $\pi$, \plrval{} is then declared the winner if the play contains a good trace, and otherwise the winner is \plrsat.  
\end{defi}

We now draw some simple consequences of how the \sysinf-game has been designed.

\begin{prop}
\label{p:only-original}
In any sequent of the form $\Gamma,\at{i}\psi$ appearing in a play of the \sysinf-game, $\psi$ contains no non-original nominals. 
\end{prop}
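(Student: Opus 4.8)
The plan is to prove, by induction on the length of the play, the slightly stronger invariant that \emph{every} sequent occurring in the play --- whether as a genuine game position or as one of the two components of a pair-position belonging to \plrsat{} --- has the property that the body $\psi$ of each of its formulas $\at{i}\psi$ contains no non-original nominal. The proposition is then the special case in which the sequent is an actual \plrval-position. For the base case the play starts from the root sequent $\{\at{r}\rho\}$, and by the definition of \emph{original} every nominal occurring in $\rho$ is original, so the invariant holds at the root.

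For the inductive step I would first isolate the structural observation that drives the whole argument: with a single exception, every rule available in the \sysinf-game introduces new formulas whose \emph{bodies} are subformulas or substitution instances of the bodies of principal formulas already present in the conclusion. This is immediate for the $\wedge$-, $\vee$- and $\eta$-rules (applied through the deterministic rule), for $\mathsf{Glob}$, for $\mathsf{Eq}$, and for $\mathsf{Mod}$ (applied through the narrow modal rule): in each case the body of a new formula lies inside the body of some principal formula, so by the induction hypothesis it contains only original nominals. For $\mathsf{Mod}$ the crucial point is that the fresh nominal $\bfj$ it introduces occurs only as the \emph{subscript} of the new formulas $\at{j}\varphi, \at{j}\Psi$ and never inside their bodies, so the freshness of $\bfj$ does not threaten the invariant. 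Weakening (including terminal weakening) only deletes formulas, and in the terminal case the axiom reached is a subset of the current sequent and hence inherits the invariant; \plrsat{}'s moves merely select one component of a pair, both of which satisfy the invariant by the analysis of the $\wedge$-rule; and repeating rule applications add no new formula and so preserve the invariant trivially.

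The one rule that needs genuine care --- and which I expect to be the main obstacle --- is $\mathsf{Com}$, since from a principal formula $\bfi \napprox \bfj = \at{i}\neg\bfj$ it produces $\bfj \napprox \bfi = \at{j}\neg\bfi$, whose body $\neg\bfi$ mentions the nominal $\bfi$; this $\bfi$ occurs only as a subscript of the principal and \emph{not} inside its body, so the general observation above does not apply. Here I would invoke the design of the ground rule, which permits a non-repeating instance of $\mathsf{Com}$ only when both $\bfi$ and $\bfj$ are original nominals. Consequently $\bfi$ is original, the body $\neg\bfi$ contains only original nominals, and the invariant is maintained; repeating instances of $\mathsf{Com}$ add no new formula and so cause no difficulty. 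Assembling these cases completes the inductive step, and the proposition follows.
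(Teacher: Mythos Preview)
Your proof is correct and follows essentially the same approach as the paper's: both argue by induction on the play that each admissible move preserves the invariant, the paper simply compressing the whole case analysis into the single sentence ``just observe that all the admissible moves preserve this condition.'' Your detailed treatment, in particular the observation that non-repeating instances of $\mathsf{Com}$ in the \sysinf-game only arise through the ground rule and hence only with both nominals original, makes explicit exactly the point the paper leaves implicit.
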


\begin{proof}
Just observe that all the admissible moves preserve this condition. 
\end{proof}

From this proposition a few useful facts follow:

 \begin{prop}
\label{p:identity}
If a play of the \sysinf-game contains any sequent of the form $\Gamma, \bfi \napprox \bfj$, then  $\bfj$ is an original nominal. 
\end{prop}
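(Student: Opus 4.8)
The plan is to read off the result directly from Proposition~\ref{p:only-original} by unpacking the abbreviation for $\napprox$. Recall that $\bfi \napprox \bfj$ is, by definition, nothing but the formula $\at{i}\neg \bfj$, i.e. a satisfaction operator $\at{i}$ applied to the literal $\neg \bfj$. Hence any sequent of the form $\Gamma, \bfi \napprox \bfj$ is literally a sequent of the form $\Gamma, \at{i}\neg \bfj$, which matches the shape $\Gamma', \at{i}\psi$ appearing in Proposition~\ref{p:only-original} with $\psi = \neg \bfj$.

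From there I would simply invoke Proposition~\ref{p:only-original}: since the sequent $\Gamma, \at{i}\neg\bfj$ occurs in a play of the \sysinf-game, the formula $\psi = \neg\bfj$ sitting immediately after the satisfaction operator contains no non-original nominals. The only nominal occurring in $\neg\bfj$ is $\bfj$ itself, so $\bfj$ must be original, which is exactly the claim.

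There is no real obstacle to overcome here; the content of the proposition is entirely carried by Proposition~\ref{p:only-original}, and the only ``step'' is recognizing the correct reading of the abbreviation $\bfi \napprox \bfj = \at{i}\neg\bfj$ so that the earlier proposition applies with $\psi := \neg\bfj$. The single point worth stating explicitly, to make the argument airtight, is that $\bfj$ is the sole nominal appearing in $\neg\bfj$, so that ``$\neg\bfj$ contains no non-original nominal'' forces $\bfj$ to be original rather than merely constraining some other nominal.

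\begin{proof}
By definition $\bfi \napprox \bfj$ is the formula $\at{i}\neg\bfj$, so a sequent of the form $\Gamma, \bfi\napprox\bfj$ is a sequent of the form $\Gamma, \at{i}\neg\bfj$. By Proposition~\ref{p:only-original}, the formula $\neg\bfj$ contains no non-original nominals. Since the unique nominal occurring in $\neg\bfj$ is $\bfj$, it follows that $\bfj$ is original.
\end{proof}
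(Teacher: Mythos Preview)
Your proof is correct and follows exactly the paper's approach: the paper simply states that this is a special case of Proposition~\ref{p:only-original}, and you have spelled out precisely why, by unpacking the abbreviation $\bfi \napprox \bfj = \at{i}\neg\bfj$ and noting that the only nominal in $\neg\bfj$ is $\bfj$.
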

\begin{proof}
Special case of Proposition \ref{p:only-original}.
\end{proof}
\begin{prop}
\label{p:boundedpp}
For each nominal $\bfi$, and each partial  play $\pi$ in the $\sysinf$-game, the label of $\pi$ contains at most $k$ formulas of the form $\at{i}\psi$, where $k$ is linear in the size of the root formula.  
\end{prop}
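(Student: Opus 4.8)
The plan is to reduce the statement to a single invariant: that the ``body'' of every formula occurring in any sequent of a play is drawn from a fixed finite set $S$ whose cardinality is linear in the size of $\rho$. Granting this, the proposition is immediate, because sequents are sets: for a fixed nominal $\bfi$ the assignment $\at{i}\psi \mapsto \psi$ is injective on the formulas of the form $\at{i}\psi$ occurring in a given sequent, and its image is contained in $S$, so there can be at most $|S|$ such formulas.

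Concretely, writing $\mathsf{Orig}$ for the finite set of original nominals, I would take
$$S := c(\rho) \cup \{\neg\bfm \mid \bfm \in \mathsf{Orig}\},$$
and set $k := |S|$. By the earlier proposition bounding Fischer--Ladner closures, $|c(\rho)|$ is linear in $|\rho|$, and $|\mathsf{Orig}|$ is at most the number of nominals occurring in the root formula and so is also linear; hence $k$ is linear in the size of the root formula, as required. The claim to establish is then: every formula occurring in a sequent of a partial play has the form $\at{i}\psi$ with $\psi \in S$.

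I would prove this claim by induction on the length of the partial play, showing that the invariant is preserved by every admissible move. For the base case the play is the single sequent $\{\at{r}\rho\}$, and indeed $\rho \in c(\rho) \subseteq S$. For the inductive step, note first that it suffices to inspect the basic rules of $\sysinf$: the narrow modal, deterministic and ground rules are composed from these together with weakening, weakening only deletes formulas and so trivially preserves the invariant, and a move of $\plrsat$ merely selects one of two sequents each of which is a premise of a rule applied to $\plrval$'s sequent. For the rules $\wedge$, $\vee$, $\mathsf{Mod}$ and $\eta x$ the body of each newly introduced formula is an immediate Fischer--Ladner descendant of the body of the principal formula; since the latter lies in $c(\rho)$ by the induction hypothesis and $c(\rho)$ is Fischer--Ladner closed, so does the former. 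For $\mathsf{Glob}$, the principal formula $\at{i}(\at{j}\varphi)$ has body $\at{j}\varphi \in c(\rho)$, whence its Fischer--Ladner descendant $\varphi$, the body of the introduced formula $\at{j}\varphi$, also lies in $c(\rho)$.

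The only genuinely delicate cases, and the point where the auxiliary results are needed, are the two rules acting on $\napprox$-formulas. The rule $\mathsf{Eq}$ passes from a principal formula $\at{i}\phi$ to $\at{j}\phi$ and leaves the body $\phi$ untouched, so by the induction hypothesis nothing new appears. The rule $\mathsf{Com}$, however, rewrites $\bfi\napprox\bfj = \at{i}\neg\bfj$ as $\bfj\napprox\bfi = \at{j}\neg\bfi$, thereby changing the body from $\neg\bfj$ to $\neg\bfi$, and this new body need not belong to $c(\rho)$; this is precisely why the extra literals $\{\neg\bfm \mid \bfm \in \mathsf{Orig}\}$ are included in $S$. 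To close this case I would apply Proposition \ref{p:only-original} to the sequent containing the new formula $\at{j}\neg\bfi$: its body $\neg\bfi$ contains no non-original nominals, so $\bfi$ is original and therefore $\neg\bfi \in S$. (Propositions \ref{p:only-original} and \ref{p:identity} in fact guarantee that both nominals of every $\napprox$-formula arising in a play are original.) This completes the induction and hence the proof. The main obstacle is thus isolating and handling the $\mathsf{Com}$-case, where the body of a formula can be genuinely rewritten rather than merely decomposed; everything else reduces to Fischer--Ladner closedness.
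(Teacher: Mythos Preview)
Your proposal is correct and is essentially a detailed expansion of the paper's one-line proof, which merely says ``Easy using Proposition~\ref{p:only-original}.'' You have correctly identified the underlying invariant (bodies lie in $c(\rho)$ together with the finitely many negated original nominals needed for the $\mathsf{Com}$ rule) and carried out the routine induction that the paper leaves implicit.
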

\begin{proof}
Easy using Proposition \ref{p:only-original}. . 
\end{proof}

\begin{prop}
\label{p:narrow-game}
For any sequent $\Gamma$ appearing in a play of the $\sysinf$-game, at most one non-original nominal appears in $\Gamma$.
\end{prop}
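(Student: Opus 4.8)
The plan is to prove the statement by induction on the length of the partial play ending in $\Gamma$, showing that the invariant ``$\Gamma$ contains at most one non-original nominal'' is preserved by every admissible move of the $\sysinf$-game. The base case is immediate: the unique initial position is the root sequent $\{\at{r}\rho\}$, in which $\bfr$ is original by definition, so no non-original nominal occurs at all. The inductive step then amounts to inspecting the move types one at a time.

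Before doing so I would record the key preliminary reduction. By Proposition \ref{p:only-original} a non-original nominal can occur in a game sequent only as an outermost satisfaction-operator label $\at{j}$, never inside the scope of such an operator. Hence ``counting the non-original nominals in $\Gamma$'' is the same as counting the distinct non-original outer labels, and any rule that merely rearranges or decomposes the bodies of formulas cannot create a new one. Concretely, the moves governed by the deterministic rule ($\vee$, $\wedge$, $\eta$, $\mathsf{Glob}$) and by the ground rule ($\mathsf{Eq}$, $\mathsf{Com}$) either leave the set of outer labels unchanged or else promote a nominal occurring \emph{inside} a formula to an outer label; but such inner nominals are original by Proposition \ref{p:only-original} (and in the $\napprox$-cases also by Proposition \ref{p:identity}), so these moves never increase the number of non-original outer labels. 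Repeating applications leave the sequent literally unchanged, and terminal weakenings only delete formulas, so neither can increase the count either.

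This isolates the narrow modal rule as the only case of substance, and I expect the heart of the argument to lie there. When $\plrval$ plays the narrow modal rule, the underlying application of $\mathsf{Mod}$ introduces exactly one fresh nominal $\bfj$, and the weakening built into the narrow rule deletes every formula whose outer label is a non-original nominal other than this freshly created $\bfj$. I would then check that the newly created formulas $\at{j}\varphi$ and $\at{j}\Psi$ do not smuggle a further non-original nominal into the premise: their bodies are subformulas of the $\bfi$-labelled principal and diamond formulas, hence contain no non-original nominals, again by Proposition \ref{p:only-original}. Consequently the only non-original nominal surviving as an outer label in the premise is $\bfj$ itself, so the premise contains exactly one non-original nominal, which respects the bound.

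The main obstacle, and the point I would be most careful to get right, is precisely this bookkeeping for the modal step: one must verify that the weakening really erases \emph{all} previously present non-original labels, so that the count is \emph{reset} to one rather than incremented, and that the body of the principal box contributes no hidden non-original nominal. Both facts reduce cleanly to Proposition \ref{p:only-original}; once that reduction is made explicit, the remaining verifications for the other rules are routine, and the induction closes.
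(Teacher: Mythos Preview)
Your proposal is correct and follows the same line as the paper's own proof, which is a two-sentence remark that only the narrow modal rule can introduce a non-original nominal and that by design it simultaneously erases all previously present non-original ones. You spell out the same induction explicitly and ground the ``no other rule can introduce a non-original outer label'' step in Propositions~\ref{p:only-original} and~\ref{p:identity}, which is exactly the implicit content of the paper's first sentence; the modal case is handled identically.
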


\begin{proof}
The only moves that can introduce new non-original nominals are applications of the narrow modal rule, and by design each instance of this rule erases all occurrences of non-original nominals other than the new nominal that was introduced.  
\end{proof}

Like the games for satisfiability checking for the modal $\mu$-calculus introduced in \cite{niwi:game96}, the proof search game  is determinate:

\begin{prop}
\label{p:determinacy}
The \sysinf-game is determinate, i.e. for every position there is a player who has a winning strategy. 
\end{prop}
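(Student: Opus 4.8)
The plan is to recognise the $\sysinf$-game as a game whose payoff set is Borel and then to invoke Martin's Borel determinacy theorem; if positional winning strategies are wanted one can instead massage the game into a parity game and appeal to positional determinacy of parity games, exactly as is done for the satisfiability games of \cite{niwi:game96}. The only nontrivial content is the winning condition on infinite plays, since every finite play ends in an axiom and is won by $\plrval$. So everything reduces to showing that the set of infinite plays containing a good trace is Borel.

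First I would isolate the finite data that governs goodness. By Proposition \ref{p:only-original} no formula occurring in any sequent of a play contains a non-original nominal, and a direct inspection of the trace conditions (1)--(8) shows that along any trace the tracked formula $\psi_j$ either stays fixed, is replaced by an immediate Fischer--Ladner component, or is reassigned to a different outer satisfaction nominal. Hence every $\psi_j$ ranges over a \emph{finite} set of formulas built from the Fischer--Ladner closure $c(\rho)$ of the root formula together with the finitely many original nominals, and the outer satisfaction nominal $\bfi$ is irrelevant to whether a fixpoint variable is unfolded. Since $c(\rho)$ is finite, goodness of a trace depends only on the sequence $(\psi_j)_j$ over this finite alphabet, and ``good'' is exactly the parity condition ``the $<_\rho$-largest variable unfolded infinitely often is a $\nu$-variable'', where $<_\rho$ orders the finite set of bound variables of $\rho$.

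Next I would dispose of the existential quantifier hidden in ``contains a good trace''. For a fixed infinite play $\pi$ the partial traces on $\pi$ form a tree under extension, and this tree is \emph{finitely branching}: each sequent is finite (Proposition \ref{p:boundedpp}) and clauses (1)--(8) permit only finitely many continuations of a given trace step. Consequently ``$\pi$ contains a good trace'' says precisely that a nondeterministic automaton, reading $\pi$ together with its moves and guessing a trace step by step, admits an infinite run meeting the parity acceptance condition above. Over the finite alphabet this is a nondeterministic parity condition, so the set of $\plrval$-winning plays is the projection of an $\omega$-regular set along the finitely-branching trace guess; such projections are Borel, indeed low in the Borel hierarchy, even though the ambient space of sequents is not over a finite alphabet.

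Finally, with the payoff set shown to be Borel, Martin's Borel determinacy theorem applies to the tree of finite plays with no hypothesis of finite branching, yielding determinacy at once. Alternatively, to obtain positional strategies in the style of \cite{niwi:game96}, I would build an equivalent parity game whose arena is the product of the game positions with a finite memory tracking a single guessed trace (its current formula together with the $<_\rho$-best variable seen so far), with priorities read off from $<_\rho$; positional determinacy of parity games on arbitrary arenas then transfers back to a winning strategy in the $\sysinf$-game. The main obstacle is exactly the interaction of the two sources of infinity: the game graph is infinite because fresh nominals are introduced without bound, and the winning condition is existential over traces rather than a plain parity condition on positions. Both are tamed by the observations above---finite branching of the trace tree together with the confinement of tracked formulas to a finite set---which reduce the payoff to a Borel (projection of $\omega$-regular) set to which the standard determinacy theorems apply.
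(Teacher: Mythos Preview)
Your overall strategy matches the paper's: argue that the good-trace winning condition is Borel and then invoke Martin's theorem, identifying the unbounded introduction of fresh nominals as the one real obstruction. Where your argument is incomplete is the step ``such projections are Borel''. You correctly observe via Proposition~\ref{p:only-original} that the inner formula $\psi$ of a tracked $\at{i}\psi$ lies in the finite Fischer--Ladner closure, so the parity \emph{acceptance} condition lives over a finite alphabet. But the trace \emph{transitions} in clauses (1)--(8) also depend on the outer nominal $\bfi$---for instance, on whether the tracked $\at{i}\psi$ is the principal formula of the rule applied at that step---and that nominal ranges over an unbounded set. Hence the nondeterministic automaton you describe has an infinite state space, determinisation does not apply, and the projection of the Borel set of (play, trace) pairs onto plays is a priori only analytic. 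Finite branching of the trace tree (which is all Proposition~\ref{p:boundedpp} and finiteness of sequents give you) does not force projections of Borel sets to be Borel. Your alternative parity-game construction has the same issue: the ``finite memory tracking a single guessed trace'' must record the outer nominal to decide transitions correctly, so it is not finite; and if instead \plrval{} makes the trace guess as an extra move, \plrsat's winning strategy in the product game may depend on that guess and need not transfer back, so you get only one direction of determinacy.

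The paper closes exactly this gap by appealing to Proposition~\ref{p:narrow-game} rather than Proposition~\ref{p:boundedpp}: every sequent reached in a play contains at most one non-original nominal, so up to renaming that single nominal there are only finitely many reachable sequents. The entire game can then be recast as a Gale--Stewart game with finitely many positions and an $\omega$-regular winning condition, to which Martin's theorem (or the argument of \cite{niwi:game96}) applies directly. Your sketch becomes correct once you replace the appeal to Proposition~\ref{p:boundedpp} by Proposition~\ref{p:narrow-game} and track the outer nominal as ``one of the original nominals, or the unique current non-original one'', which is finite data.
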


\begin{proof}
We want to prove this using Martin's Theorem \cite{mart:bore75}. In order for this result to apply, we want to represent the proof search game as a Gale-Stewart game with winning condition given by a Borel set. In \cite{niwi:game96}, this is proved by observing that the winning infinite plays form an $\omega$-regular language. The only obstacle to this approach here is that infinitely many sequents may appear in a play,  so we cannot use the set of sequents as the alphabet of a non-deterministic parity automaton recognizing winning plays. 

However, this issue is easily dealt with: by Proposition \ref{p:narrow-game}, the set of sequents that may appear in a play of the proof search game is finite up to renaming of nominals. Using this observation it is not hard to see that we can represent the proof search game as an equivalent Gale-Stewart game with finitely many positions, and in which the winning condition is an $\omega$-regular language. We leave the details of this construction to the reader.   
\end{proof}

%

\begin{defi}
Let $\sigma$ be a strategy for \plrsat{} in the \sysinf-game, and let $\pi$ be a partial play ending with a winning position. The \emph{$\sigma$-tail} of $\pi$, denoted $\sitail{\pi}$, is defined so that $\pi \cdot \sitail{\pi}$ is the unique longest $\sigma$-guided partial play extending $\pi$ and such that the only moves made by \plrval{} on $\sitail{\pi}$ are instances of the deterministic rule or the ground rule. 
\end{defi}
Note that the previous definition is sound, since the process of repeatedly applying admissible moves corresponding to the deterministic rule or the ground rule is entirely deterministic, and must eventually terminate. 

Note that, since the set of \emph{finite} partial plays in the \sysinf-game is a countable set (being a set of finite sequences over a countable set), we can define a surjective mapping $F$ from the set of nominals to the set of finite $\sigma$-guided partial plays, such that $F^{-1}[\pi]$ is infinite for each finite partial play $\pi$. We leave the little set theoretic exercise of proving this to the reader. Throughout the rest of this section we fix such a map $F$. Informally, we think of $F(\bfi)$ as a \emph{tag} attached to the nominal $\bfi$ to remember where it was introduced. 
\begin{defi}
We say that a full or partial play $\pi$ of the \sysinf-game is \emph{clean} if, for every initial segment $\pi'$ of the play ending with an application of the (narrow) modal rule introducing a new nominal $\bfj$, we have $F(\bfj) = \pi'$.
\end{defi}
When proving completeness of \sysinf{} we shall construct a counter-model to the root formula from a winning strategy for \plrsat, and it will then be convenient to restrict attention to clean plays. In particular, the definition of a clean play together with the design of the \sysinf-game yields the following result:

\begin{prop}
\label{p:linearity}
Let $\sigma$ be a given strategy for \plrsat{} in the \sysinf-game, and let $\bfi$ be a non-original nominal that appears on some clean $\sigma$-guided partial play $\pi$. Then there exists a unique \emph{largest} set of formulas of the form $\{\at{i}\varphi_1,...,\at{i}\varphi_n\}$, which we shall denote by $\mathsf{Th}(\bfi,\sigma)$, such that the label of some clean $\sigma$-guided partial play $\pi'$  contains $\mathsf{Th}(\bfi,\sigma)$. Furthermore, the label of the play $\pi \cdot \sitail{\pi}$ contains $\mathsf{Th}(\bfi,\sigma)$.
\end{prop}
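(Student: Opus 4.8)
The plan is to identify $\mathsf{Th}(\bfi,\sigma)$ with the set of $\at{i}$-formulas occurring in the \emph{terminal} sequent of the forced, $\sigma$-determined deterministic-and-ground process that begins where $\bfi$ is introduced. First I would pin down where and when $\bfi$ lives. Since $\bfi$ is non-original, it can only have been put into play by an application of the (narrow) modal rule, and since $\pi$ is clean this must happen at the unique initial segment $\pi_0 = F(\bfi)$; consequently every clean $\sigma$-guided play on which $\bfi$ occurs extends $\pi_0$, and in particular so does $\pi$. By Proposition \ref{p:narrow-game}, from $\pi_0$ onwards and for as long as $\bfi$ is present it is the \emph{unique} non-original nominal, so any later modal move --- which introduces a fresh non-original nominal --- must delete $\bfi$. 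Hence the occurrences of $\bfi$ form a single contiguous ``lifetime'' stretching from $\pi_0$ up to the first subsequent modal move, and every $\at{i}$-formula that ever appears does so within this segment.

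Next I would exploit the determinism of the \sysinf-game inside this lifetime. As long as $\bfi$ is present, no modal move can be played without ending the lifetime, so while the deterministic rule or the ground rule is applicable \plrval{} is forced to apply it, and \plrsat{} replies according to $\sigma$; the only remaining freedom, namely repeating applications of non-modal rules, leaves the sequent unchanged. Since the deterministic and ground rules are genuinely deterministic given the order $\prec$, the sequence of labels along this segment is therefore the unique $\sigma$-guided path of deterministic/ground steps issuing from $\pi_0$, followed by a constant tail at the sequent $L^*$ at which neither rule applies any longer. Each of these rules keeps every formula of its conclusion in the chosen premise, so the set of $\at{i}$-formulas grows monotonically along this path and attains its maximum at $L^*$. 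I would then \emph{define} $\mathsf{Th}(\bfi,\sigma)$ to be the set of $\at{i}$-formulas occurring in $L^*$, which is finite by Proposition \ref{p:boundedpp}.

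It then remains to verify the two clauses. For maximality, any clean $\sigma$-guided play whose label contains $\at{i}$-formulas lies (up to the label-preserving repeating tail) on this same forced path, so by monotonicity its $\at{i}$-formulas are contained in those of $L^*$, i.e.\ in $\mathsf{Th}(\bfi,\sigma)$; since $L^*$ is realized by a clean $\sigma$-guided play and contains $\mathsf{Th}(\bfi,\sigma)$, this set is the largest such, and being largest determines it uniquely. For the ``furthermore'' clause I would note that the given $\pi$ sits within the lifetime of $\bfi$, so its label already lies on the forced path; the definition of the $\sigma$-tail then drives the deterministic/ground process to completion, whence the label of $\pi \cdot \sitail{\pi}$ is exactly $L^*$ and therefore contains $\mathsf{Th}(\bfi,\sigma)$. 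The main obstacle, and where the care lies, is the second step: establishing rigorously that inside $\bfi$'s lifetime the play is forced and $\sigma$-determined up to label-preserving repetitions. This rests on combining the priority the \sysinf-game gives to the deterministic and ground rules over modal and weakening moves, Proposition \ref{p:narrow-game} (to guarantee that every modal move kills $\bfi$), and cleanliness together with the tag map $F$ (to fix the single introduction point $\pi_0$).
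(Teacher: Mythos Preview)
Your proposal is correct and is precisely the argument the paper has in mind: the paper does not give an explicit proof of this proposition but simply remarks that it follows from ``the definition of a clean play together with the design of the \sysinf-game'', and your write-up unpacks exactly those two ingredients. The key points you identify --- that cleanliness pins down the unique introduction segment $\pi_0 = F(\bfi)$, that Proposition~\ref{p:narrow-game} forces any subsequent modal move to erase $\bfi$, and that between $\pi_0$ and that point the game rules leave \plrval{} no freedom except label-preserving repetitions while the deterministic and ground rules monotonically enlarge the set of $\at{i}$-formulas --- are the intended ones, and your handling of the ``furthermore'' clause via the $\sigma$-tail is sound because the forced segment is determined as a \emph{play} (not merely as a label), so $\sigma$'s responses along $\sitail{\pi}$ coincide with those on the canonical path.
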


\subsection{Trace loops}

A crucial part of proving completeness of \sysinf{} consists in proving that the standard ``good trace'' condition on valid proofs, in terms of traces going from the root up along a single branch, is not too strong. At first sight it may seem that we need to consider a more general condition, allowing traces to jump between different occurrences of the same nominal. In this subsection we prove a useful result that deals with this issue. 

\begin{defi}
Let $S$ be a set of plays in the \sysinf-game. A \emph{good trace loop} on $S$ in the \sysinf-game is a sequence of partial traces $\langle \vec{t}_1....\vec{t}_n \rangle$ for which there exist $\pi_1,...,\pi_n \in S$ such that:
\begin{itemize}
\item Each $\vec{t}_i$ is a partial trace on $\pi_i$,  
\item Each $\vec{t}_i$ starts and ends with  ground formulas,
\item Each  $\vec{t}_{i + 1}$ starts with the last formula of $\vec{t}_i$,
\item  The trace $\vec{t}_{1}$ starts with the last formula of $\vec{t}_n$, and
\item  At least one variable is unfolded on some trace $\vec{t}_i$ and the highest ranking such variable is a $\nu$-variable.
\end{itemize} 
\end{defi}

\begin{lemma}
\label{l:better-strategy}
Suppose that \plrsat{} has a winning strategy  in the \sysinf-game for $\at{i}\rho$. Then there exists a sequent $\Phi$ containing $\at{i}\rho$ and a winning strategy $\sigma$ for \plrsat{} in the \sysinf-game with starting position $\Phi$, such that the following conditions hold:
\begin{enumerate}
\item For every sequent appearing in a $\sigma$-guided play, its ground formulas are the same as the ground formulas in $\Phi$.
\item The set of $\sigma$-guided plays does not contain any good trace loops.
\end{enumerate}
 \end{lemma}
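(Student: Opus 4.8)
The plan is to establish the two conditions in turn, doing condition~1 (freezing the ground formulas) first and then using it as the main tool for condition~2 (killing good trace loops).

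For condition~1, the key observation is that a ground formula can never be removed during play. The deterministic and ground rules only add formulas and never discard the principal one, repeating non-modal moves change nothing, and the only genuine weakenings available in the \sysinf-game are the terminal ones, which end the play; hence the sole way a formula disappears during play is through the narrow modal rule, which by design erases only formulas attached to \emph{non-original} nominals and leaves every ground formula in place. So along any play the set of ground formulas is monotonically non-decreasing, and by Proposition~\ref{p:only-original} together with the finiteness of the Fischer--Ladner closure (and Proposition~\ref{p:boundedpp}) only finitely many ground formulas can ever occur. Now let $\tau$ be a winning strategy for \plrsat{} from $\{\at{i}\rho\}$, which by Proposition~\ref{p:determinacy} we may take to be finite-memory, and consider the tree of $\tau$-guided partial plays (\plrsat{} following $\tau$, \plrval{} free). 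If every node had a descendant with a strictly larger ground-formula set, we could build an infinite strictly increasing chain of subsets of a finite set, which is absurd; so there is a node $v^\ast$ all of whose descendants carry exactly the ground formulas of $v^\ast$. Taking $\Phi$ to be the label of $v^\ast$ and $\sigma$ the restriction of $\tau$ to the subtree at $v^\ast$ yields a finite-memory winning strategy satisfying condition~1; and since $\bfi$ is original, $\at{i}\rho$ is a ground formula and so lies in $\Phi$.

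For condition~2 I would first exploit that, once the ground formulas are frozen, the game is essentially finite-state: by Propositions~\ref{p:narrow-game} and~\ref{p:boundedpp} every sequent consists of the fixed ground part of $\Phi$ together with the formulas of at most one non-original nominal drawn from the finite closure, so up to renaming of that nominal only finitely many sequents occur. The crucial leverage provided by condition~1 is that each ground formula appearing in a loop is present at \emph{every} $\sigma$-position, so the junction formulas shared by consecutive trace segments are common to all of the plays $\pi_1,\dots,\pi_n$ witnessing the loop. The aim is then to show that a finite-memory $\sigma$ admits no good trace loop.

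The heart of the matter, and the step I expect to be the main obstacle, is that a good trace loop is matched only at the level of \emph{formulas}: consecutive segments agree on the junction ground formula but live on different plays, so the loop does not close up into a single-play cycle, and the mere fact that $\sigma$ is winning (which only forbids good traces on individual plays) is not by itself enough. To bridge this gap I would extract from the finite-memory winning strategy a progress measure, i.e.\ a ranking of the trace-tracking product states certifying that on every $\sigma$-play the highest infinitely-unfolded variable of any trace is a $\mu$-variable, and then push this ranking down to a \emph{position-independent} rank $\hat r$ on the finitely many ground formulas, for instance by minimising over the positions carrying a given ground formula; condition~1 is exactly what makes this well defined across all plays. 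One then checks that a trace segment cannot increase the ground rank of its endpoints, while a segment whose highest unfolded variable is the $\nu$-variable $x$ of the loop forces a strict decrease. Chaining the segments around the loop would give $\hat r(g_1) \geq \cdots \geq \hat r(g_1)$ with a strict inequality, a contradiction, so no good trace loop can exist.

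The delicate points to discharge are precisely the well-definedness and monotonicity of the ground-level rank and the strict-decrease property for $x$-dominant segments, since the rank natively depends on the full position (including the non-ground nominal) rather than on the ground formula alone. As a fallback, in case this direct argument proves too rigid, I would instead eliminate ``loopable'' $\nu$-variables one at a time in order of decreasing seniority, refining $\sigma$ at each stage so as to block regeneration of the current top variable while preserving condition~1 and the winning property; this terminates because there are only finitely many variables and each refinement strictly reduces the set of loopable ones without creating more senior ones.
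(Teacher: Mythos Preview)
Your treatment of condition~1 is essentially the paper's: ground formulas are monotone along plays, there are only finitely many of them, so some position in the $\tau$-tree has a stable ground set; restrict there. Fine.

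For condition~2 your main line diverges from the paper, and the gap you flag is real rather than merely delicate. A progress measure for the winning strategy lives on pairs (position, traced formula); the monotonicity and strict-decrease properties you need hold \emph{along a single play}. When you collapse to a position-independent rank $\hat r(g) := \min_q r(q,g)$, the inequality you want is $\hat r(g_2) \le \hat r(g_1)$ for a segment from $g_1$ to $g_2$, but all the progress measure gives you is $r(p_2,g_2) \le r(p_1,g_1)$ for the particular endpoints $p_1,p_2$ on that play. From this you get $\hat r(g_2) \le r(p_1,g_1)$, which need not be $\le \hat r(g_1)$ since the minimum for $g_1$ may be attained at some other position. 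So the chaining argument around the loop does not close. Your fallback (peel off $\nu$-variables one by one) is only a heading, not an argument: you do not say how ``refining $\sigma$ to block regeneration of the current top variable'' is to be done while preserving the winning property.

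The paper's route is more elementary and avoids ranks entirely. The key observation is that, once condition~1 holds, the junction formulas of a loop are ground and hence present at \emph{every} $\sigma$-position; therefore any trace segment ending in $g$ can be continued by \emph{silent steps} (repeating $g$) for as long as you like. This lets you serialise the scattered segments onto a single play: if loops of a fixed ``kind'' (the tuple of start/end ground formulas and unfolded-variable sets of each segment) persist in every continuation, then repeatedly pick a segment of the next type in the cycle, pad to it with silent steps, and extend the current play through it. If this never terminates you have built one infinite $\sigma$-play carrying a good trace, contradicting that $\sigma$ is winning; hence some finite extension kills all loops of that kind. Since there are only finitely many kinds (finitely many ground formulas and variable sets), iterate. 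What condition~1 buys here is precisely the ability to pad, which is what lets you replace your cross-play rank comparison by a within-play trace.
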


\begin{proof}
We first prove the following claim:
\begin{claim}
There exists some sequent $\Phi$ such that:
\begin{itemize}
\item $\at{r}\rho \in \Phi$,
\item \plrsat{} has a winning strategy $\sigma$ in the \sysinf-game at the starting position $\Phi$,
\item For every sequent $\Gamma$ that appears in some $\sigma$-guided partial play, the ground formulas appearing in $\Gamma$ are the same as the ground formulas in  $\Phi$.  
\end{itemize}
\end{claim}

\begin{pfclaim}
Let $\tau$ be the winning strategy assumed to exist for \plrsat. First note that the ground formulas appearing in $\tau$-guided partial plays in the \sysinf-game are increasing in the sense that, whenever $\Gamma'$ appears later than $\Gamma$ in a partial play, all ground formulas in $\Gamma$ are also in $\Gamma'$. This is  because the only admissible rule application that can remove a ground formula is a terminal application of the weakening rule, the premise of which is an axiom. Such applications of weakening never happen in any $\tau$-guided partial play, since such a play would be a loss for \plrsat.

We construct a series of $\tau$-guided partial plays $\pi_0,\pi_1,\pi_2...$, where each $\pi_i$ is an initial segment of $\pi_{i+1}$. For each $i$ we let $G_i$ be the set of ground formulas appearing on the last position of $\pi_i$. We shall maintain the invariant that, for all proper initial segments $\pi'$ of $\pi_{i + 1}$, the ground formulas appearing in the last sequent of $\pi'$ are contained in $G_{i}$.   Let $\pi_0$ be the start position of the \sysinf-game. Suppose that $\pi_i$ has been constructed. If there is no $\tau$-guided partial play $\pi'$ extending $\pi_i$ in which the last sequent contains ground formulas not in $G_i$, then we are done: for all $\tau$-guided partial plays extending this play, the ground formulas appearing in all sequents must be equal to $G_i$, and $\tau$ provides a winning strategy for $\plrsat$ in the \sysinf-game for the label of $\pi_i$. If there is some $\tau$-guided partial play $\pi'$ extending $\pi_i$ in which the last sequent contains ground formulas not in $G_i$, then just pick $\pi_{i+1}$ to be its smallest initial segment for which this holds. This procedure must eventually terminate, since otherwise we get an infinite and strictly increasing sequence of sets of ground formulas $G_0 \subset G_1 \subset G_2 ...$, which is impossible since there are only finitely many possible ground formulas by Proposition \ref{p:boundedpp}. 
\end{pfclaim}

Now let $\Phi$ and $\sigma$ be as in the previous claim. Given a $\sigma$-guided play $\pi$, let $\uparrow\pi$ be the set of partial plays $\pi'$ such that $\pi \cdot \pi'$ is a $\sigma$-guided partial play. Our aim is to find a $\sigma$-guided play $\pi$ such that $\uparrow \pi$ does not contain any good trace loops; we can then simply take the label of $\pi$ to the sequent claimed to exist in the statement of the Proposition, and we obtain the required winning strategy for \plrsat{} by assigning the move $\sigma(\pi \cdot  \pi')$ to a partial play $\pi'$.  

Given a good trace loop $\langle \vec{t}_1,...,\vec{t}_n\rangle$, let its \emph{kind} be the set of triples: 
$$\{(\ats{i}{1}\varphi_1, X_1, \ats{j}{1}\psi_1),...,(\ats{i}{n}\varphi_n, X_n, \ats{j}{n}\psi_n)\}$$
such that for each $m \in \{1,...,n\}$, the trace $\vec{t}_m$ begins with $\ats{i}{m}\varphi_m$, ends with $\ats{j}{m}\psi_m$, and the variables unfolded on $\vec{t}_m$ are precisely the members of the set $X_m$. Since each trace in a good trace loop begins and ends with a ground formula, and since there are only finitely many ground formulas, there are finitely many kinds of good trace loops. We shall show how to find a $\sigma$-guided play $\pi$ such that $\uparrow \pi$ does not contain any good trace loops of a given kind. By simply repeating the argument, we can then kill off all the kinds of good trace loop one by one.

So let the kind $K$ be $\{(\ats{i}{0}\varphi_0, X_0, \ats{j}{0}\psi_0),...,(\ats{i}{n-1}\varphi_{n-1}, X_{n-1}, \ats{j}{n-1}\psi_{n-1})\}$. We construct a sequence of partial plays $\pi_0 \sqsubseteq \pi_1 \sqsubseteq \pi_2$... as follows. If the set of all $\sigma$-guided plays does not contain any good trace loops of kind $K$, we are done.  Otherwise, let $\pi_0$ be some play on which the part $(\ats{i}{0}\varphi_0, X_0, \ats{j}{0}\psi_0)$ appears, which must exist. Note that we have a partial trace $\vec{t}_0$ on $\pi_0$ leading from  $\ats{i}{0}\varphi_0$ to $ \ats{j}{0}\psi_0$ on which exactly the variables $X_0$ were unfolded; since the first formula is a ground formula, and these are the same in all positions in all $\sigma$-guided plays, we can simply ``pad'' the partial trace with silent steps repeating the same formula to extend it to a trace on the whole play $\pi_0$. Now we repeat the procedure: if $\uparrow\pi_0$ does not contain any good trace loop, then we are done. Otherwise, we can extend $\pi_0$ in the same way to a partial play $\pi_1$ containing a trace $\vec{t}_1$ appearing \emph{after} $\pi_0$, such that $\vec{t}_1$ starts with $\ats{i}{1}\varphi_1$, ends with $\ats{j}{1}\psi_1$ and the variables unfolded are precisely $X_1$. Then since $\vec{t}_0 $ and $ \vec{t}_1$ end and start respectively with the same ground formula, and ground formulas stay the same, they can be linked together by ``padding with silent steps'' repeating this formula to form a trace on $\pi_1$. The idea is visualized in Figure \ref{f:loops}, showing instances of trace loops of the same kind, containing two types of traces represented by solid and dashed lines respectively. The shaded lines represent ``silent traces'' repeating the formulas $\mathsf{A}$ and $\mathsf{B}$ respectively to connect the fragments to a trace, and the shaded area within each sequent shows the part of the sequent consisting of ground formulas. 
\begin{figure}
\begin{center}
\includegraphics[width=8cm, height=8cm]{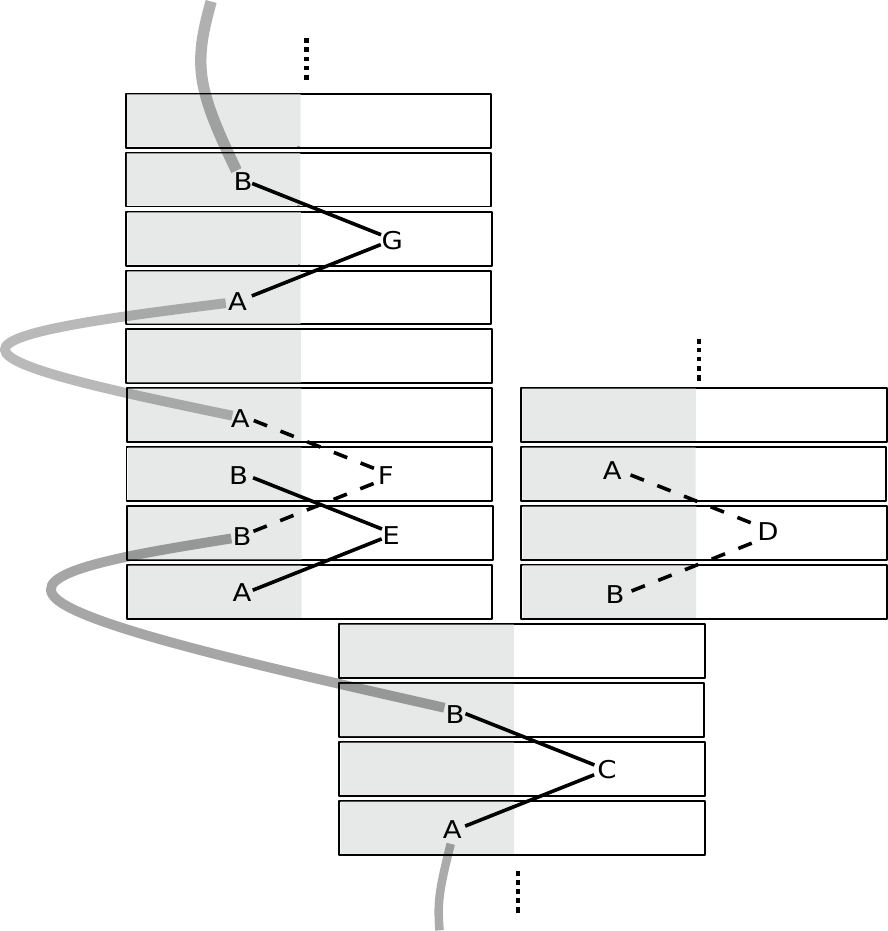}
\end{center}
\caption{Scattered fragments of trace loops are connected to form a trace.}
\label{f:loops}
\end{figure}
It is not hard to see that, if this procedure never terminates, then we end up building an infinite $\sigma$-guided play containing a good trace, which is a contradiction since $\sigma$ was a winning strategy. So the procedure terminates with some $\pi_m$, and the proof is finished.  
\end{proof}

\subsection{Completeness}

We are now ready to prove soundness and completeness of \sysinf.

\begin{theo}
\label{t:infcompl}
Let $\rho$ be any formula. The following are equivalent: (a) $\rho$ is valid, (b) $\plrval$ has a winning strategy in the \sysinf-game for $\at{r} \rho$, where $\mathsf{r}$ is some fresh nominal, (c) $\rho$ has a valid and frugal \sysinf-proof, (d) $\rho$ has a valid \sysinf-proof.
\end{theo}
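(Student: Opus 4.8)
The plan is to prove the cycle of implications (a) $\Rightarrow$ (b) $\Rightarrow$ (c) $\Rightarrow$ (d) $\Rightarrow$ (a), since this gives all four equivalences economically. The two directions that carry genuine content are (a) $\Rightarrow$ (b) (completeness) and (d) $\Rightarrow$ (a) (soundness), while (b) $\Rightarrow$ (c) is where Rabin's Basis Theorem and the trace-loop machinery get used to extract a \emph{finite} (frugal) proof from a winning strategy, and (c) $\Rightarrow$ (d) is immediate since frugality is an extra constraint on valid proofs.

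For the soundness direction (d) $\Rightarrow$ (a), I would argue contrapositively: suppose $\rho$ is not valid, so by Proposition on game/denotational agreement there is a pointed model $(\model,w)$ with $\model,w \not\Vdash \rho$, i.e. $\plrsat$ has a winning strategy in the evaluation game at $(w,\rho)$. Given any valid $\sysinf$-proof $\Pi$ of $\at{r}\rho$, I would play a refutation game \emph{through} the proof: starting at the root and using the model to falsify the current sequent (every formula in every sequent is false at its named point), I build an infinite branch together with a trace following a falsified formula. The evaluation-game winning strategy for $\plrsat$ tells us how to follow traces so that the highest infinitely-unfolded variable is a $\mu$-variable, contradicting the existence of a good trace on that branch. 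The bookkeeping for nominals and the $\mathsf{Mod}$/$\mathsf{Eq}$ rules needs care, but the skeleton is the standard trace-based soundness argument for circular systems.

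For completeness (a) $\Rightarrow$ (b), I would again use the contrapositive together with the determinacy result (Proposition~\ref{p:determinacy}): if $\plrval$ has no winning strategy in the $\sysinf$-game for $\at{r}\rho$, then $\plrsat$ does, and I construct a counter-model to $\rho$ from a $\plrsat$-winning strategy. This is exactly where Lemma~\ref{l:better-strategy} is deployed: I first replace the strategy by one $\sigma$ whose $\sigma$-guided plays share a fixed set of ground formulas and contain \emph{no good trace loops}. The model is built so that its points are the (non-original and original) nominals occurring in clean $\sigma$-guided plays, with accessibility read off from applications of the narrow modal rule, the valuation read off from the literals forced into $\mathsf{Th}(\bfi,\sigma)$ (Proposition~\ref{p:linearity}), and nominal values assigned via the tagging map $F$. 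The absence of good trace loops is precisely what guarantees that $\plrval$ wins the \emph{evaluation} game in this model — any infinite evaluation play tracing a $\nu$-success would assemble into a good trace loop across the plays witnessing the recurring ground formulas, contradicting condition (2) of the lemma.

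The step I expect to be the main obstacle is (b) $\Rightarrow$ (c): turning a \plrval-winning strategy in the $\sysinf$-game into a valid \emph{frugal} proof. The strategy yields a possibly non-regular winning proof-tree, and the point is to make it finite-up-to-subtrees. Here I would invoke Rabin's Basis Theorem: the set of $\plrval$-winning $\sysinf$-proofs for $\at{r}\rho$ (with sequents taken up to renaming of the single non-original nominal, using Proposition~\ref{p:narrow-game} to keep the alphabet finite) is MSO-definable as a tree language, since both the local correctness of rule applications and the global good-trace condition are $\omega$-regular/MSO-expressible parity conditions on branches. A non-empty such language contains a regular tree, and by Proposition~\ref{p:folk} a regular tree is the unfolding of a finite tree with back-edges, hence only finitely many distinct sequents appear — that is, the proof is frugal. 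The delicate part is verifying that the good-trace winning condition really is MSO-definable over the tree of sequents and that passing to a regular tree preserves validity; this requires encoding traces as an automaton running along branches and checking that the winning condition survives the quotient by renaming of nominals. The remaining implication (c) $\Rightarrow$ (d) is then trivial.
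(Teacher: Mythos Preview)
Your cycle and the two ``hard'' directions (a) $\Rightarrow$ (b) and (d) $\Rightarrow$ (a) are set up the same way as in the paper, including the use of Lemma~\ref{l:better-strategy} for the counter-model construction and the shadow-play argument for soundness. Two small slips: in your (a) $\Rightarrow$ (b) paragraph you wrote that the absence of good trace loops guarantees that $\plrval$ wins the evaluation game, but of course you mean $\plrsat$ (you want the constructed model to be a \emph{counter}-model); and the map $F$ is not used to assign nominal values in the model --- it only tags nominals with the play where they were introduced so that ``clean'' plays are well-defined, while the assignment $A$ comes from taking representatives of the $\equiv$-equivalence classes generated by the $\bfi \napprox \bfj$ formulas that occur.

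The genuine divergence from the paper is in (b) $\Rightarrow$ (c), which you flag as the main obstacle and attack with Rabin's Basis Theorem. The paper does \emph{not} use Rabin here at all: it simply reads off the proof tree from $\plrval$'s strategy and observes that, by design of the narrow modal rule (Proposition~\ref{p:narrow-game}, which you already cite), every sequent contains at most one non-original nominal; a direct renaming of that single nominal (alternating between two fixed names, say, to respect freshness in $\mathsf{Mod}$) then yields a proof with finitely many distinct sequents --- i.e.\ a frugal proof --- in one line. Your route via MSO-definability and regularisation would also work, but it overshoots: frugality only asks for finitely many \emph{labels}, not finitely many \emph{subtrees}, and the observation you feed into Rabin to get a finite alphabet is already enough to conclude frugality directly. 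Rabin's Basis Theorem does appear in the paper, but only later (Proposition~\ref{p:regularization}), where regularity of the annotated $\sysann$-proof is genuinely needed to extract a finite proof with back-edges.
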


\begin{proof}
We prove (a) $\Rightarrow$ (b) $\Rightarrow$ (c) $\Rightarrow$ (d) $\Rightarrow$ (a). 

\medskip

\textbf{(b) $\Rightarrow$ (c) $\Rightarrow$ (d):} The step (c) $\Rightarrow$ (d) is trivial. For $(b) \Rightarrow (c)$, it is easy to see that we can read off an \sysinf-proof for the root formula from any winning strategy for \plrval{} in the \sysinf-game. By Proposition \ref{p:narrow-game}, every sequent in this proof will contain at most one non-original nominal. So by suitably renaming nominals we can easily turn the proof into a frugal one. 

\medskip

\textbf{(a) $\Rightarrow$ (b):} We prove this by contraposition. Suppose  there is a winning strategy for \plrsat{} in the \sysinf-game for $\at{r}\rho$. Let $\Phi$ be a set of ground formulas containing $\at{r}\rho$ and let $\sigma$ be a winning strategy for $\plrsat$ in the \sysinf-game for premise $\Phi$ such that the ground formulas stay the same in every $\sigma$-guided play, and the set of $\sigma$-guided plays contains no good trace loops. We shall construct a countermodel to (the disjunction of) $\Phi$, which gives a countermodel to $\rho$ since $\at{r}\rho \in \Phi$.

We construct the model $M = (W,R,A,V)$  using the strategy $\sigma$ as follows.  Let $N$ be the set of nominals $\bfi$ such that $\bfi$ appears in some position in some clean $\sigma$-guided play $\pi$, and let $\equiv$ be the smallest equivalence relation over $N$ containing all pairs $(\bfi,\bfj)$ for which $\bfi \napprox \bfj$  appears in some position in some  clean $\sigma$-guided play $\pi$. 
\begin{claim}
For each $\bfi$, the equivalence class $[\bfi]$ modulo $\equiv$ is either a singleton or contains at least one of the nominals in $\varphi$.
\end{claim}

\begin{pfclaim}
By Proposition \ref{p:identity}.
\end{pfclaim}

Motivated by this claim, we call a nominal \emph{representative} if its equivalence class is a singleton, or it is the $\prec$-smallest original nominal belonging to its equivalence class. We let $W$ be the set of representative members of $N$. We set $\bfi R \bfj$ iff there is some $\bfj' \equiv \bfj$ and a clean $\sigma$-guided play in which $\bfj'$ is introduced by an application of the modal rule to the nominal $\bfi$. Set $A(\bfi)$ to be the representative of $[\bfi]$. Finally, for a representative $\bfi$ set $\bfi \in V(p)$ iff $\at{i} \neg p$ appears on some clean $\sigma$-guided play. We shall show that $M$ is a counter-model to the sequent $\Phi$. First we prove a few auxiliary claims. 

Let $\sim$ be the binary relation over $W$ defined by setting $\bfi \sim \bfj$ iff the formula $\bfi \napprox \bfj$ appears on some clean $\sigma$-guided play (or equivalently, on all positions in all $\sigma$-guided plays).
\begin{claim}
\label{simgood}
The restriction of the relation $\sim$ to the original nominals in $W$ is symmetric and transitive.
\end{claim}

\begin{pfclaim}
Symmetry follows immediately since some instance of the ground rule will eventually apply in the form of  $\mathsf{Com}$, given that $\bfi \sim \bfj$ are original nominals. Transitivity follows in the same way, but this time using the ground rule in the form of $\mathsf{Eq}$. 
\end{pfclaim}

\begin{claim}
\label{eqsim}
Given two original nominals $\bfi, \bfj$, if $\bfi \equiv \bfj$ then $\bfi \sim \bfj$. 
\end{claim}

\begin{pfclaim}
Since $\bfi \equiv \bfj$, there is a tuple $(\bfk_1,...,\bfk_n)$ of nominals such that $\bfk_1 = \bfi$, $\bfk_n = \bfj$, and for each $m < n$ either $\bfk_{m} \napprox \bfk_{m + 1}$ or $\bfk_{m + 1} \napprox \bfk_m$ appears on some clean $\sigma$-guided play.  By Proposition \ref{p:identity}, no two consecutive nominals in the list can both be non-original nominals. So by  the transitivity part of Claim \ref{simgood}, it suffices to prove the following two items:
\begin{enumerate}
\item If $m < n$, and $\bfk_m,\bfk_{m+1}$ are both original nominals, then $\bfk_m \sim \bfk_{m+1}$.
\item If $m + 1 < n$, $\bfk_m$ and $\bfk_{m +2}$ are original nominals and $\bfk_{m + 1}$ a non-original nominal, then $\bfk_m \sim \bfk_{m + 2}$. 
\end{enumerate}
Item (1) is immediate from the symmetry part of Claim \ref{simgood}. For item (2), let $\bfk_m,\bfk_{m + 1}, \bfk_{m + 2}$ be as described. Since $\bfk_{m + 1}$ is a non-original nominal, by Proposition \ref{p:identity} the only possibility is that the formulas $\bfk_{m + 1} \napprox \bfk_m$ and $\bfk_{m + 1} \napprox \bfk_{m + 2}$ both appear on clean $\sigma$-guided plays. By definition of a $\sigma$-guided play, and since no nominal is introduced twice in two clean $\sigma$-guided plays, this is only possible if they appear on the same position in some $\sigma$-guided play  $\pi$. Say that $\bfk_{m} \prec \bfk_{m + 2}$, since the other case is symmetric. But then \plrval{}  will eventually have to play an application of the ground rule leading to a node whose label contains $\bfk_m \napprox \bfk_{m + 2}$. In fact since the ground formulas are the same in all labels, this means $\bfk_m \napprox \bfk_{m + 2}$ must belong to the label of every play. So $\bfk_m \sim \bfk_{m + 2}$ as required. 
\end{pfclaim}

\begin{claim}
\label{transfer}
Suppose that $\bfi \equiv \bfj$ and that $\bfi,\bfj$ are original nominals. Then for any position $u$ appearing in a clean $\sigma$-guided play, and any $\theta$, if $\at{i}\theta$ belongs to $u$ then so does $\at{j}\theta$. 
\end{claim}

\begin{pfclaim}
By Claim \ref{eqsim} we have $\bfi \sim \bfj$, hence  $u$ is of the form $\Gamma,\at{i}\theta,\bfi \napprox \bfj$. If $\at{j}\theta$ is not in the label of $u$ then \plrval{} must eventually play the ground rule leading to a node whose label contains $\at{j}\theta$, and this contradicts the constancy of the ground formulas in all positions appearing on $\sigma$-guided plays. 
\end{pfclaim}

\begin{claim}
\label{strong-transfer}
Suppose that $\bfi \equiv \bfj$ and that $\bfj$ is an original nominal. Then for any basic position $u$ appearing in a clean $\sigma$-guided play, and any $\theta$, if $\at{i}\theta$ belongs to $u$ then so does $\at{j}\theta$. 
\end{claim}

\begin{pfclaim}
By Claim \ref{transfer} we only need to consider the case where $\bfi$ is non-original. Suppose $\pi$ is a clean $\sigma$-guided partial play whose label contains $\at{i}\theta$ and suppose $\bfi \equiv \bfj$. Clearly there must be some nominal $\bfj' \sim \bfj$ such that $\bfi \napprox \bfj'$ appears in some clean $\sigma$-guided play, and by Proposition \ref{p:only-original} the nominal $\bfj'$ must be original.  By Proposition \ref{p:linearity} $\bfi \napprox \bfj$ and $\at{i}\theta$ both belong to the label of $\sitail{\pi}$, and it follows using the ground rule and constancy of ground formulas that $\at{j'}\theta$ must belong to the label of $\sitail{\pi} $ and hence of $\pi$. Since $\bfj' \sim \bfj$ it follows by Claim \ref{transfer} that $\at{j}\theta$ belongs to the label of $\pi$. 
\end{pfclaim}

\begin{claim}
\label{finishtrace}
Suppose that $\vec{t}$ is some partial trace on a clean partial $\sigma$-guided play $\pi$, such that the last element of the trace $\vec{t}$ is of the form $(\pi,\at{k'}\psi)$ where $A(\bfk') = \bfk$.  If $\psi$ is of the form $\Box \theta$ or $\Diamond \theta$, then there is a clean $\sigma$-guided play $\upsilon$ extending $\pi$ and a partial trace on $\upsilon$ of the form $(\pi,\at{k'}\psi) \cdot \vec{u} \cdot (\upsilon,\at{k}\psi)$, which contains no fixpoint unfoldings. 
\end{claim}

\begin{pfclaim}
The interesting case is when $\bfk' \neq \bfk$, in which case $\bfk$ must be an original nominal.  
Since box- and diamond-formulas are never decomposed by the derministic rule or the ground rule, it is easy to see (using Proposition \ref{p:linearity}) that $\sitail{\pi}$ contains a trivial trace ending with the same formula $\at{k'}\psi$, and there must be some original nominal $\bfk'' \sim \bfk$ such that $\bfk' \equiv \bfk''$. By Proposition \ref{p:linearity} again, $\bfk' \napprox \bfk''$ belongs to the label of $\sitail{\pi}$, and so does $\bfk'' \napprox \bfk$ by constancy of the ground formulas.  By Claim \ref{strong-transfer} the formula $\at{k''}\psi$ belongs to the label of $\sitail{\pi}$, hence $\at{k}\psi$ does, and we can now let \plrval{} play a repeating application of the  $\mathsf{Eq}$-rule to construct a trace ending with this formula. 
\end{pfclaim}

We now proceed to show that the sequent $\Phi$ is not valid in $M$.  
Pick any formula $\at{i}\varphi \in \Phi$. We shall construct a winning strategy $\sigma'$ for \plrsat{} in the evaluation game  in $M$ at the starting position $(A(\bfi),\varphi)$.  Inductively, as an invariant we associate  with each partial $\sigma'$-guided partial play of $\pi'$ of the form:
$$(\bfj_1, \psi_1) \cdot \vec{p} \cdot (\bfj_n,\psi_n)$$
a sequence of non-empty partial traces $\langle\vec{t}_1,...,\vec{t}_n\rangle$ such that each of these traces $\vec{t}_k$ belongs to some clean $\sigma$-guided partial play $\pi_k$, and such that the following conditions hold:
\begin{description}
\item[I1:] The last element of each trace $\vec{t}_k$ is of the form $(\pi_k,\at{j_k'}\psi_k)$ where $A(\bfj_k') = \bfj_k$.  Furthermore, if $\psi_k$ is of the form $\Box \theta$ or $\Diamond \theta$, then $\bfj_k' = \bfj_k$. 
\item[I2:] For each $k < n$, if the last element of $\vec{t}_k$ is $(\pi_k,\at{j_k'}\psi_k)$ then the  first element of $\vec{t}_{k + 1}$ is of the form $(\pi_{k + 1},\at{j_k'}\psi_k)$. Furthermore, if $\bfj_k$ is not an original nominal then $\pi_k = \pi_{k + 1}$.
\item[I3:] For $k < n$, a fixpoint unfolding occurs on the trace $\vec{t}_{k + 1}$ iff the same fixpoint is unfolded on $(\bfj_k,\psi_k)\cdot (\bfj_{k+1},\psi_{k + 1})$.   
\end{description}
Suppose we are given a clean $\sigma'$-guided partial play $\pi'$ of the form $(\bfj_1, \psi_1) \cdot \vec{p} \cdot (\bfj_n,\psi_n)$, and that the associated sequence of partial traces $\langle\vec{t}_1,...,\vec{t}_n\rangle$ has been constructed. We shall show that if the last position on $\pi'$ belongs to \plrsat, then we can define a move for which the invariant (I1) -- (I3) can be maintained, and if the last position belongs to \plrval{} then the invariant can be maintained for any possible move. This is proved by a case distinction as to the shape of the last position.

\paragraph{Case: $\psi_n$ is a literal.}

In this case, there are no possible moves, so we need to check that the losing player is not \plrsat. This happens in four possible cases:
\begin{enumerate}
\item $\psi_n = p$ and $\bfj_n \in V(p)$.
\item $\psi_n = \neg p$ and $\bfj_n \notin V(p)$.
\item $\psi_n = \bfi$ and $\bfj_n = A(\bfi)$.
\item $\psi_n = \neg \bfi$ and $\bfj_n \neq A(\bfi)$.
\end{enumerate}

In case (1), since $\bfj_n \in V(p)$ it must hold that there is some $\sigma$-guided play  containing $\ats{j}{n}\neg p$. But then, since this is a ground formula,  $\ats{j}{n}\neg p$ must belong to every position in every $\sigma$-guided play. In particular, the position $u_n$ contains $\ats{j}{n}\neg p$. But the last element of the trace $\vec{t}_n$ is of the form $(\pi_n,\ats{j'}{n} p)$ where $A(\bfj_n') = \bfj_n$. Hence $\bfj_n \equiv \bfj'_n$, and by Claim \ref{strong-transfer} the label of $\pi_n$ also contains $\ats{j}{n} p$. Hence it contains the axiom $\ats{j}{n}p, \ats{j}{n}\neg p$. A terminal application of  weakening gives a lost $\sigma$-guided play, which is a contradiction. In case (2), we get that $\ats{j'}{n} \neg p $ belongs to the label of $\pi_n$, and since again we have $\bfj_n \equiv \bfj'_n$ it follows by Claim \ref{strong-transfer} that  the label of $\pi_n$ also contains $\ats{j}{n} \neg p$. So $\bfj_n \in V(p)$ by definition of $V$, contradiction.
In case (3), the last element of the trace $\vec{t}_n$ is of the form $(\pi_n,\ats{j'}{n} \bfi)$ where $A(\bfj_n') = \bfj_n = A(\bfi)$. So $\bfj'_n \equiv \bfi$. Since the label of $\pi_n$ contains $\ats{\bfj'}{n} \bfi$, and since $\bfi$ is an original nominal,   the label of $\pi_n$ contains $\at{i}\bfi$ also. But this is an axiom, so  a terminal application of  weakening gives a lost $\sigma$-guided play, contradiction.
In case (4), we get that $\ats{j'}{n} \neg \bfi $ belongs to the label of $\pi_n$, and the short-hand for this formula is $\bfj'_n \napprox \bfi$. So $\bfj'_n \equiv \bfi$. By assumption $A(\bfj_n') = \bfj_n$, so $\bfj_n \equiv \bfj'_n \equiv \bfi$. Since $\bfj_n$ is a representative nominal, this is only possible if $\bfj_n = A(\bfi)$, contradiction. 

\paragraph{Case: $\psi_n = \theta \vee \theta'$.}  The last position of  $\pi'$ then belongs to $\plrval$. There are two possible $\sigma'$-guided continuations of $\pi'$, one for each disjunct. We consider how to maintain the invariant (I1)-- (I3) for the case of the first disjunct, since the other case is treated the same way.  The trace $\vec{t}_n$ ends with $(\pi_n, \ats{j'}{n}(\theta \vee \theta'))$ where the label of $\pi_n$ is some sequent $\Gamma, \ats{j'}{n}(\theta \vee \theta')$. We may assume without loss of generality that the trace $\vec{t}_n$ has been chosen so that $\sitail{\pi_n}$ does not contain any longer trace of the form:
$$(\pi_n, \ats{j'}{n}(\theta \vee \theta')) \cdot (\pi_n \cdot u_1, \ats{k}{1}(\theta \vee \theta'))\cdot ... \cdot (\pi_n \cdot u_1 \cdot ...\cdot u_m,\ats{k}{m}(\theta \vee \theta'))$$
such that $A(\bfj'_n) = A(\bfk_1) = ... = A(\bfk_m)$, since if such a trace exists then a longest one exists in $\sitail{\pi_n}$, and this trace still satisfies the invariant (I1) -- (I3).

With this assumption in case, consider two cases: either $\sitail{\pi_n}$ is empty, or not. In the first case, the deterministic rule cannot apply  since it must be applied first whenever possible, and this means that the $\vee$-rule only applies as a repeating rule. So $\ats{j'}{n}\theta$ and $\ats{j'}{n}\theta'$ both belong to the label of $\pi_n$ already. The ground rule cannot apply either, since it would produce a longer trace of the previously described form in $\sitail{\pi_n}$, and we assumed $\vec{t}_n$ was the longest such trace. So we are free to let \plrval play a repeating instance of the $\vee$-rule, producing a trace $(\pi_n,\ats{j'}{n}(\theta \vee \theta')) \cdot (\pi_n \cdot u, \ats{j'}{n}\theta)$. In the case where $\sitail{\pi_n}$ is non-empty, by our assumption on the trace $\vec{t}_n$ the only possibility is that the $\vee$-rule is immediately applied to $\ats{j'}{n}(\theta \vee \theta')$, yielding again a trace  $(\pi_n,\ats{j'}{n}(\theta \vee \theta')) \cdot (\pi_n \cdot u, \ats{j'}{n}\theta)$.

In either case, we want to add the trace to the sequence $\langle \vec{t}_1,...,\vec{t}_n\rangle$ so that the invariant (I1) -- (I3) is maintained. The conditions (I2) and (I3) are obviously preserved. The only possible problem occurs if $\theta$ is a box- or diamond-formula, but $\bfj_n' \neq \bfj_n$. In this case we appeal to Claim \ref{finishtrace} to extend the trace to a longer partial trace that ends with the formula $\ats{j}{n}\theta$, and still  does not contain any fixpoint unfoldings.

\paragraph{Case: $\psi_n = \theta \wedge \theta'$.} The last position of  $\pi'$ belongs to $\plrval$. By assumption the trace $\vec{t}_n$ ends with $(\pi_n, \ats{j'}{n}(\theta \wedge \theta'))$. We make a similar assumption as before, that $\vec{t}_n$ has been chosen so that   $\sitail{\pi_n}$ does not contain any longer trace of the form:
$$(\pi_n, \ats{j'}{n}(\theta \wedge \theta')) \cdot (\pi_n \cdot u_1, \ats{k}{1}(\theta \wedge \theta'))\cdot ... \cdot (\pi_n \cdot u_1 \cdot ...\cdot u_m,\ats{k}{m}(\theta \wedge \theta'))$$
such that $A(\bfj'_n) = A(\bfk_1) = ... = A(\bfk_m)$. If $\sitail{\pi_n}$ is empty, then only repeating applications of the $\wedge$-rule can apply and the deterministic rule and ground rule do not apply. So we can let \plrval{} play a repeating application of the $\wedge$-rule. Otherwise, the $\wedge$-rule is applied immediately in $\sitail{\pi_n}$. In either case the strategy $\sigma$ determines a unique $\sigma$-guided play $\pi_n \cdot u$ containing a trace either of the form $(\pi_n, \ats{j'}{n}(\theta \wedge \theta'))\cdot(\pi_n \cdot u, \ats{j'}{n}\theta)$ or of the form $(\pi_n, \ats{j'}{n}(\theta \wedge \theta'))\cdot(\pi_n \cdot u, \ats{j'}{n}\theta')$, and containing no fixpoint unfoldings.  In the first case we extend $\sigma'$ by taking $\sigma'(\pi') = (\bfj_n, \theta)$, and in the second case we take $\sigma'(\pi') = (\bfj_n, \theta')$. Finally, we appeal to Claim \ref{finishtrace} as before to find a possibly longer trace that can be added to maintain the invariant (I1)-- (I3).

\paragraph{Case: $\psi_n = \at{j}\theta$.} There is only one possible move, and only one $\sigma'$-guided continuation to consider. The argument is therefore a simpler version of the previous cases.  

\paragraph{Case: $\psi_n = \eta z. \theta(z)$.} Again there is only one $\sigma'$-guided continuation to consider, and we maintain the invariant (I1)-- (I3)  using Claim \ref{finishtrace}. We just need to note in this case that the shadow trace will contain an unfolding of the variable $z$. 

\paragraph{Case: $\psi_n = \Box \theta$.} The last position of  $\pi'$ belongs to \plrsat. By (I1), the last element of $\vec{t}_n$ is of the form $(\pi_n,\ats{j'}{n}\Box \theta)$ where $A(\bfj'_n) = \bfj_n$, and by (I2) we have $\bfj'_n = \bfj_n$. We can assume that the deterministic and ground rules and conjunction rules do not apply to the label of $\pi_n$, since otherwise we can follow $\sitail{\pi_n}$ until this holds.  We then continue as follows: consider an arbitrary play by \plrval{} applying the narrow modal rule to $\ats{j}{n}\Box \theta$, leading to a sequent containing $\at{k}{\theta}$, where the nominal $\bfk$ is chosen to make sure that the play is clean.  Let this extended play be called $\upsilon$. It is clear that $\bfj_n R A(\bfk)$, so we extend the strategy $\sigma'$ by letting \plrsat{} play $(A(\bfk),\theta)$. We can now continue the trace leading to $(\upsilon,\at{k}\theta)$ to maintain the invariants (I1) -- (I3)  using  Claim \ref{finishtrace} in the same manner as before. 

\paragraph{Case: $\psi_n = \Diamond \theta$.} The last position of  $\pi'$ belongs to \plrval. By (I1), the last element of $\vec{t}_n$ is of the form $(\pi_n,\ats{j'}{n}\Diamond \theta)$ where $A(\bfj'_n) = \bfj_n$, and by (I2) $\bfj'_n = \bfj_n$. Suppose that \plrval{} plays $(\bfk,\theta)$ where $\bfj_n R \bfk$. Then there must be some $\bfk' \in [\bfk]$ and some play $\upsilon$ such that $\bfk'$ is introduced in $\upsilon$ by an application of the modal rule to the nominal $\bfj_n$, the premise of which we take to be the last element of $\upsilon$, and we let $\upsilon'$ be the part of $\upsilon$ leading to and including the conclusion.   We make a case distinction as to whether $\bfj_n$ is an original nominal. 

If $\bfj_n$ is an original nominal, then since $\ats{j}{n}\Diamond \theta$ is a ground formula, it belongs to the label of every play, including the label of $\upsilon'$. This means that $\upsilon$ must have a partial trace of the form $(\upsilon',\ats{j}{n}\Diamond \theta) \cdot (\upsilon,\at{k}'\theta)$. We continue the play $\upsilon$ using  Claim \ref{finishtrace} as before to find an extended trace $\vec{t}_{n+1}$ satisfying the invariant (I1) -- (I3).

If $\bfj_n$ is not an original nominal, then  since the narrow modal rule deletes all non-original nominals except the one it introduces, this means that $\bfk'$ must be introduced at the last step in some play $\upsilon$ that extends $\pi_n$, on which the nominal $\bfj_n$ must survive until the corresponding application of the modal rule. So $\ats{j}{n}\Diamond \theta$ must belong to the conclusion of the rule application. So we can find a trace of the form $(\pi_n,\ats{j}{n}\Diamond \theta) \cdot \vec{u} \cdot (\upsilon,\at{k'}\theta)$ on $\upsilon$ (containing no fixpoint unfoldings). We extend this trace to a suitable shadow trace $\vec{t}_{n +1}$ satisfying the invariant (I1) -- (I3) using  Claim \ref{finishtrace} as before.

\medskip
To finish the proof of (a) $\Rightarrow$ (b), we have given a strategy $\sigma'$ to \plrsat{} in the evaluation game such that the invariant (I1) -- (I3) is maintained. The strategy $\sigma'$ ensures that \plrsat{} never gets stuck, and any lost infinite $\sigma'$-guided play is easily seen to produce either an infinite clean $\sigma$-guided shadow-play in the \sysinf-game containing a good infinite trace, or a good trace loop on the set of all clean $\sigma$-guided plays. In either case we get a contradiction, so $\sigma'$ is winning for \plrsat{} and therefore we have found a falsifying model for $\rho$. 

%
%
\medskip

\textbf{(d) $\Rightarrow$ (a):}
Suppose there exists a valid \sysinf-proof $\Pi$ for root formula $\at{i}\rho$. We assume without loss of generality that every non-original nominal appearing in $\Pi$ has a unique application of the modal rule associated with it, i.e. no nominal is introduced by two different applications of the modal rule. Any proof can be put in this normal form by suitably renaming nominals.  Given a model $M = (W,R,A,V)$, suppose for a contradiction that there exists a strategy $\sigma$ for \plrsat{} in the evaluation game for $\rho$ in $M$ which is winning at starting position $(w_0,\rho)$. By Proposition \ref{p:parity-games} we can assume that the strategy $\sigma$ is positional. We may also assume  that $A$ is only defined for the nominals $N$ appearing in $\rho$, since the value of other nominals does not affect the truth value of $\rho$. 

For each $k$ with $1 \leq k < \omega$ we shall construct the following data: 
\begin{enumerate}
\item A node $u_k$ in the proof tree $\Pi$. We will maintain the invariant that $u_{k + 1}$ is always 
a child of the node $u_k$.
\item An assignment $A_{k}$ extending $A$ with values for all non-original nominals appearing on the branch up to $u_k$. We maintain the invariant that assignments associated with any two $n,m < \omega$ agree on all nominals for which both are defined, and that for each  formula in the label of $u_k$ of the form $\at{i}\psi$ such that $\psi$ is a literal, we have $$M_k, A_k(\bfi) \nVdash \psi$$ where $M_k = (W,R,A_k,V)$.
\item For each partial trace $\vec{t} = (u_1,\ats{i}{1} \psi_1) ... (u_k,\ats{i}{k} \psi_k)$ from the root of $\Pi$ to $u_k$ \emph{such that $\psi_k$ is not a literal}, a $\sigma$-guided partial play $\pi_{\vec{t}} = (v_1, \psi_1)...(v_k,\psi_k)$ (possibly with some ``silent steps'' repeating the same position) such that for each $j \leq k$, $A_j(\bfi_j) = v_j$. Note that  it follows that if $\bfi_j$ is an original nominal then $A_j(\bfi_j) = v_j$.  We will maintain the invariant that these partial plays ``commute with restrictions'' in the sense that the play associated with an initial segment of a trace will be an initial segment of the play associated with that trace, and that the same fixpoint unfoldings occur the same number of times on $\vec{t}$ and $\pi_{\vec{t}}$.
\end{enumerate}
To start the induction, we take $u_1$ to be the root node $r$ of $\Pi$ whose only trace is the singleton trace $\vec{t}$ consisting of  $(r,\at{r}\rho)$. We set $A_1(\mathsf{r}) = w_0$, and $\pi_{\vec{t}}$ is the singleton partial play consisting of the position $(w_0,\rho)$. Given that $u_k$ has been defined, we first note that $u_k$ cannot be a leaf, since it would then have to be an axiom, and in this case we easily get a contradiction:  if the leaf is labelled $\bfi \approx \bfi$ then by the inductive hypothesis we get $M_k,A_k(\bfi) \nVdash \bfi$ which is impossible. If it is labelled with $\at{i}p,\at{i}\neg p$ then by the induction hypothesis we get $M_k,A_k(\bfi) \nVdash p$ and $M_k,A_k(\bfi)\nVdash \neg p$, which is impossible. 

So $u_k$ must be the conclusion of a rule application, and we make a case distinction as to which rule is applied. In each case we only consider the extensions of traces resulting from the rule application, and let it be understood that shadow plays corresponding to traces that simply repeat the last element are extended by a ``silent step''. 

\paragraph{Unfolding rule:}
This case is easy: just extend the shadow play for the trace leading to the principal formula by a fixpoint unfolding, and take $A_{k+1} = A_k$.

\paragraph{Weakening:} Trivial. 

\paragraph{Rule $\mathsf{Eq}:$}
In this case the node $u_k$ is labelled with $\Gamma,\at{i}\psi,\bfi \napprox \bfj$ and we pick $u_{k + 1}$ to be the unique child labelled $\Gamma,\at{i}\psi,\at{j}\psi,\bfi \napprox \psi$. We set $A_{k + 1} = A_k$. If $\psi$ is not a literal we reason as follows: since $\neg \bfj$ is a literal we have by the induction hypothesis $M_k,A_k(\bfi) \nVdash \neg \bfj$ so  $A_{k}(\bfi) = A_k(\bfj)$, and the shadow play corresponding to a new trace ending with $(u_{k+1},\at{j}\psi)$ can be taken to extend the shadow play for the initial segment ending with $(u_k,\at{i}\psi)$ with a silent step. If $\psi$ is a literal, then we have $A_{k}(\bfi) = A_k(\bfj)$ again, and $M_k,A_k(\bfi) \nVdash \psi$ by the induction hypothesis, hence $M_k,A_k(\bfj)\nVdash \psi$ as well. 

\paragraph{Rule $\mathsf{Com}$:}
In this case $u_k$ is labelled $\Gamma,\bfi \napprox \bfj$ and the induction hypothesis gives $M_k,A_k(\bfi) \nVdash \neg \bfj$, so $A_k(\bfi) = A_k(\bfj)$. We take $u_{k + 1}$ to be the unique child labelled $\Gamma,\bfi \napprox \bfj, \bfj \napprox \bfi$, and the induction hypothesis clearly carries over with $A_{k + 1} = A_k$. 

\paragraph{Rule $\vee$:}
This rather trivial case is left to the reader. 

\paragraph{Rule $\wedge$:}
In this case the label of $u_k$ is $\Gamma,\at{i}(\varphi \wedge \psi)$. Since $\varphi \wedge \psi$ is not a literal, every trace leading to the principal formula of the rule application has a $\sigma$-guided shadow play. So the position $(A_k(\bfi),\varphi \wedge \psi)$ is winning for \plrsat, and since we assumed that $\sigma$ was a positional strategy it determines a choice $$\sigma(A_k(\bfi),\varphi \wedge \psi) \in \{(A_k(\bfi),\varphi), (A_k(\bfi),\psi)\}.$$
Suppose $\sigma(A_k(\bfi),\varphi \wedge \psi) = (A_k(\bfi),\varphi)$, since the other case is entirely analogous. We take $u_{k + 1}$ to be the corresponding premise of the application of the $\wedge$-rule, labelled $\Gamma,\at{i}\varphi$, and we set $A_{k + 1} = A_k$. If $\varphi$ is a literal then it must be in the closure of $\rho$ so since $\sigma$ is winning for \plrsat{} we have $M,A_k(\bfi) \nVdash \varphi$, hence $M_k,A_k(\bfi) \nVdash \varphi$ as required. If $\psi$ is not a literal, then we define a shadow play for the new trace in the obvious manner.

\paragraph{Rule $\mathsf{Glob}$:}
Easy, left to the reader. 

\paragraph{Modal rule:}
This is the most interesting case. 
Suppose that $u_k$ is the conclusion of an application of the modal rule of the form:
\begin{prooftree}
\AxiomC{$\Gamma,\at{i}\Box \varphi, \at{i}\Diamond \Psi,\at{j}\varphi,\at{j}\Psi$}
\UnaryInfC{$\Gamma, \at{i}\Box \varphi, \at{i}\Diamond \Psi$}
\end{prooftree}
Here we recall that $\bfj$ is a fresh variable, and by our assumption on the proof $\Pi$ no value has been assigned to $\bfj$ by $A_k$. Pick arbitrary traces leading to $\at{i}\Box \varphi$ and $\at{i}\Diamond \psi$ for $\psi \in \Psi$, respectively. By our assumption there are corresponding $\sigma$-guided shadow plays ending with positions $(A_k(\bfi),\Box \varphi$ and $(A_k(\bfi),\Diamond \psi)$ for $\psi \in \Psi$, respectively (since these formulas are not literals). At the position $(A_k(\bfi),\Box \varphi$, the positional winning strategy $\sigma$ for \plrsat{} picks some successor $w$ of $A_k(\bfi)$, and we set $A_{k + 1}(\bfj) = w$. To extend shadow plays associated with traces that end with non-literal formulas, the only interesting cases are traces of the form $\vec{t} \cdot (u_k,\at{i}\Box \varphi) \cdot (u_{k+1},\at{j}\varphi) $ and $\vec{t} \cdot (u_k,\at{i}\Diamond \psi) \cdot (u_{k+1},\at{j}\psi) $ where $\psi \in \Psi$ and $\vec{t} \cdot (u_k,\at{i}\Box \varphi)  $ and $\vec{t} \cdot (u_k,\at{i}\Diamond \psi) $ are traces leading to $u_k$. In both cases we have shadow plays $\pi \cdot (A_k(\bfi), \Box \varphi)$ and  $\pi \cdot (A_k(\bfi), \Diamond\psi)$ respectively. In the first case, we extend this to the unique play $\pi \cdot (A_k(\bfi), \Box \varphi) \cdot (w,\varphi)$ determined by  $\sigma$, which satisfies the required constraints by construction. In the second case, we extend  $\pi \cdot (A_k(\bfi), \Diamond\psi)$ to the $\sigma$-guided play $\pi \cdot (A_k(\bfi), \Diamond\psi) \cdot (w,\psi)$ by letting \plrval{} playing a move which is admissible since $w$ was a successor of $A_k(\bfi)$. Formulas of the form $\at{k}\psi$ where $\psi$ is a literal are handled as before.

To finish the proof, continuing the construction in this way we end up with an infinite branch $u_1u_2u_3...$ of $\Pi$, which must contain a good infinite trace since $\Pi$ is a valid proof. Note that such an infinite trace cannot contain any elements of the form $(u_k,\at{i}\psi)$ where $\psi$ is a literal, since there is no way such a trace can be continued to reach a new fixpoint unfolding.  It is clear that the shadow plays associated with initial segments of this trace are initial segments of an infinite $\sigma$-guided play in which the highest variable unfolded infinitely often is a $\nu$-variable. This play is thus a loss for \plrsat{}, and we have reached our contradiction. 
\end{proof}

\section{Finite proofs with names}

\subsection{The system \sysann}
\label{s:sysann}

In this section we introduce the finitary proof system \sysann, which is an annotated circular proof system in Stirling's style \cite{stir:tabl14}. We will use  a somewhat simplified version of the rules for manipulating annotations due to Afshari and Leigh \cite{afsh:cut17}.  For each fixpoint variable $x$ we assume that we have a countably infinite supply $\sfx_0,\sfx_1,\sfx_2...$ of \emph{names} for that variable. We assume that we have a fixed enumeration of the set of variable names for each variable $x$, so that we can speak of the $n$-th variable name for $x$. The system will be defined taking a strict linear order $<$ over fixpoint variables as a parameter, and in the presentation we assume such an order as given.  Given $<$ we write $\mathsf{x} < \mathsf{y}$ for names $\sfx, \sfy$ of variables $x,y$ respectively if $x < y$.  Given a word $\sfa$ over the set of variable names and a variable $x$, we write $\sfa \leq x$ if there is no variable $y > x$ for which $\sfa$ contains a name of $y$. Given two words $\sfa,\sfb$ over the set of variable names we write $\sfa \sqsubseteq \sfb$ to say that  $\sfb$ contains $\sfa$ as a subsequence. For example $\sfx \sfy \sqsubseteq \sfx \sfz \sfy$. We write $\sfa \sqcap \sfb$ for the longest word $\sfc$ such that $\sfc \sqsubseteq \sfa$ and $\sfc \sqsubseteq \sfb$, if a unique longest word with this property exists (otherwise $\sfa \sqcap \sfb$ is undefined). 
\begin{defi}
\emph{Annotated sequents} will be structures of the form:
$$\crl{a}\ats{i}{1}\varphi_1\ans{b}{1},...,\ats{i}{n}\varphi_n\ans{b}{n}$$
where $\sfa,\sfb_1,...,\sfb_n$ are non-repeating words over the set of variable names (i.e. no variable name appears twice in any of these words), each $\sfb_i$ is non-decreasing with respect to the order $<$, and $\sfb_i \sqsubseteq \sfa$ for each  $i \in \{1,...,n\}$. 
\end{defi} A formula $\rho$ will be said to be provable in the system if the sequent $\crm \at{r}\rho\an{\mpt}$ is provable, where the order $<$ on variable names is some arbitrary linearization of $<_\rho$,  where $\mpt$ is the empty word and $\mathsf{r}$ is a fresh nominal as before.  
\begin{defi}
A sequent in the sense of the system \sysinf{} will be called a \emph{plain sequent}. Given an annotated sequent $\Gamma = \crl{a}\ats{i}{1}\varphi_1\ans{b}{1},...,\ats{i}{n}\varphi_n\ans{b}{n}$, the \emph{underlying plain sequent} $\uls{\Gamma}$ is the plain sequent $\ats{i}{1}\varphi_1,...,\ats{i}{n}\varphi_n$.
\end{defi}

The system has two axioms, which are the law of exluded middle and an identity axiom, which now have the form: 
$$\crm\at{i}p\an{\mpt}, \at{i}\neg p\an{\mpt} \quad \quad \quad \crm \bfi \approx \bfi\an{\mpt}$$
Here, $p$ is a nominal or a propositional variable. Rules of inference are given in Figure \ref{fig:sysann-rules}. The rules are  subject to the following constraints:
\begin{description}
\item[$\mathsf{Mod}$:] The nominal $\bfj$ must be fresh. 
\item[$\eta x$:] $\sfb \leq x$. 
\item[$\mathsf{Rec}(\sfx)$:] $\sfb \leq x$, and $\sfx$ is a fresh variable name for $x$.
\item[$\mathsf{Exp}$:] $\sfa \sqsubseteq \sfa'$, $\sfb_i \sqsubseteq \sfb'_i$ and $\sfb_i'\sqcap \sfa \sqsubseteq \sfb_i$ for each $i$ \footnote{Note that $\sfb_i' \sqcap \sfa$ is well-defined here: since $\sfb_i' \sqsubseteq \sfa'$ and $\sfa \sqsubseteq \sfa'$, and since $\sfa'$ is non-repeating, any two variable names occurring in both $\sfb_i'$ and $\sfa$ must appear only once and in the same order in both words. From this follows that the set of words $\sfc$ such that $\sfc \sqsubseteq \sfb_i'$ and $ \sfc \sqsubseteq \sfa$ is a $\sqsubseteq$-directed finite set, so it contains a $\sqsubseteq$-maximal word.}. 
\item[$\mathsf{Reset}(\sfx)$:] The variable $x$ does not appear in any formula in $\Gamma$. 
\end{description}
\begin{figure}[h]
\fbox{
\begin{minipage}[t]{.30\textwidth}
\begin{prooftree}
\AxiomC{$\crl{a}\Gamma,  \at{i}\varphi \wedge \psi \an{b}, \at{i}\varphi\an{b}$}
\AxiomC{$\crl{a}\Gamma,  \at{i}\varphi \wedge \psi \an{b}, \at{i} \psi\an{b}$}
\RightLabel{$\wedge$}
\BinaryInfC{$\crl{a}\Gamma, \at{i}\varphi \wedge \psi \an{b}$}
\end{prooftree}
\vspace{.5cm}
\begin{prooftree}
\AxiomC{$\crl{a}\Gamma,\at{i} \phi\an{b}, \at{j}\phi\an{b}, \bfi \napprox \bfj\an{c}$}
\RightLabel{$\mathsf{Eq}$}
\UnaryInfC{$\crl{a}\Gamma, \at{j}\phi\an{b}, \bfi \napprox \bfj\an{c}$}
\end{prooftree}
\vspace{.5cm}
\begin{prooftree}
\AxiomC{$\crl{a}\Gamma, \at{i}\Box \varphi\an{b}, \at{i}\Diamond \Psi, \at{j}\varphi\an{b},\at{j}\Psi$}
\RightLabel{$\mathsf{Mod}$}
\UnaryInfC{$\crl{a}\Gamma, \at{i}\Box \varphi\an{b}, \at{i}\Diamond \Psi$}
\end{prooftree}
\end{minipage}
\begin{minipage}[t]{.33\textwidth}
\vspace{1.5cm}
\begin{prooftree}
\AxiomC{$\crl{a} \Gamma, \at{i}\eta x. \phi(x)\an{b}, \at{i}\phi(\eta x. \phi(x))\an{b}$} 
\RightLabel{$\eta x$}
\UnaryInfC{$\crl{a}\Gamma, \at{i}\eta x. \phi(x)\an{b}$}
\end{prooftree} 
\vspace{.1cm}
\begin{prooftree}
 \AxiomC{$\crl{a}\Gamma, \at{i}(\at{j}\varphi)\an{b}, \at{j}\varphi \an{b}$}
\RightLabel{$\mathsf{Glob}$}
\UnaryInfC{$\crl{a}\Gamma, \at{i}(\at{j}\varphi)\an{b}$}
\end{prooftree}
\vspace{1cm}
\begin{prooftree}
\AxiomC{$\crl{ax} \Gamma,  \at{i}\nu x. \varphi(x)\an{b}, \at{i}\varphi(\nu x. \varphi(x))\an{bx}$}
\RightLabel{$\mathsf{Rec}(\sfx)$}
\UnaryInfC{$\crl{a} \Gamma, \at{i}\nu x. \varphi(x)\an{b}$}
\end{prooftree}
\begin{prooftree}
\AxiomC{$\crl{a}\ats{i}{1}\varphi_1\an{b_1},....,\ats{i}{n}\varphi_n\an{b_n}$}
\RightLabel{$\mathsf{Exp}$}
\UnaryInfC{$\crl{a'}\ats{i}{1}\varphi_1\an{b_1'},....,\ats{i}{n}\varphi_n\an{b_n'}$}
\end{prooftree}
\begin{prooftree}
\AxiomC{$\crl{a}\Gamma, \at{i_1}\varphi_1\an{bx},....,\at{i_n}\varphi_n\an{bx}$}
\RightLabel{$\mathsf{Reset}(\sfx)$}
\UnaryInfC{$\crl{a}\Gamma,\at{i_1}\varphi_1\an{bxx_1c_1},....,\at{i_n}\varphi_n\an{bxx_nc_n}$}
\end{prooftree}
\vspace{.5cm}
\end{minipage}
\begin{minipage}[t]{.36\textwidth}
\begin{prooftree}
\AxiomC{$\crl{a}\Gamma, \at{i}\varphi \vee \psi \an{b},  \at{i}\varphi\an{b},\at{i} \psi\an{b}$}
\RightLabel{$\vee$}
\UnaryInfC{$\crl{a}\Gamma, \at{i}\varphi \vee \psi \an{b}$}
\end{prooftree}
\vspace{.5cm}
\begin{prooftree}
\AxiomC{$\crl{a}\Gamma, \bfi \napprox \bfj\an{c}, \bfj \napprox \bfi\an{c}$}
\RightLabel{$\mathsf{Com}$}
\UnaryInfC{$\crl{a}\Gamma, \bfi \napprox \bfj\an{c}$}
\end{prooftree}
\vspace{.5cm}
\begin{prooftree}
\AxiomC{$\crl{a}\Gamma$}
\RightLabel{$\mathsf{Weak}$}
\UnaryInfC{$\crl{a}\Gamma \cup \Psi$}
\end{prooftree}
\end{minipage}
}
\caption{Rules of \sysann}
\label{fig:sysann-rules}
\end{figure}

A \sysann-proof is a labelled tree where the labels specify a sequent assigned to a node and the last rule application (for non-leaf nodes),  and such that the children of a node are labelled with the premises of the specified rule application. Although valid proofs will always be finite it will be useful to consider infinite \sysann-proofs as well. Way say that the variable $x$ is \emph{reset} in an instance of the rule $\mathsf{Reset}(\sfx)$. We say that an infinite \sysann-proof is \emph{quasi-valid} if every infinite branch contains a good trace (defined as before) and every finite branch ends with an axiom. It is obvious that a quasi-valid infinite \sysann-proof can be turned into a valid \sysinf-proof, and we will not provide a detailed proof of this.


\begin{defi}
 A \sysann-proof will be considered  \emph{valid} if it is a finite proof-tree, and there is a map $f$ from non-axiom leaves to non-leaves, such that:
\begin{itemize}
\item $f(l)$ is an ancestor of $l$, and has the same label.
\item There is a variable name $\sfx$ that is contained in the control of every node in the path from $f(l)$ to $l$, and is reset at least once on this path. 
\end{itemize}
A map $f$ from non-axiom leaves to non-leaves is satisfying the first of these conditions is called a \emph{back-edge map}, and is \emph{good} if it satisfies the second condition as well.  So a finite proof-tree is a valid proof iff it has a good back-edge map.
\end{defi}

\begin{defi}
Let $\sfa$  be a name tuple and $x$ a variable. We write $\sfa \vert x$ for the result of removing all names of variables $y > x$ from $\sfa$, where we recall that $<$ was a linear order over variables taken as a parameter for the proof system. 
\end{defi}

\begin{defi}
Let $\sfa, \sfb,\sfc$ be three name tuples, where $\sfa$ contains both $\sfb,\sfc$. We write $\sfb <_\sfa \sfc$ if: 
either there is a $\nu$-variable $x$ for which  $\sfc\vert x$ is a proper prefix of $\sfb\vert x$, or there are  variable names $\sfz,\sfz'$ for the same variable and a name tuple $\sfd$ such that $\sfd\sfz$ is a prefix of $\sfb$, $\sfd\sfz'$ is a prefix of $\sfc$, and $\sfz$ is left of $\sfz'$ in $\sfa$. 
\end{defi}

The following derived rule of \sysann{} will be useful later:
\begin{prooftree}
\AxiomC{$\crl{a}\Gamma, \at{i}\varphi\an{b}$}
\RightLabel{$\mathsf{Thinning}$}
\UnaryInfC{$\crl{a'}\Gamma, \at{i}\varphi\an{b}, \at{i}\varphi\an{c}$}
\end{prooftree}
The rule is subject to the following constraints:
\begin{itemize} 
\item $\sfb <_{\sfa'} \sfc$.
\item $\sfa$ is obtained by removing all variables names  not appearing in $\Gamma, \at{i}\varphi\an{b}$ from $\sfa'$.
\end{itemize}

\subsection{Soundness}

In this subsection we prove soundness of the system \sysann.

\begin{defi}
Let $\Pi$ be a \sysann-proof and let $\pi$ be an infinite branch of $\Pi$. A variable name $\sfx$ is said to be an \emph{invariant} of $\pi$ if there exists some final segment of $\pi$ for which $\sfx$ belongs to the control of every node. We say that $\sfx$ is a \emph{good invariant} of $\pi$ if it is an invariant, and is reset infinitely many times on $\pi$. (That is, the rule $\mathsf{Reset}(\sfx)$ is applied infinitely many times on $\pi$.) 
\end{defi}

\begin{defi}
Notation: given a node $u$ in a \sysann-proof labelled $\crl{a}\Gamma$ and a variable name $\sfx$, we write $\mathsf{th}(\sfx,u)$ for the set:
$$\{\at{i}\varphi \mid \exists \sfb, \sfc:\;\at{i}\varphi\an{bxc} \in \Gamma\}$$
\end{defi}

\begin{prop}
\label{p:invariant-to-trace}
Let $\Pi$ be any infinite \sysann-proof. Then for any infinite branch $\pi$ of $\Pi$, if $\pi$ has a good invariant then $\pi$ contains a good infinite trace. 
\end{prop}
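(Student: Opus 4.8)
The plan is to build the required good trace out of the formulas that carry the name $\sfx$ in their annotation, and then to verify the trace condition using the ordering discipline on annotations together with the fact that $\sfx$ is reset infinitely often. Since $\sfx$ is an invariant of $\pi$, I would first pass to a final segment $u_N u_{N+1}\cdots$ on which $\sfx$ lies in the control of every node; $\sfx$ is still reset infinitely often on this segment. The sets $\mathsf{th}(\sfx,u_k)$ will be the carrier of the trace, and a preliminary observation is that $\mathsf{th}(\sfx,u_k)\neq\emptyset$ for every $k\geq N$. Indeed, an application of $\mathsf{Reset}(\sfx)$ requires a formula at that node carrying $\sfx$, so $\mathsf{th}(\sfx,u_k)$ is nonempty for infinitely many $k$; and reading the branch upwards, once $\sfx$ has vanished from every annotation no rule can reintroduce it, since a fresh name produced by $\mathsf{Rec}$ differs from the already-present $\sfx$, while $\mathsf{Exp}$, $\mathsf{Reset}$ and $\mathsf{Weak}$ only delete names upwards and the remaining rules merely copy a principal annotation to a newly introduced formula. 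Hence emptiness of $\mathsf{th}(\sfx,u_k)$ would propagate upward, contradicting the infinitely many resets.

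Next I would prove the key preservation property: for each $k$, every formula in $\mathsf{th}(\sfx,u_{k+1})$ is reached by a trace step from a formula in $\mathsf{th}(\sfx,u_k)$. This is checked rule by rule. Under the $\wedge,\vee,\mathsf{Glob},\mathsf{Eq},\mathsf{Mod}$ and $\eta x$ rules the introduced formula inherits the annotation of its principal formula, so it carries $\sfx$ exactly when the principal does; under $\mathsf{Rec}(\sfy)$ the appended name $\sfy$ is fresh, hence distinct from $\sfx$; under $\mathsf{Reset}$ and $\mathsf{Weak}$ a surviving formula's annotation only shrinks; and under $\mathsf{Exp}$, where each premise annotation is a subsequence of the corresponding conclusion annotation, a premise formula carrying $\sfx$ already carried it in the conclusion. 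Since each $\mathsf{th}(\sfx,u_k)$ is finite and nonempty, chasing predecessors produces $\sfx$-carrying partial traces of every length running up the segment; these form an infinite, finitely branching tree, and by K\"onig's Lemma it has an infinite branch, yielding an infinite trace $\vec{t}$ all of whose formulas carry $\sfx$.

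It then remains to show that $\vec{t}$ is good. The variable $x$ named by $\sfx$ is a $\nu$-variable, since names enter a proof only through $\mathsf{Rec}$, which acts on $\nu$-formulas. Moreover, whenever a variable $z$ is unfolded on $\vec{t}$, the principal formula is an occurrence of $\at{i}\eta z.\phi$ carrying $\sfx$, and the side condition $\sfb\leq z$ of the $\eta z$ and $\mathsf{Rec}$ rules forbids $\sfb$ from containing a name of any variable strictly above $z$ in $<$; as $\sfx\in\sfb$ names $x$, this forces $x\leq z$. Thus no variable that is $<$-below $x$ (equivalently, higher ranking than $x$) is ever unfolded on $\vec{t}$. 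Granting that $x$ itself is unfolded infinitely often, $x$ is the $<$-least variable unfolded infinitely often; and since $<$ linearizes $<_\rho$, any $z$ with $z<_\rho x$ would satisfy $z<x$, so $x$ is also the highest-ranking variable unfolded infinitely often, and being a $\nu$-variable it makes $\vec{t}$ good.

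The hard part will be exactly the claim that $x$ is unfolded infinitely often on $\vec{t}$, and this is where the resets do their work. Each application of $\mathsf{Reset}(\sfx)$ truncates the annotation of the traced formula back to its prefix ending in $\sfx$ and is admissible only once the variable $x$ has been cleared from the context $\Gamma$; for $\sfx$ to be reset \emph{again} the portion of the annotation past $\sfx$ must be rebuilt, which along the $\sfx$-thread can only happen by re-unfolding the tracked $\nu x$-fixpoint. Making this precise—that between any two consecutive resets of $\sfx$ the trace $\vec{t}$ necessarily passes through an unfolding of $x$—requires a careful analysis of how $\mathsf{Reset}(\sfx)$, the freshness condition on $\mathsf{Rec}$, and the non-decreasing, non-repeating shape of annotations interact, and this is the main obstacle of the proof. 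Once it is in place, the infinitely many resets of $\sfx$ yield infinitely many unfoldings of $x$ on $\vec{t}$, completing the argument.
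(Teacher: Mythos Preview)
Your first three paragraphs are essentially correct and match the paper's set-up. The genuine gap is in where K\"onig's Lemma is applied. You invoke it at the single-step level to extract \emph{some} infinite $\sfx$-carrying trace $\vec t$, and then attempt to argue a posteriori that this particular $\vec t$ must unfold $x$ between any two consecutive applications of $\mathsf{Reset}(\sfx)$. But that last claim is not a property of every $\sfx$-carrying trace. A formula $\at{i}\psi$ whose body $\psi$ no longer contains the variable $x$ can carry annotation exactly $\sfb\sfx$ (nothing after $\sfx$) and therefore live in the $\Gamma$-part of each $\mathsf{Reset}(\sfx)$ instance, untouched; the trace that simply follows this formula by silent steps is $\sfx$-carrying yet never unfolds $x$, and nothing in your K\"onig tree prevents this branch from being the one selected. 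Your justification (``for $\sfx$ to be reset again the portion of the annotation past $\sfx$ must be rebuilt'') conflates a requirement on the \emph{sequent} as a whole --- some formula must have annotation properly extending $\sfb\sfx$ for the reset to fire --- with a property of the single formula you happen to be tracing. Even when the traced annotation does grow, it can do so via $\mathsf{Rec}(\sfy)$ for some $y>x$, which is an unfolding of $y$, not of $x$.

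The paper sidesteps this by applying K\"onig one level up, at the reset points $r_0<r_1<\cdots$. Vertices are pairs $(r_i,\at{j}\varphi)$ with $\at{j}\varphi\in\mathsf{th}(\sfx,r_i)$, and an edge is drawn from a vertex at $r_i$ to one at $r_{i+1}$ \emph{only} when there is a connecting partial trace on which $x$ is already unfolded at least once. The key claim (whose proof the paper omits) is that every vertex at $r_{i+1}$ has such a predecessor at $r_i$; once this is in hand, K\"onig on this graph produces a trace with infinitely many $x$-unfoldings by construction. In short, the $x$-unfolding has to be built into the edges of the K\"onig graph rather than recovered afterwards from an arbitrary branch.
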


\begin{proof}
Fix a good invariant $\sfx$ of $\pi = (u_0,u_1,u_2...)$. Let a \emph{reset point} on $\pi$ be an index $i < \omega$ such that $u_i$ belongs to the final segment of $\pi$ in which $\sfx$ belongs to every control, and $u_i$ is the conclusion of an application of the reset rule in which $\sfx$ is reset. We enumerate the reset points on $\pi$ as $(r_0, r_1, r_2,...)$ so that $r_{i + 1}$ is a $\pi$-descendant of $r_i$ for each $i < \omega$.  The key claim is the following, the routine proof of which is omitted:
\begin{claim}
Let $i < \omega$. Then for every formula $\at{j}\varphi \in \mathsf{th}(\sfx, r_{i + 1})$ there exists a formula $\at{k}\psi \in \mathsf{th}(\sfx, r_{i})$ and a trace of the form $(r_i, \at{k}\psi) \cdot \vec{t} \cdot  (r_{i + 1}, \at{j}\varphi)$ on which $x$ is unfolded at least once. 
\end{claim}
To finish the proof, we note that since $\sfx$ belongs to the control of every node in a final segment of $\pi$, no higher-ranking variables than $x$ are ever unfolded in this final segment of $\pi$. So it suffices to show that there is an infinite trace in which $x$ is unfolded infinitely often. 

We construct a graph as follows: the nodes are pairs $(r,\at{j}\varphi)$ where $r$ is a reset point and $\at{j}\varphi \in \mathsf{th}(\sfx, r)$. We draw an edge between those pairs of nodes of the form $(r_i,\at{j}\varphi)$ and $(r_{i + 1}, \at{k}\psi)$ for which there exists a trace of the form $(r_i,\at{j}\varphi) \cdot \vec{t} \cdot (r_{i + 1}, \at{k}\psi)$ on which the variable $x$ is unfolded at least once. By the previous claim, this is an infinite connected graph, and it is clearly locally finite. So by Koenig's lemma it has an infinite simple path, and this path gives a good infinite trace on $\pi$.
\end{proof}

\begin{defi}
Let $\Pi$ be a \sysann-proof with back-edge map $f$. The  \emph{dependency} relation $D$ over the leaves of $\Pi$ is defined as follows: set $l D l'$ iff $f(l)$ is on the path from $f(l')$ to $l'$ in $\Pi$.  The  \emph{entanglement} relation $E$ over the leaves of $\Pi$ is defined as the symmetric closure of $D$. In other words, $l E l'$ iff the paths from $f(l)$ to $l$ and from $f(l')$ to $l'$ respectively intersect.  
\end{defi}

\begin{prop}
\label{p:intersecting-paths}
Let $\Pi$ be a \sysann-proof with back-edge map $f$, and let $l,l' \in \mathsf{dom}(f)$ be leaves such that $l E l'$. Suppose $\sfx, \sfy$ are variable names such that $\sfx$ is in the control of every node on the path from $f(l)$ to $l$ and $\sfy$ is in the control of every node on the path from $f(l')$ to $l'$. If $\sfx < \sfy$, then $\sfx$ is also in the control of every node on the path from $f(l')$ to $l'$. 
\end{prop}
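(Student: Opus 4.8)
The plan is to combine the tree geometry of the two back-edge loops with the observation that, along any single branch read upwards, the \emph{lifetime} of a variable name---the set of nodes whose control contains it---is a contiguous interval of that branch. Indeed, a name can enter the control only through an application of $\mathsf{Rec}$, which uses a fresh name and appends it at the right end, and can leave the control only through $\mathsf{Exp}$, which deletes names; since $\mathsf{Rec}$ never reuses a name, once a name has been deleted it cannot reappear, so the nodes carrying it form a single interval. I would also use repeatedly that the two endpoints $f(l)$ and $l$ of the path $P = [f(l),l]$ carry the \emph{same} control, as they have the same label, and similarly for $f(l')$ and $l'$ on $P' = [f(l'),l']$.

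Next I would unpack $l E l'$. Since $E$ is the symmetric closure of $D$, and each of $f(l),f(l')$ is an ancestor of $l,l'$ respectively, a routine tree argument shows that whenever the segments $P$ and $P'$ meet, the higher (nearer the leaves) of the two back-edge targets lies on the other loop. This yields two configurations. In the \emph{easy} one, $f(l') \in P$: then $\sfx$ lies in the control of $f(l')$, hence also of $l'$ since they share a label, and as both endpoints of $P'$ lie in the interval lifetime of $\sfx$, so does every node between them, i.e.\ all of $P'$. Note this configuration does not even use $\sfx < \sfy$.

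The real work is the other configuration, where $f(l)$ lies strictly inside $P'$ and only an \emph{interior} node of $P'$ is known a priori to carry $\sfx$. Here I would again invoke contiguity together with the shared-label property to reduce the goal to the single statement that $\sfx$ already lies in the control of the bottom node $f(l')$ of the loop: granting this, $l'$ carries $\sfx$ too, the interior node $f(l)$ carries $\sfx$ by hypothesis, and contiguity fills in all of $P'$. Thus everything comes down to showing that $\sfx$ cannot be freshly introduced strictly above $f(l')$ within $P'$ while $\sfy$ is already present---equivalently, that once $\sfx$ and $\sfy$ occur together on a branch with $x < y$, the lifetime of $\sfy$ is contained in that of $\sfx$.

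This containment is the crux, and I expect it to be the main obstacle. The difficulty is that the control word is \emph{not} globally sorted by $<$: the side condition on $\mathsf{Rec}(\sfx)$ constrains only the annotation $\sfb$ of the unfolded formula (namely $\sfb \leq x$), not the whole control, so in principle a name for a smaller variable could be appended to the right of one for a larger variable. To pin the containment down I would track the left-to-right order of names in the control---created by $\mathsf{Rec}$ and preserved by the subsequence discipline of $\mathsf{Exp}$ (recall the conditions $\sfa \sqsubseteq \sfa'$ and $\sfb_i' \sqcap \sfa \sqsubseteq \sfb_i$)---and combine it with the requirement that each annotation $\sfb_i$ be non-decreasing and with the fact that $<$ linearizes $<_\rho$. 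The aim is to show that when $\sfx$ and $\sfy$ are simultaneously present, $\sfx$ must have been introduced first and can be deleted only after $\sfy$, which gives the interval containment and hence the proposition.
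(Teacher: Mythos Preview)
There is a real gap. Your ``crux''---that whenever $\sfx$ and $\sfy$ coexist on a branch with $x<y$, the lifetime of $\sfy$ is contained in that of $\sfx$, i.e.\ $\sfx$ is born first and dies last---is strictly stronger than what is needed, and it is false. The side condition on $\mathsf{Rec}(\sfx)$ constrains only the annotation $\sfb$ of the principal formula ($\sfb\le x$), not the control. So one may apply $\mathsf{Rec}(\sfy)$ first (control becomes $\sfy$) and then apply $\mathsf{Rec}(\sfx)$ to a $\nu x$-formula carrying the empty annotation (control becomes $\sfy\sfx$): now $\sfx$ and $\sfy$ coexist, yet $\sfy$ was introduced strictly earlier, directly contradicting your nesting claim. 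Your proposed repair via the left-to-right order in the control together with the non-decreasing shape of annotations cannot succeed, for exactly the reason you already flag---the control word is not globally $<$-sorted, so that order carries no information about the $<$-relation between the underlying variables.

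The paper's argument is shorter and avoids both your easy/hard case split and any global lifetime-nesting statement. It picks a single node $u$ in the intersection of the two paths (which exists by definition of $E$), observes that $\sfx$ lies in the control of $u$, and then argues directly that with $\sfy$ present in the control throughout the path from $f(l')$ to $l'$ and $\sfx<\sfy$, the variable $x$ cannot be unfolded anywhere on that path; in particular the name $\sfx$ cannot be freshly introduced there, so its presence at $u$ forces its presence already at $f(l')$. From that point the shared-label property at $f(l')$ and $l'$, together with your contiguity observation, finishes the proof exactly as you describe. The target to aim for is thus the local claim ``no introduction of $\sfx$ on the path from $f(l')$ to $l'$,'' not a global containment of lifetimes.
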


\begin{proof}
Since $l E l'$ there is a node $u$ which is on both the path from $f(l)$ to $l$ and on the path from $f(l')$ to $l'$. So $\sfx$ is in the control of $u$. But since $\sfy$ is in the control of every node on the path from $f(l')$ to $l'$, and since $\sfx < \sfy$, the variable $x$ cannot be unfolded on any node between $f(l')$ and $l'$. It follows that $\sfx$ must already be in the control of $f(l')$. Furthermore, since $f(l')$ and $l'$ have the same control, and since again the variable name $\sfx$ cannot be introduced anywhere on the path from $f(l')$ to $l'$ by a variable unfolding, $\sfx$ must be in the control of every node in that path. 
\end{proof}

Since a $\sysann$-proof $\Pi$ with a back-edge map $f$ is a finite ranked tree with back-edges, the unfolding $\mathsf{unf}(\Pi,f)$ is a well-defined infinite \sysann-proof. 

\begin{prop}
\label{p:f-unfolding}
Let $\Pi$ be a finite \sysann-proof and $f$ a back-edge map for $\Pi$. Then $f$ is good iff every infinite path in $\mathsf{unf}(\Pi,f)$ has a good invariant.
\end{prop}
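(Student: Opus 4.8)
The plan is to prove both directions by matching the recurrence structure of branches of the infinite tree $\mathsf{unf}(\Pi,f)$ with the back-edges of $\Pi$ and the entanglement relation $E$. I would handle the right-to-left implication by contraposition. If $f$ is not good, then there is a non-axiom leaf $l$ for which no variable name simultaneously lies in the control of every node on the path $P_l$ from $f(l)$ to $l$ and is reset on $P_l$. From this leaf I build a single branch of $\mathsf{unf}(\Pi,f)$: climb from the root up to $l$ and then re-ascend copies of $P_l$ forever, taking the back-edge at each copy of $l$. Since $f(l)$ is a proper ancestor of $l$ this is a well-defined branch whose final segment is just repeated copies of $P_l$. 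Its invariants are precisely the names lying in the control of every node of $P_l$, and such a name is reset infinitely often exactly when it is reset on $P_l$; by the failure of goodness at $l$, no invariant is reset infinitely often, so this branch has no good invariant, contradicting the hypothesis.

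For the forward direction I fix a branch $\pi$ and read it as a walk in $\Pi$ that climbs along tree-edges and, at each copy of a leaf $l \in \mathsf{dom}(f)$, jumps to $f(l)$. Consecutive nodes of this walk are always in ancestor–descendant relation, so the set of nodes visited infinitely often has a unique shallowest element $v^*$, which is an ancestor of all of them; and since each recurrent visit to $v^*$ must arrive by a jump rather than from its parent, $v^* = f(l^*)$ for some leaf $l^*$ whose back-edge is used infinitely often. Let $L$ collect all such leaves. I would then establish that the $E$-graph on $L$ is connected: the leaves with $f$-value exactly $v^*$ are pairwise entangled because their paths all contain $v^*$, and any $l \in L$ with $f(l)$ strictly below $v^*$ is entangled with some $l' \in L$ whose path passes through $f(l)$ and whose target $f(l')$ is strictly shallower (such $l'$ exists since the walk must escape the subtree of $f(l)$ infinitely often), so induction on depth connects everything. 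Since $f$ is good, each $l \in L$ carries a name $\sfx_l$ in the control throughout $P_l$ and reset on $P_l$; taking $\sfx$ to be the $<$-least of these, connectivity together with Proposition \ref{p:intersecting-paths} propagates $\sfx$ along $E$-edges, so that $\sfx$ lies in the control of every node of every $P_l$ with $l \in L$.

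It then remains to verify that $\sfx$ is a \emph{good} invariant of $\pi$. That $\sfx$ is an invariant follows from a monotonicity observation: along any maximal climb between two jumps (a contiguous upward path, hence a sub-path of a branch of $\Pi$), the control can only shed names via $\mathsf{Exp}$ and acquire a name only via $\mathsf{Rec}$, and a $\mathsf{Rec}$-reintroduction of $\sfx$ is blocked by freshness once $\sfx$ already occurs below; so if $\sfx$ is in the control at both endpoints of the climb—which holds because, in the final segment, those endpoints are $f(l)$ and $l'$ for leaves in $L$—it is in the control throughout. That $\sfx$ is reset infinitely often is, I expect, the main obstacle. The key point is that the unfolding never disturbs the interior of $P_{l_0}$: grafting happens only at leaves, and the interior nodes of $P_{l_0}$ are not leaves, so the $|P_{l_0}|$ immediate ancestors of any copy of $l_0$ in $\mathsf{unf}(\Pi,f)$ form a faithful copy of $P_{l_0}$, and in particular every copy of $l_0$ has, just below it, a copy of the reset node of $\sfx$ on $P_{l_0}$. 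As the back-edge at $l_0$ is taken infinitely often, $\pi$ meets infinitely many copies of that reset node, so $\sfx$ is reset infinitely often and is therefore good. Throughout, the genuine difficulty is this combinatorial book-keeping tying the recurrence structure of a branch to $E$ and to the placement of resets; the algebraic facts about how controls evolve under $\mathsf{Rec}$, $\mathsf{Exp}$ and $\mathsf{Reset}$ are comparatively routine.
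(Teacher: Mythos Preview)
Your overall strategy matches the paper's: for the left-to-right direction you collect the recurrent leaves $L$, establish $E$-connectivity, pick the $<$-least witness $\sfx$, and propagate it through $L$ via Proposition~\ref{p:intersecting-paths}. Your connectivity argument (shallowest recurrent node plus induction on depth) differs from the paper's one-line observation that $L$ is $D^*$-directed, but both are fine. Two of your later steps, however, do not go through as written.

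First, the ``monotonicity observation'' you use to conclude that $\sfx$ is an invariant is not valid. You claim that if $\sfx$ lies in the control at both endpoints of a climb then it lies in the control throughout, because a $\mathsf{Rec}$-reintroduction of $\sfx$ is ``blocked by freshness once $\sfx$ already occurs below''. But freshness is a purely local condition on the conclusion sequent: nothing prevents an $\mathsf{Exp}$ step from removing $\sfx$ mid-climb and a later $\mathsf{Rec}(\sfx)$ from reintroducing it, since at that moment $\sfx$ is genuinely fresh. The argument the paper (tersely) intends is different: every node $u$ visited in the recurrent tail of $\pi$ lies on $P_l$ for some $l \in L$. Indeed, starting from $u$ the walk must eventually take a back-edge landing at $u$ or a proper ancestor of $u$ (otherwise it could never revisit $u$), and the source leaf $l$ of that back-edge is a descendant of $u$; hence $u$ is on $P_l$, and $\sfx$ is in its control by what Proposition~\ref{p:intersecting-paths} already gave you.

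Second, your reset argument asserts that the $|P_{l_0}|$ immediate ancestors of any copy of $l_0$ in the unfolding form a faithful copy of $P_{l_0}$, because interior nodes of $P_{l_0}$ are not leaves. But an interior node $u$ of $P_{l_0}$ can still be the \emph{target} $f(l')$ of some other back-edge; the walk may then arrive at $u$ from $l'$ rather than from the $\Pi$-parent of $u$, so the ancestors of that copy of $l_0$ in the unfolding branch off into the ancestry of $l'$ instead of continuing up $P_{l_0}$. The paper's replacement is the observation that between any two visits to $l_0$ every tree edge of $P_{l_0}$ is traversed: after jumping to $f(l_0)$, the first time the walk reaches any node $v$ on $P_{l_0}$ it must enter via the tree edge from the parent of $v$ (a back-edge into $v$ would require having been at a descendant of $v$ first). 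Hence the reset edge on $P_{l_0}$ is crossed between any two visits to $l_0$, and $\sfx$ is reset infinitely often.
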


\begin{proof}
We prove each direction separately.

\textbf{Right to left:}
Suppose that every infinite path in $\mathsf{unf}(\Pi,f)$ has a good invariant. Let $l \in \mathsf{dom}(f)$. Then there is an infinite path in $\mathsf{unf}(\Pi,f)$ that simply repeats the path from $f(l)$ to $l$ in $\Pi$ forever, so there must be a good invariant $\sfx$ on this path. It is easy to see that the name $\sfx$ must be contained in the control of every node on the path from $f(l)$ to $l$ in $\Pi$, and that it must be reset at least once on this path.

\textbf{Left to right:}
Suppose that the back-edge map $f$ is good. Let $\pi$ be any infinite path in $\mathsf{unf}(\Pi,f)$, which we can identify with an infinite walk through $(\Pi,f)$, viewed is a directed graph obtained by adding back edges specified by $f$ to the tree $\Pi$.   We want to show that $\pi$ has a good trace.  Let $L$ be the set of leaves in $\Pi$ visited infinitely many times on $\pi$. Since $L \subseteq \mathsf{dom}(f)$, we can choose for each $l \in L$ the highest ranking variable name $\sfx$ that belongs to the control of every node on the path from $f(l)$ to $l$ and is reset at least once on this path, and call this variable name $\mathsf{var}(l)$.  It is easy to see that $L$ is a $D^*$-directed set, where $D^* $ is the transitive closure of the dependency relation $D$, hence $L$ is an $E$-connected set. If $\sfx$ is the highest ranking variable name in $\{\mathsf{var}(l) \mid l \in L\}$, it follows using Proposition \ref{p:intersecting-paths} that for every $l \in L$, $\sfx$ belongs to the control of every node in the path from $f(l)$ to $l$. Therefore, $\sfx$ belongs to the control of every node in a final segment of $\pi$. Furthermore, suppose $l \in L$ is such that $\sfx = \mathsf{var}(l)$. It is easy to show that between any two visits of the leaf $l$ on $\pi$, every edge on the path from $f(l)$ to $l$ must be traversed at least once. Hence $\sfx$ is reset infinitely many times on $\pi$.
\end{proof}

\begin{theo}[Soundness]
Any formula that has a valid \sysann-proof is semantically valid. 
\end{theo}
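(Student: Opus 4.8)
The plan is to reduce soundness of the finite system \sysann{} to the already-established soundness of the infinite system \sysinf, by passing through the $f$-unfolding of a valid proof. Suppose $\rho$ has a valid \sysann-proof; by definition this is a finite proof-tree $\Pi$ equipped with a good back-edge map $f$. Since $(\Pi,f)$ is a finite ranked tree with back-edges, the unfolding $\mathsf{unf}(\Pi,f)$ is a well-defined infinite \sysann-proof, as observed just before Proposition \ref{p:f-unfolding}. Here it is worth noting explicitly that each back-edge connects a non-axiom leaf $l$ to an ancestor $f(l)$ with the same label, so replacing the leaf by a copy of the subtree rooted at $f(l)$ preserves all local rule applications, and the only leaves that survive in the unfolding are the original axiom leaves.

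First I would invoke Proposition \ref{p:f-unfolding}: since $f$ is good, every infinite path in $\mathsf{unf}(\Pi,f)$ has a good invariant. Next, by Proposition \ref{p:invariant-to-trace}, any such infinite branch, having a good invariant, contains a good infinite trace. Combining these two facts, every infinite branch of $\mathsf{unf}(\Pi,f)$ contains a good trace, while every finite branch ends with an axiom by the observation above. Hence the infinite proof $\mathsf{unf}(\Pi,f)$ is quasi-valid in the sense defined earlier.

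Finally I would appeal to the remark that a quasi-valid infinite \sysann-proof can be converted into a valid \sysinf-proof by passing to the underlying plain sequents $\uls{\Gamma}$ at each node: the annotation-manipulating rules $\mathsf{Rec}$, $\mathsf{Exp}$ and $\mathsf{Reset}$ either become trivial or project onto silent steps, the remaining rules project onto their \sysinf{} counterparts, and the good-trace condition is insensitive to annotations. This yields a valid \sysinf-proof whose root is the plain sequent $\at{r}\rho$, so by the direction (d) $\Rightarrow$ (a) of Theorem \ref{t:infcompl} the formula $\rho$ is semantically valid. The genuine work has already been absorbed into Propositions \ref{p:invariant-to-trace} and \ref{p:f-unfolding}, so the soundness theorem itself is essentially an assembly step; the only point demanding care is checking that the projection to underlying plain sequents sends \sysann-rule applications to legitimate \sysinf-rule applications and preserves the good-trace condition, which is precisely what validates the ``obvious'' passage from a quasi-valid infinite \sysann-proof to a valid \sysinf-proof.
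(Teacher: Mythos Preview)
Your proposal is correct and follows essentially the same approach as the paper: the paper's proof is a one-line appeal to Proposition~\ref{p:f-unfolding}, Proposition~\ref{p:invariant-to-trace} and Theorem~\ref{t:infcompl}, and you have simply unpacked that chain explicitly, including the passage from a quasi-valid infinite \sysann-proof to a valid \sysinf-proof that the paper had already flagged as obvious.
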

\begin{proof}
By Proposition \ref{p:f-unfolding}, Proposition \ref{p:invariant-to-trace} and Theorem \ref{t:infcompl}. 
\end{proof}

\subsection{Completeness}

In this section we prove the main result of the paper, completeness of the system \sysann. 
Our strategy is as follows: we begin with a frugal \sysinf-proof, which exists for every valid formula.  We show how to construct from this an infinite \sysann-proof in which only finitely many sequents appear, and on which every infinite branch has a good invariant. Next we note that if we fix a finite set of sequents, the set of infinite \sysann-proofs for a given formula, in which only those sequents may appear and in which every infinite branch has a good invariant, forms an MSO-definable tree language. So, since the tree language is non-empty, by Rabin's Basis Theorem we find a regular \sysann-proof still satisfying the good-invariant property. This regular \sysann-proof can then be turned into a valid finite proof.

\begin{defi}
Let $\Pi$ be an \sysinf-proof. A \emph{decoration} of $\Pi$ is an assigment $d$ to each node $u$ in $\Pi$ of a finite \sysann-proof such that:
\begin{itemize}
\item For each node $u$, the underlying plain sequent of the label of the root of $d(u)$ is equal to the label of $u$.
\item If $u$ is not a leaf, then there is a bijective correspondence $i$ from leaves of $d(u)$ to premises of $u$ such that for each leaf $l$ in $d(u)$, the root of $d(i(l))$ has the same label as $l$. 
\end{itemize}
Given a decoration $d$ and $u \in \Pi$ we let $\overline{d(u)}$ denote the witnessing bijective correspondence from leaves of $d(u)$ to premises of $u$.
\end{defi}

Decorations can be used to turn \sysinf-proofs into infinite \sysann-proofs according to the following coinductive definition:

\begin{defi}
Let $\Pi$ be an \sysinf-proof and $d$ a decoration. Then $(\Pi[d],\widehat{d})$ is the unique pair in which $\Pi[d]$ is a $\sysann$-proof and $\widehat{d}$ is a map from $\Pi$ to $\Pi[d]$ such that, for each $u \in \Pi$: 
$$\Pi[d]\rst{\widehat{d}(u)} = d(u)[\Pi[d]\rst{\overline{d(u)}(l)} / l \mid l \text{ a leaf of } d(u)]$$ 
\end{defi}

\begin{defi}
Let $\Pi$ be an \sysinf-proof. We define the \emph{canonical decoration} $d(u)$ of a node $u$ in $\Pi$ by induction on the height of the node $u$ as follows. Suppose that $u$ is some node and the decoration $d$ has been defined for all nodes of lower height. 
The construction of $d(u)$ will be  carried out in a number of steps, of which the most interesting ones essentially mimick the update procedure for Safra trees in determinization of stream automata \cite{safr:comp88}. The construction of each bijection $\overline{d(u)}$ will be quite obvious so we omit it. We shall maintain the invariant that, for each $u$ and each leaf $v$ in $d(u)$, in the label of $v$ no formula appears with more than one annotation.

\paragraph{Step 1: find the root label.}
First we define the label of the root of $d(u)$ as follows: if $u$ is the root of $\Pi$ then we label the root of $d(u)$ by the unique sequent for which all annotations and the control are empty, and for which the underlying sequent is the label of $u$ in $\Pi$. If $u$ is not the root then let $v$ be its parent node. We set the label of the root of $d(u)$ to be equal to that of the leaf $\overline{d(v)}^{-1}(u)$ in $d(v)$.

\paragraph{Step 2: dealing with leaves.}
If $u$ is a leaf labelled with an axiom then $d(u)$ is constructed by applying Expansion to the root label determined by Step 1, so that both control and all annotations are empty in the unique lead of $d(u)$. If $u$ is not a leaf then we skip this step. 

\paragraph{Step 3: register fixpoint unfoldings.}

If $u$ is not the conclusion of an application of the $\nu$-rule or $\mu$-rule then we skip this step. Otherwise, suppose that $u$ is the conclusion of a rule application of the following shape:
\begin{prooftree}
\AxiomC{$\Gamma,\at{i}\nu x. \varphi(x),\at{i}\varphi(\nu x. \varphi(x))$}
\UnaryInfC{$\Gamma,\at{i}\nu x. \varphi(x)$}
\end{prooftree}
We focus on the case of a greatest fixpoint unfolding since the other case is simpler.  
Suppose the root label of $d(u)$ was determined in Step 1 to be $\crl{a}\Theta,\at{i}\nu x. \varphi(x)\an{b}$ where $\underline{\Theta} = \Gamma$. The current stage in the construction of $d(u)$  is then shown below:
\begin{prooftree}
\AxiomC{$(\sfa \vert x)\sfx \vdash \{\at{j}\theta^{\sfa\vert x} \mid \at{j}\theta\an{a} \in \Theta\}, \at{i}\nu x. \varphi(x)^{\sfb\vert x},\at{i}\varphi(\nu x. \varphi(x))^{\sfb\vert x \sfx}$}
\RightLabel{$\mathsf{Rec}(\sfx)$}
\UnaryInfC{$\sfa \vert x \vdash  \{\at{j}\theta^{\sfa\vert x} \mid \at{j}\theta\an{a} \in \Theta\} ,\at{i}\nu x. \varphi(x)^{\sfb\vert x},\at{i}\nu x. \varphi(x)^{\sfb\vert x}$}
\RightLabel{$\mathsf{Exp}$}
\UnaryInfC{$\crl{a} \Theta,\at{i}\nu x. \varphi(x)\an{b}$}
\end{prooftree}
Here, $\mathsf{x}$ is the smallest fresh name for the variable $x$ in the fixed enumeration of the variable names. 

\paragraph{Step 4: apply other rules.}

If step 1 was applied then we skip this step. Otherwise, we make a case distinction as to which rule is applied to $u$. If $u$ is the conclusion of an instance of Weakening, then we apply Weakening to the root label of $d(u)$ to remove the annotated version of each formula removed from the label of $u$, followed by an application of Expansion to remove any superfluous variable names in the control that no longer appear in annotations of any formulas.  If the rule applied was the $\wedge$-rule, $\vee$-rule, $\mathsf{Mod}$, $\mathsf{Eq}$, $\mathsf{Com}$ or $\mathsf{Glob}$, then we apply the corresponding rule instance to the root label of $d(u)$ determined in Step 1, recalling that none of these rules will affect the annotations.  

\paragraph{Step 5: merge traces.}

If possible, repeatedly apply Thinning to each leaf in the proof-tree produced by Steps 1 -- 4 until no further applications of Thinning are possible. 

\paragraph{Step 6: reset.}

If possible, repeatedly apply the Reset rule to each leaf in the proof-tree produced by Step 5 until no further applications are possible. 
\end{defi}

As an example showing how the canonical decoration works, consider the following part of an \sysinf-proof $\Pi$: 
\begin{prooftree}
\AxiomC{$\vdots$}
\noLine
\UnaryInfC{$\at{j}\varphi, \at{j}\Box \varphi, \at{j}\psi$}
\RightLabel{$\vee$}
\UnaryInfC{$ \at{j}(\varphi \vee \Box \varphi), \at{j}\psi$}
\AxiomC{$\vdots$}
\noLine
\UnaryInfC{$\at{k}(\varphi \vee \Box \varphi)$}
\RightLabel{$\mathsf{Mod}$}
\UnaryInfC{$\at{j}\Box(\varphi \vee \Box \varphi), \at{j}\Box \varphi$}
\RightLabel{$\nu x$}
\UnaryInfC{$\at{j}\varphi, \at{j}\Box \varphi$}
\RightLabel{$\vee$}
\UnaryInfC{$\at{j}(\varphi \vee \Box \varphi), \at{j}\varphi$}
\RightLabel{$\wedge$}
\BinaryInfC{$\at{j}(\varphi \vee \Box \varphi), \at{j}(\psi \wedge \varphi)$}
\RightLabel{$\mathsf{Mod}$}
\UnaryInfC{$\at{i}\Box(\varphi \vee \Box \varphi), \at{i}\Diamond(\psi \wedge \varphi)$}
\RightLabel{$\nu y$}
\UnaryInfC{$\at{i}\Box(\varphi \vee \Box \varphi),  \at{i}\psi$}
\RightLabel{$\nu x$}
\UnaryInfC{$\at{i}\varphi, \at{i}\psi$}
\end{prooftree}
Here, $\varphi$ is the formula $\nu x.\Box(x \vee \Box x$ and $\psi$ is the formula $\nu y.\Diamond(y \wedge\varphi)$. We assume the order $<$ is chosen so that $x < y$. For readability, we have suppressed applications of Weakening in this proof, which only serve to remove principal formulas of rule applications in the proof.  
The corresponding part of the infinite \sysann-proof $\Pi[d]$ read off from the canonical decoration $d$ is shown below. 
\begin{prooftree}
\AxiomC{$\vdots$}
\noLine
\UnaryInfC{$\sfx_0\sfy_0 \vdash \at{j}\varphi\ans{x}{0}, \at{j}\Box \varphi\ans{x}{0}, \at{j}\psi\ans{y}{0}$}
\RightLabel{$\vee$}
\UnaryInfC{$\sfx_0\sfy_0 \vdash \at{j}(\varphi \vee \Box \varphi)\ans{x}{0}, \at{j}\psi\ans{y}{0}$}
\AxiomC{$\vdots$}
\noLine
\UnaryInfC{$\sfx_0 \vdash \at{k}(\varphi \vee \Box \varphi)^{\sfx_0}$}
\RightLabel{$\mathsf{Reset}(\sfx_0)$}
\UnaryInfC{$\sfx_0\sfx_1 \vdash\at{k}(\varphi \vee \Box \varphi)^{\sfx_0\sfx_1}$}
\RightLabel{$\mathsf{Mod}$}
\UnaryInfC{$\sfx_0\sfx_1 \vdash \at{j}\Box(\varphi \vee \Box \varphi)^{\sfx_0\sfx_1}, \at{j}\Box \varphi\ans{x}{0}$}
\RightLabel{$\mathsf{Rec}(\sfx_1)$}
\UnaryInfC{$\sfx_0 \vdash \at{j}\varphi\ans{x}{0}, \at{j}\Box \varphi\ans{x}{0}$}
\RightLabel{$\mathsf{Thinning}$}
\UnaryInfC{$\sfx_0\sfy_0 \vdash \at{j}\varphi\ans{x}{0}, \at{j}\Box \varphi\ans{x}{0}, \at{j}\varphi\ans{y}{0}$}
\RightLabel{$\vee$}
\UnaryInfC{$\sfx_0\sfy_0 \vdash \at{j}(\varphi \vee \Box \varphi)\ans{x}{0}, \at{j}\varphi\ans{y}{0}$}
\RightLabel{$\wedge$}
\BinaryInfC{$\sfx_0\sfy_0 \vdash \at{j}(\varphi \vee \Box \varphi)\ans{x}{0}, \at{j}(\psi \wedge \varphi)\ans{y}{0}$}
\RightLabel{$\mathsf{Mod}$}
\UnaryInfC{$\sfx_0\sfy_0 \vdash \at{i}\Box(\varphi \vee \Box \varphi)\ans{x}{0}, \at{i}\Diamond(\psi \wedge \varphi)\ans{y}{0}$}
\RightLabel{$\mathsf{Rec}(\sfy_0)$}
\UnaryInfC{$\sfx_0\vdash \at{i}\Box(\varphi \vee \Box \varphi)\ans{x}{0},  \at{i}\psi\an{\mpt}$}
\RightLabel{$\mathsf{Rec}(\sfx_0)$}
\UnaryInfC{$\crm\at{i}\varphi\an{\mpt}, \at{i}\psi\an{\mpt}$}
\end{prooftree}
Again, some applications of Weakening have been hidden.

It remains to be checked that the construction really does maintain the invariant that each formula in the label of a leaf of $d(u)$ for some node $u$ appears with at most one annotation. Since the canonical decoration has been set up so that Thinning will be applied whenever possible, we only need to show that Thinning must apply to any sequent in which some formula appears with more than one annotation. We need to check that if two annotations $\at{i}\varphi\an{a},\at{i}\varphi\an{b}$ of the same formula $\at{i}\varphi$ appear in some sequent with control $\sfc$, then $\sfa <_\sfc \sfb$ or $\sfb <_\sfc \sfa$ (but not both). The proof is the same as for Lemma 4.31 in Jungteerapanich's thesis \cite{jung:tabl10}. We therefore omit the details and refer the interested reader to that publication. 

Since the Reset rule is applied whenever possible (assuming that Thinning does not apply), one can also show that the length of annotations and control appearing in a sequent is bounded. The proof is the same as for Lemma 4.39 in \cite{jung:tabl10}. Again we refer the reader to that publication for details. Since only finitely many fixpoint variables can appear in a proof, and since names introduced by the $\mathsf{Rec}$-rule are always chosen canonically to be the fresh name with lowest index, we get the following result.   

\begin{prop}
\label{p:thin-works}
Let $\Pi$ be a frugal $\sysinf$-proof and let $d$ be its canonical decoration. Then $\Pi[d]$ is also frugal, i.e. only finitely many annotated sequents appear in $\Pi[d]$.
\end{prop}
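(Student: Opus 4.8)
The plan is to bound, separately, the three ingredients that make up an annotated sequent — its underlying plain sequent, the annotation word attached to each formula, and the control word — and then to multiply these finite bounds together.

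First I would dispose of the underlying plain sequents. In the canonical decoration the annotation-only rules $\mathsf{Exp}$, $\mathsf{Thinning}$ and $\mathsf{Reset}$ leave the underlying plain sequent unchanged, while every other rule used (including $\mathsf{Rec}$, which is the \sysann-counterpart of a $\nu$-unfolding) mirrors exactly the \sysinf-rule applied at the corresponding node of $\Pi$. Consequently every underlying plain sequent occurring in $\Pi[d]$ coincides with a plain sequent occurring in $\Pi$ — namely the label of some node or one of its premises. Since $\Pi$ is frugal, only finitely many such plain sequents occur, and hence only finitely many formulas and only finitely many nominals occur throughout $\Pi[d]$.

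Next I would bound the alphabet of variable names. By the discussion preceding the statement — the analogue of Lemma 4.39 of \cite{jung:tabl10}, which holds precisely because the canonical decoration applies $\mathsf{Reset}$ whenever possible — there is a constant $M$ bounding the length of every control and every annotation word appearing in $\Pi[d]$. In particular at most $M$ names of any single variable are simultaneously present in a control. Since the $\mathsf{Rec}$-rule always introduces the fresh name of lowest index, and at any node at most $M$ indices of a given variable are blocked, the lowest free index is always at most $M$; hence no name with index exceeding $M$ is ever created. As only finitely many fixpoint variables occur in the proof, the set $\mathcal{A}$ of variable names occurring anywhere in $\Pi[d]$ is finite.

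Finally I would assemble the counts. Each control and each annotation is a word over the finite alphabet $\mathcal{A}$ of length at most $M$, so there are only finitely many possible controls and finitely many possible annotation words. By the uniqueness invariant maintained by the canonical decoration — the analogue of Lemma 4.31 of \cite{jung:tabl10}, guaranteed because $\mathsf{Thinning}$ is applied whenever possible — each formula in a sequent carries at most one annotation. Thus an annotated sequent is completely determined by its underlying plain sequent (finitely many choices), together with one of finitely many controls and, for each of its finitely many formulas, one of finitely many annotation words. Multiplying these finite bounds shows that only finitely many annotated sequents occur in $\Pi[d]$, i.e. $\Pi[d]$ is frugal. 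The genuine obstacle in all of this is the bounded-length claim for controls and annotations (the Lemma 4.39 analogue); everything else is bookkeeping, and it is exactly the strategic use of $\mathsf{Reset}$ that prevents annotation words — and with them the index of freshly chosen names — from growing without bound as fixpoints are unfolded infinitely often.
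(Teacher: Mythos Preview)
Your proposal is correct and follows essentially the same route as the paper: the paper also derives the result from the bounded-length claim for controls and annotations (the Lemma~4.39 analogue in \cite{jung:tabl10}), combined with the canonical lowest-index choice of fresh names and the finiteness of the set of fixpoint variables, with the Lemma~4.31 analogue ensuring unique annotations per formula. You have simply made explicit the bookkeeping steps (bounding the underlying plain sequents via frugality of $\Pi$, bounding the maximal name index by $M$, and the final counting argument) that the paper leaves to the reader.
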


The following proposition is a ``soundness'' result for the canonical decoration, showing that it successfully detects good traces on infinite branches. 

\begin{prop}
\label{p:can-good}
Let $\Pi$ be a valid, frugal \sysinf-proof and $d$ its canonical decoration. Then every infinite branch of $\Pi[d]$ has a good invariant. 
\end{prop}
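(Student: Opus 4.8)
The plan is to reduce the claim to the validity of $\Pi$, by tracking how a good trace of $\Pi$ is reflected, through the canonical decoration, as a good invariant of $\Pi[d]$. First I would set up the projection. Each decoration $d(u)$ is a \emph{finite} \sysann-proof whose leaves correspond bijectively to the premises of $u$ in $\Pi$, while a leaf of $\Pi$ (an axiom) is decorated by a tree whose single leaf corresponds to no premise and so is a genuine leaf of $\Pi[d]$. Hence an infinite branch $\beta$ of $\Pi[d]$ cannot remain inside finitely many decorations: it must pass through the roots $\widehat{d}(u)$ for the nodes $u$ of a unique infinite branch $\beta'$ of $\Pi$. Since $\Pi$ is valid, $\beta'$ carries a good trace $\vec{t} = (u_j, \ats{i}{j}\psi_j)_{j<\omega}$, whose $<_\rho$-minimal infinitely-unfolded variable $x$ is, by goodness, a $\nu$-variable.

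Next I would lift $\vec{t}$ to the annotations along $\beta$. Using the invariant maintained in the construction (each formula carries at most one annotation in any sequent of a decoration), each trace formula $\ats{i}{j}\psi_j$ occurs along $\beta$ with a well-defined annotation $\sfb_j$, and I would record how $\sfb_j$ evolves as the rules inserted by $d$ are traversed: a $\nu$-unfolding of a variable $y$ appends, via $\mathsf{Rec}(\sfy)$, a fresh name for $y$ (placed to the right, since annotations are $<$-nondecreasing); $\mu$-unfoldings and the propositional, modal and hybrid steps leave the name content essentially unchanged; $\mathsf{Exp}$ only deletes names; $\mathsf{Reset}(\sfx)$ truncates every annotation extending some $\sfb\sfx$ back to $\sfb\sfx$; and $\mathsf{Thinning}$ may relocate the trace from an occurrence with annotation $\sfc$ to a parallel occurrence with annotation $\sfb$ where $\sfb <_{\sfa'} \sfc$, i.e.\ to a $<$-smaller annotation. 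The purpose of recording this is that names for variables ranking strictly above $x$ (those $y < x$) can only populate a bounded left prefix of $\sfb_j$, since $x$ is $<_\rho$-minimal among infinitely-unfolded variables, so every such $y$ is unfolded only finitely often on $\vec{t}$.

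The core of the argument, and the main obstacle, is to extract from this a single name $\sfx$ for $x$ that is an invariant of $\beta$ and is reset infinitely often. Here I would argue as in the Safra/breakpoint analysis underlying Jungteerapanich's soundness proof \cite{jung:tabl10}. After a stage $N$ beyond which no variable $y < x$ is unfolded on $\vec{t}$, the prefix $q$ of $\sfb_j$ consisting of names of variables $\le x$ that precede the $x$-block settles, so a fixed name $\sfx$ for $x$ occupies the first $x$-position of $\sfb_j$ for all large $j$ and hence lies in the control of every such node; this yields the invariant. For the reset part I would use frugality (Proposition \ref{p:thin-works}): since $x$ is unfolded infinitely often, each unfolding appends a fresh name for $x$ to the right of $\sfx$, and since only finitely many annotated sequents, hence only bounded-length annotations, occur in $\Pi[d]$, these accumulating names must be discarded again and again; as $\mathsf{Reset}(\sfx)$ is the only rule truncating names sitting to the right of $\sfx$, and the decoration applies $\mathsf{Reset}$ whenever possible, $\sfx$ must be reset infinitely often, making it a good invariant.

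The delicate points, which I expect to absorb most of the work, are verifying that the $<_{\sfa'}$-minimizing jumps made by $\mathsf{Thinning}$ never dislodge $\sfx$ from its position (the surviving copy always carries the $<$-smaller annotation, so the older name is retained), and that no reset of a higher-ranking name truncates the stable prefix $q$ after $N$. Both follow the pattern of Jungteerapanich's Lemmas 4.31 and 4.39 \cite{jung:tabl10}, already invoked above for the single-annotation invariant and the boundedness of controls, and I would defer to that development for the routine bookkeeping rather than reprove it here.
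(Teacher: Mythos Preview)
Your overall strategy matches the paper's: project the branch of $\Pi[d]$ to an infinite branch of $\Pi$, pick up a good trace with highest-ranking infinitely-unfolded $\nu$-variable $x$, lift it to annotated formulas along $\beta$, and extract a good invariant from the evolution of these annotations. Where you diverge is in the extraction step, and there is a genuine gap.

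Your argument that $\sfx$ is reset infinitely often rests on the claim that ``$\mathsf{Reset}(\sfx)$ is the only rule truncating names sitting to the right of $\sfx$''. This is false. Thinning can do it: if the trace carries annotation $\sfc = \sfx\sfx_2\sfx_3$ and another occurrence of the same formula has annotation $\sfb = \sfx\sfx_1$ with $\sfx_1$ left of $\sfx_2$ in the control, then the second clause of $<_\sfa$ gives $\sfb <_\sfa \sfc$, Thinning keeps $\sfb$, and the trace jumps from three $x$-names to two. Likewise $\mathsf{Reset}(\sfx')$ for an $x$-name $\sfx'$ strictly right of $\sfx$ truncates without resetting $\sfx$ itself, and $\mathsf{Reset}(\sfx)$ need not even be applicable at that moment (it requires \emph{every} formula carrying $\sfx$ to have something after $\sfx$). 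So boundedness of annotations does not force $\mathsf{Reset}(\sfx)$ specifically to fire; the fresh $x$-names could be absorbed indefinitely by Thinning jumps and resets of other names. The delicate Thinning issue you flag for the \emph{stability} of $\sfx$ is the same obstacle that undermines your \emph{reset} argument, and it is not covered by Jungteerapanich's Lemmas 4.31 and 4.39 (those give the single-annotation invariant and the length bound, not what you need here).

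The paper closes this gap with a different device. Rather than fixing a candidate name up front, it considers the set $S$ of \emph{$\tau$-stable} tuples (those $\sfa$ with $\sfa \sqsubseteq \sfb$ for every annotation $\sfb$ on a tail of the lifted trace), shows $S$ is non-empty and finite, picks a $\sqsubseteq$-maximal $\sfa \in S$, and argues by contradiction that its last name is reset infinitely often. If not, then after each later $x$-unfolding a fresh $\sfx$ is appended; whenever this $\sfx$ is subsequently removed, a case analysis on Reset and on both clauses of $<_\sfa$ in Thinning shows that the new annotation contains a name strictly to the \emph{left} of $\sfx$ in the control yet still to the right of every name in $\sfa$. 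Since fresh names are always appended on the right of the control, such leftward moves can occur only finitely often, so eventually one obtains a stable proper extension of $\sfa$, contradicting maximality. This ``move-left-in-the-control'' counting is precisely the missing ingredient in your boundedness argument.
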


\begin{proof}
Suppose that $\tau$ is a good infinite trace in an infinite branch $\pi$.  Let $x$ be the highest ranking ($\nu$-)variable that is unfolded infinitely many times on $\tau$. Say that a word $\sfa$ over the set of all variable names is $\tau$-\emph{stable} if there is a final segment of $\tau$ in which every formula has an annotation $\sfb$ with $\sfa \sqsubseteq \sfb$. Let $S$ be the set of $\tau$-stable name tuples; clearly $S$ is non-empty since $\varepsilon \in S$, and it is finite since there are only finitely many annotations in $\Pi[d]$. Pick a maximal element of $S$, i.e. a $\tau$-stable tuple $\sfa$ such that no  $\sfb \in S$ exists with $\sfa \sqsubset \sfb$. We want to show that $\sfa$ is a non-empty tuple and that its last variable name is reset infinitely many times on $\pi$. Since annotations are always contained in the control, it follows that the last name  on $\sfa$ is a good invariant of $\pi$. 

We first check that the tuple $\sfa$ is non-empty - this amounts to showing that at least one variable name belongs to every annotation in some final segment of $\tau$. Clearly the annotation in $\tau$ will be non-empty from some point onwards, and by inspection of the update procedure we see that its left-most element can only change by being replaced by a name to the left of it in the control. So eventually the left-most element of all annotations will stay the same, since such moves to the left can only happen finitely many times.

Given that $\sfa$ is non-empty, we now prove the required result by contraposition. Suppose the last variable name in $\sfa$ is reset at most finitely many times on $\pi$. We show  $\sfa$ is not maximal in $S$.  
Let $\pi'$ be a final segment of $\pi$ such that:
\begin{itemize}
\item  $\sfa$ is contained in every annotation on the $\pi'$-part of $\tau$,
\item the last variable name of $\sfa$ is \emph{never} reset in $\pi'$.
\end{itemize}
By inspection of the update procedure for annotations, it is easy to see that all annotations on the $\pi'$-part of $\tau$ have the same initial segment up to and including the last name in $\sfa$. Since $\sfa$ is contained in every annotation on the $\pi'$-part of $\tau$, it follows that no higher ranking variable than the one named by the names in $\sfa$ is ever unfolded in the $\pi'$-part of $\tau$.

By the construction of the decoration $d$, since $x$ is unfolded infinitely many times on $\tau$,  the $\pi'$-part of the trace $\tau$ will eventually reach a step of the form $(u,\at{i}\nu x.\varphi(x))\an{b} \cdot (v,\at{i}\varphi(\nu x.\varphi(x))\an{bx})$ where $\sfx$ is a fresh variable name for $x$ not appearing in $\sfb$. Since $\sfa \sqsubseteq \sfb$, we have $\sfa \sfx \sqsubseteq \sfb \sfx$. If the variable name $\sfx$ is never removed later in the trace $\tau$ then $\sfa \sfx$ is a proper extension of $\sfa$ that is $\tau$-stable. So suppose it is removed at some point.  Since $\sfx$ is a name for $x$, it can never be removed by an unfolding of a higher-ranking variable in the $\pi'$-part of $\tau$. So this removal can only happen in two ways: an application of the Reset rule or a trace merge (an application of Thinning). We show that in each case, we can find a name $\sfz$ in the new annotation that is to the left of $\sfx$ in the control, but to the right of all names in $\sfa$.  Since new variable names are always appended on the right side of the control, such ``moves to the left'' in the control can only happen finitely many times. So eventually we find a stable proper extension of $\sfa$. 

If $\sfx$ was removed due to an application of Reset, then the desired conclusion follows immediately since the last variable on $\sfa$ is never reset on the $\pi'$-part of $\tau$, and since $\sfa$ is never removed from the annotations.  The variable name being reset must therefore be strictly between $\sfx$ and $\sfa$ in the annotation, and hence to the left of $\sfx$ in the control. 

In case of an application of Thinning, some annotation $\sfd$ containing $\sfx$ is replaced on the trace $\tau$ by another annotation $\sfd'$ that does not contain $\sfx$, and such that $\sfd' <_\sfc \sfd$ where $\sfc$ is the control of the conclusion of the rule application. There are two cases to consider here: either there is a $\nu$-variable $y$ for which  $\sfd\vert y$ is a proper prefix of $\sfd'\vert y$, or there are  variable names $\sfy \neq \sfy'$ for the same variable $y$ and a name tuple $\sfd''$ such that $\sfd''\sfy'$ is a prefix of $\sfd'$, $\sfd''\sfy$ is a prefix of $\sfd$, and $\sfy'$ is left of $\sfy$ in the control $\sfc$. 

In the first case,  since the annotations $\sfd$ and $\sfd'$ are the same up to and including the last variable name in $\sfa$, and $\sfd\vert y$ is a \emph{proper} prefix of $\sfd'\vert y$, all names of the variable $y$ in $\sfd'$ must be to the right of all names in $\sfa$. So $y$ is either equal to $x$ or lower-ranking. In both cases, the name $\sfx$ is part of the prefix $\sfd\vert y$ and hence stays in $\sfd'$.

In the second case, note that $\sfy'$ cannot be in the common part of the annotations up to and including the last variable name in  $\sfa$ since $\sfy \neq \sfy'$. So $\sfy'$ is to the right of all names in  $\sfa$ in the annotation  $\sfd'$. Now, consider two possible places where $\sfx$ might appear in $\sfc$: if it is to the left of $\sfy'$ then it must be in the common prefix $\sfd''$ and thus it is still in the annotation $\sfd'$. It cannot be equal to $\sfy'$ since we assumed it does not belong to $\sfd'$. If it is to the right of $\sfy'$ in the annotation $\sfd'$, then $\sfy'$ is to the left of $\sfx$ in the control. So we have found a variable name that does not belong to $\sfa$ by moving to the left in the control, as required. Hence, the proof is finished.  
\end{proof}

\begin{prop}
\label{p:regularization}
Given any infinite \sysann-proof in which at most finitely many sequents appear, every leaf is an axiom and each infinite branch has a good invariant, there exists an infinite but regular \sysann-proof with the same root formula and satisfying the same criteria.  
\end{prop}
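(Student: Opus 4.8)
The plan is to encode the class of admissible proofs as an MSO-definable tree language over a finite ranked alphabet, and then invoke Rabin's Basis Theorem to extract a regular member. First I would fix the finite set $\Sigma_0$ of annotated sequents occurring in the given infinite \sysann-proof $\Pi$. Since only finitely many sequents appear, only finitely many controls occur, and hence only finitely many variable names are used anywhere in $\Pi$; call this finite set of names $\mathsf{V}$. From $\Sigma_0$ I build a finite ranked alphabet $\Sigma$ whose letters record an annotated sequent from $\Sigma_0$ together with the rule applied at the node and its principal formula (or the mark ``axiom leaf''), and whose arity is the number of premises of that rule application (arity $0$ for axiom leaves). Every \sysann-proof all of whose sequents lie in $\Sigma_0$ is then a $\Sigma$-labelled tree in the sense of the preliminaries, and conversely every well-formed $\Sigma$-labelled tree of this shape is such a proof.

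Next I would show that the set $L$ of $\Sigma$-labelled trees $T$ satisfying the following four conditions is MSO-definable: (i) the root of $T$ carries the same label as the root of $\Pi$; (ii) $T$ is a well-formed \sysann-proof, i.e.\ for every node the labels of its children are exactly the premises of the rule application recorded at that node; (iii) every leaf is labelled by an axiom; (iv) every infinite branch has a good invariant. Conditions (i)--(iii) are local: (i) is a property of the root label, while (ii) and (iii) are first-order conditions relating the label of each node to those of its (finitely many, determined) children, so all three are immediately expressible in MSO over $\Sigma$-labelled trees.

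The only delicate point, and the step I expect to be the main obstacle, is expressing condition (iv), since it quantifies over all infinite branches. Here I would exploit the finiteness of $\mathsf{V}$. For each name $\sfx \in \mathsf{V}$ the finite label alphabet determines two unary node-predicates: $C_\sfx(v)$, ``$\sfx$ occurs in the control of $v$'', and $R_\sfx(v)$, ``$v$ is the conclusion of a $\mathsf{Reset}(\sfx)$ application''; both are read off directly from the label. For a set variable $P$ ranging over infinite branches, the assertion that $\sfx$ is a good invariant of $P$ is the conjunction of ``$\sfx$ is eventually always in the control'', namely $\exists u \in P\,\forall v \in P\,(u \preceq v \rightarrow C_\sfx(v))$, and ``$\sfx$ is reset infinitely often'', namely $\forall u \in P\,\exists v \in P\,(u \preceq v \wedge R_\sfx(v))$, where $\preceq$ is the tree-ancestor order. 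Since $\mathsf{V}$ is finite, ``$P$ has a good invariant'' is the finite disjunction of these conjunctions over $\sfx \in \mathsf{V}$, and condition (iv) becomes $\forall P\,\big(\text{$P$ is an infinite branch} \rightarrow \bigvee_{\sfx \in \mathsf{V}}(\ldots)\big)$. Both the predicate ``$P$ is an infinite branch'' and the displayed path conditions are standard MSO formulas, so $L$ is MSO-definable.

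Finally, $L$ is non-empty, since the given proof $\Pi$ is by hypothesis a member of $L$. By Rabin's Basis Theorem, $L$ therefore contains a regular tree $T_0$. Because membership in $L$ encodes exactly the requirements of the statement, $T_0$ is an infinite regular \sysann-proof with the same root formula as $\Pi$, all of whose leaves are axioms and each of whose infinite branches has a good invariant, which is what we had to produce.
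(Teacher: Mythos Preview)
Your proposal is correct and follows exactly the approach the paper indicates: the paper's own proof is a one-line sketch (``Easy consequence of Rabin's Basis Theorem, by noting that the required conditions on a proof tree with a fixed set of annotated sequents appearing as labels are MSO-definable''), and you have simply spelled out the details of that sketch, including the key observation that finiteness of the sequent set bounds the set of variable names and hence makes the good-invariant condition a finite disjunction expressible in MSO.
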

\begin{proof}
Easy consequence of Rabin's Basis Theorem, by noting that the required conditions on a proof tree with a fixed set of annotated sequents appearing as labels are MSO-definable. 
\end{proof}

We can now prove the main result.

\begin{theo}[Completeness]
Any semantically valid formula has a valid \sysann-proof. 
\end{theo}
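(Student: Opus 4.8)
The plan is to assemble the results of this subsection into a chain that reduces completeness of \sysann{} to the completeness of \sysinf{} already proved in Theorem \ref{t:infcompl}. Given a semantically valid formula $\rho$, I would first invoke the implication (a) $\Rightarrow$ (c) of Theorem \ref{t:infcompl} to obtain a valid and frugal \sysinf-proof $\Pi$ of $\rho$. The strategy is then to decorate $\Pi$ with the canonical decoration $d$ so as to produce the infinite \sysann-proof $\Pi[d]$, regularize it, and finally fold the resulting regular proof into a finite circular \sysann-proof carrying a good back-edge map.

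The middle steps consist of checking that $\Pi[d]$ satisfies the hypotheses of the regularization lemma. By Proposition \ref{p:thin-works}, $\Pi[d]$ is frugal, so only finitely many annotated sequents appear in it. Since $\Pi$ is valid, each of its leaves is labelled by an axiom; by Step 2 of the canonical decoration the corresponding leaf of $\Pi[d]$ has empty control and empty annotations and is therefore an \sysann-axiom, so every leaf of $\Pi[d]$ is an axiom. Finally, Proposition \ref{p:can-good} tells us that every infinite branch of $\Pi[d]$ has a good invariant. With these three facts in hand I would apply Proposition \ref{p:regularization} to extract a \emph{regular} infinite \sysann-proof $\Pi'$ with root $\crm \at{r}\rho\an{\mpt}$, still enjoying finiteness of its sequent set, axiom leaves, and a good invariant on every infinite branch.

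The final step is to convert the regular proof $\Pi'$ into a genuine valid \sysann-proof. Since $\Pi'$ is regular, Proposition \ref{p:folk} presents it as $\mathsf{unf}(T,f)$ for some finite tree with back-edges $(T,f)$, where $T$ is a finite \sysann-derivation and $f$ sends certain leaves to proper ancestors. I would then observe that, because the unfolding is a well-formed proof, the label at each $l \in \mathsf{dom}(f)$ must agree with the label at $f(l)$, so $f$ is a back-edge map in the sense of the system, and that we may assume back-edges issue only from non-axiom leaves. Applying Proposition \ref{p:f-unfolding} to $(T,f)$, the hypothesis that every infinite path of $\mathsf{unf}(T,f) = \Pi'$ has a good invariant yields precisely that $f$ is \emph{good}. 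Hence $T$ equipped with $f$ is by definition a valid \sysann-proof of $\rho$, completing the argument.

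Essentially all the difficulty has been absorbed into the supporting propositions: the soundness of the canonical decoration (Proposition \ref{p:can-good}), its frugality (Proposition \ref{p:thin-works}), and the Rabin-style regularization (Proposition \ref{p:regularization}). The step I expect to require the most care is the interface between the regular infinite proof and the finite circular one. I must make sure the finite tree with back-edges delivered by Proposition \ref{p:folk} is literally a \sysann-proof tree --- with label-preserving back-edges and with all genuine (non-back-edge) leaves being axioms --- before I am entitled to apply Proposition \ref{p:f-unfolding}, and that the property ``good invariant on every infinite path of the unfolding'' transfers correctly to ``good back-edge map'' via that proposition. Once this bookkeeping is settled, the theorem follows immediately.
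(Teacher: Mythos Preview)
Your proposal is correct and follows essentially the same route as the paper's proof: obtain a frugal valid \sysinf-proof via Theorem~\ref{t:infcompl}, pass to $\Pi[d]$ using Propositions~\ref{p:thin-works} and~\ref{p:can-good}, regularize via Proposition~\ref{p:regularization}, and then fold using Propositions~\ref{p:folk} and~\ref{p:f-unfolding}. Your additional remarks about verifying that the finite tree with back-edges is a genuine \sysann-proof with label-preserving back-edges and axiom leaves are exactly the bookkeeping the paper leaves implicit.
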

\begin{proof}
By Theorem \ref{t:infcompl} any valid formula has a valid and frugal \sysinf-proof $\Pi$, and by Proposition \ref{p:can-good} every infinite branch of $\Pi[d]$ has a good invariant, and $\Pi[d]$ has only finitely many sequents as labels by Proposition \ref{p:thin-works}. By Proposition \ref{p:regularization} we may assume that $\Pi[d]$ is regular. By Proposition \ref{p:folk}, there is a finite \sysann-proof $(\Pi',f)$ the unfolding of which is isomorphic with $\Pi[d]$. By Proposition \ref{p:f-unfolding}, $\Pi'$ is a valid \sysann-proof. 
\end{proof}

\section{Concluding remarks}

We conclude with some directions for future work. First of all, with the Stirling-style proof system in place for the hybrid $\mu$-calculus, we should be able to prove cut-free completeness of a sequent system for the hybrid $\mu$-calculus by following the same method of translation between proof systems as in \cite{afsh:cut17}. The proof should not involve any substantial novelties, although the details remain to be checked. 

But perhaps more interesting, we hope that the methods developed here can be extended to other extended $\mu$-calculi. A first task in this direction is to consider converse modalities, and obtain a Stirling-style circular system for the hybrid $\mu$-calculus including converse modalities. In Vardi's automata-theoretic decision procedure for the two-way $\mu$-calculus, the key component is a finite data structure for encoding generalized traces that can go upwards or downwards along branches in a tableau. This extra component is then removed through a projection operation on automata recognizing valid tableaux. It would be interesting to investigate this construction from a proof-theoretic perspective. 

Further down the line, we hope that proof theory for even more expressive systems could be developed, like guarded fixpoint logic. The guiding intuition which is to be tested is the following: if a fixpoint logic has an automata-theoretic decision procedure for satisfiability in which it the logic is treated ``as if it had the tree-model property'', then a sound and complete circular proof system can be given, and ultimately a complete standard sequent calculus.  The present work can be seen as corroborating this hypothesis for the case of the hybrid $\mu$-calculus. 
 
\paragraph{Acknowledgement}
This research was supported by the Swedish Research Council grant 2015-01774.

{
\bibliographystyle{plain}
\bibliography{references}
}

\end{document}